\definecolor{darkblue}{rgb}{0,0,0.5}
\newtheorem{theorem}{Theorem}
\newtheorem{conjecture}{Conjecture}
\newtheorem{corollary}{Corollary}
\newtheorem{lemma}{Lemma}
\newtheorem{proposition}{Proposition}
\newenvironment{proof}[1][Proof]{\noindent\textbf{#1.} }{\ \rule{0.5em}{0.5em}}
\def\be{\begin{equation}}
\def\ee{\end{equation}}
\def\ba{\begin{eqnarray}}
\def\ea{\end{eqnarray}}
\def\bal{\begin{equation}\begin{aligned}}
\def\eal{\end{aligned}\end{equation}}
\def\bp{\begin{pmatrix}}
\def\ep{\end{pmatrix}}
\newcommand{\calA}{{\cal A}}
\newcommand{\calB}{{\cal B}}
\newcommand{\calC}{{\cal C}}
\newcommand{\calE}{{\cal E}}
\newcommand{\calI}{{\cal I}}
\newcommand{\calL}{{\cal L}}
\newcommand{\calN}{{\cal N}}
\newcommand{\calH}{{\cal H}}
\newcommand{\calR}{{\cal R}}
\newcommand{\calU}{{\cal U}}
\newcommand{\calV}{{\cal V}}
\newcommand{\1}{^{(1)}}
\newcommand{\state}[1]{\ketbra{#1}{#1}}
\begin{document}
\title{Computable limits of optical multiple-access communications}
\author{Haowei Shi}
\affiliation{
James C. Wyant College of Optical Sciences, University of Arizona, Tucson, Arizona 85721, USA
}

\author{Quntao Zhuang}
\email{zhuangquntao@email.arizona.edu}
\affiliation{
Department of Electrical and Computer Engineering, University of Arizona, Tucson, Arizona 85721, USA
}
\affiliation{
James C. Wyant College of Optical Sciences, University of Arizona, Tucson, Arizona 85721, USA
}

\begin{abstract}
Communication rates over quantum channels can be boosted by entanglement, via superadditivity phenomena or entanglement assistance. Superadditivity refers to the capacity improvement from entangling inputs across multiple channel uses. Nevertheless, when unlimited entanglement assistance is available, the entanglement between channel uses becomes unnecessary---the entanglement-assisted (EA) capacity of a single-sender and single-receiver channel is additive. We generalize the additivity of EA capacity to general multiple-access channels (MACs) for the total communication rate. Furthermore, for optical communication modelled as phase-insensitive bosonic Gaussian MACs, we prove that the optimal total rate is achieved by Gaussian entanglement and therefore can be efficiently evaluated. To benchmark entanglement's advantage, we propose computable outer bounds for the capacity region without entanglement assistance. Finally, we formulate an EA version of minimum entropy conjecture, which leads to the additivity of the capacity region of phase-insensitive bosonic Gaussian MACs if it is true. The computable limits confirm entanglement's boosts in optical multiple-access communications. 

\end{abstract}


\maketitle

\section{Introduction}

Modern communication systems often transmit information via optical channels, such as fibers or free-space links. With the development of quantum source generation and detection, quantum effects, such as the uncertainty principle, squeezing and entanglement, become relevant in characterizing the capacity of optical channels. For example, entanglement can sometimes lead to the superadditivity phenomena~\cite{hastings2009superadditivity,smith2008quantum,czekaj2009purely,czekaj2011,zhu2018superadditivity,zhu2017,zhu2018superadditivity}, where the communication capacity is increased due to entanglement between inputs among multiple channel uses. While an increased capacity is beneficial in practice, the evaluation of channel capacities becomes challenging as it requires an optimization of the joint input over an arbitrary number of channel uses. For this reason, it took great efforts~\cite{GiovannettiV2014} to solve the classical capacity of optical communication since first conjectured~\cite{giovannetti2004minimum}.

Entanglement can also be pre-shared as assistance to boost the communication rates, as theoretically proposed~\cite{bennett2002entanglement,shi2020practical} and experimentally demonstrated~\cite{hao2021entanglement}. Moreover, unlimited entanglement assistance (EA) in each channel use rules out the need to entangle inputs among different channel uses in the capacity evaluation of a single-sender and single-receiver channel---their entanglement-assisted (EA)~\footnote{We use `EA' for both `entanglement assistance' and `entanglement-assisted'.} classical capacity is additive, avoiding the superadditivity conundrum~\cite{adami1997neumann}. Utilizing Gaussian extremality~\cite{Holevo2001,wolf2006} upon the additivity, for optical communication links modeled as single-mode bosonic Gaussian channels (BGCs), one can analytically solve the EA classical capacity and prove rigorous advantages over the unassisted case.

The development of an optical network further complicates the story, as multiple senders and receivers in a network can communicate simultaneously, for example over broadcast channels and multiple-access channels (MACs). In this case, the communication limit is characterized by a trade-off capacity region among multiple users. The classical capacity region of a general broadcast channel is still an open problem with or without EA~\cite{cover1972broadcast,dupuis2010father}, except for special cases~\cite{cover1972broadcast,gallager1974capacity,yard2011quantum,savov2015classical}. 
For bosonic optical broadcast channels, although the capacity formula is known in the pure-loss case~\cite{yard2011quantum,savov2015classical}, the classical capacity region is still subject to the multi-mode `entropy photon number inequality' conjecture~\cite{de2016gaussian,de2017gaussian,de2018gaussian,de2019new}; for phase-insensitive bosonic Gaussian MACs (BGMACs) that model optical networks, the classical capacity formula with and without EA are known~\cite{cover1999elements,winter2001capacity,yard2008capacity,hsieh2008entanglement,shi2021entanglement}, whereas the capacity evaluation remains an open question~\cite{yen2005multiple,shi2021entanglement}. Indeed, unlike the single-sender and single-receiver case, the problem of additivity is complicated by the trade-off between the different senders.

In this paper, we provide computable limits for optical multiple-access communication over BGMACs with and without EA. For the unassisted case, we obtain general applicable outer bounds for the capacity region. For the EA case, despite the entire capacity region being still elusive, we show that the total communication rate is maximized by Gaussian entanglement and can therefore be efficiently evaluated. We prove so via extending the additivity of total rate known for the two-sender case~\cite{hsieh2008entanglement} to the general case. While the sum rate provides an outer bound of the EA capacity region,
the Gaussian optimality in total rate inspires us to examine the Gaussian-state capacity region of a class of interference BGMACs, which is practically relevant and includes all BGMACs considered in the literature~\cite{yen2005multiple,xu2013sequential,wilde2012explicitMAC,shi2021entanglement}. With two or three senders, we numerically find that two-mode squeezed vacuum (TMSV) states to be the optimal Gaussian entanglement.
Comparing with the rate limits of the unassisted case, we identify a large advantage in the noisy and lossy limit. We also extend the above results to interference BGMACs with memory effects~\cite{kretschmann2005quantum,lupo2010capacities}. Finally, we propose an EA version of minimum entropy conjecture, which leads to the additivity of the capacity region of phase-insensitive bosonic Gaussian MACs if it is true.



\section{Bosonic Gaussian multiple-access channels}
\label{sec:general_definition}

As shown in Fig.~\ref{fig:EAMACschematic_main}(a), to communicate via a quantum MAC, the $s\ge 1$ senders encode the messages on quantum systems $A_1,\ldots,A_s$ and send them to a common receiver; a quantum MAC is described by a completely positive and trace-preserving map $\calN_{A\to B}$, where we denote the quantum systems of the $s$ senders together as a composite system $A=\otimes_{k=1}^s A_k$ and the output as $B$. To facilitate our analyses, we introduce the Stinespring unitary dilation of the MAC acting jointly on the environment system $E^\prime$ and the input $A$. The common receiver measures the quantum system $B$ to decode all messages. The decoding suffers from not only the environment noise but also the interference between the multiple senders.  As illustrated in Fig.~\ref{fig:EAMACschematic_main}(b), an EA MAC communication protocol incorporates an EA system $A_k^\prime$ pre-shared to the receiver for each sender $k$ to improve the communication rate, so that the state in each pair $A_kA'_k$ is pure. Similar to the composite system $A$, we denote $A'=\otimes_{k=1}^s A'_k$ as the composite system of the EA ancilla. Note that the single sender ($s=1$) case of a MAC reduces to a point-to-point quantum channel.

\begin{figure}[t]
    \centering
    \includegraphics[width=0.5\textwidth]{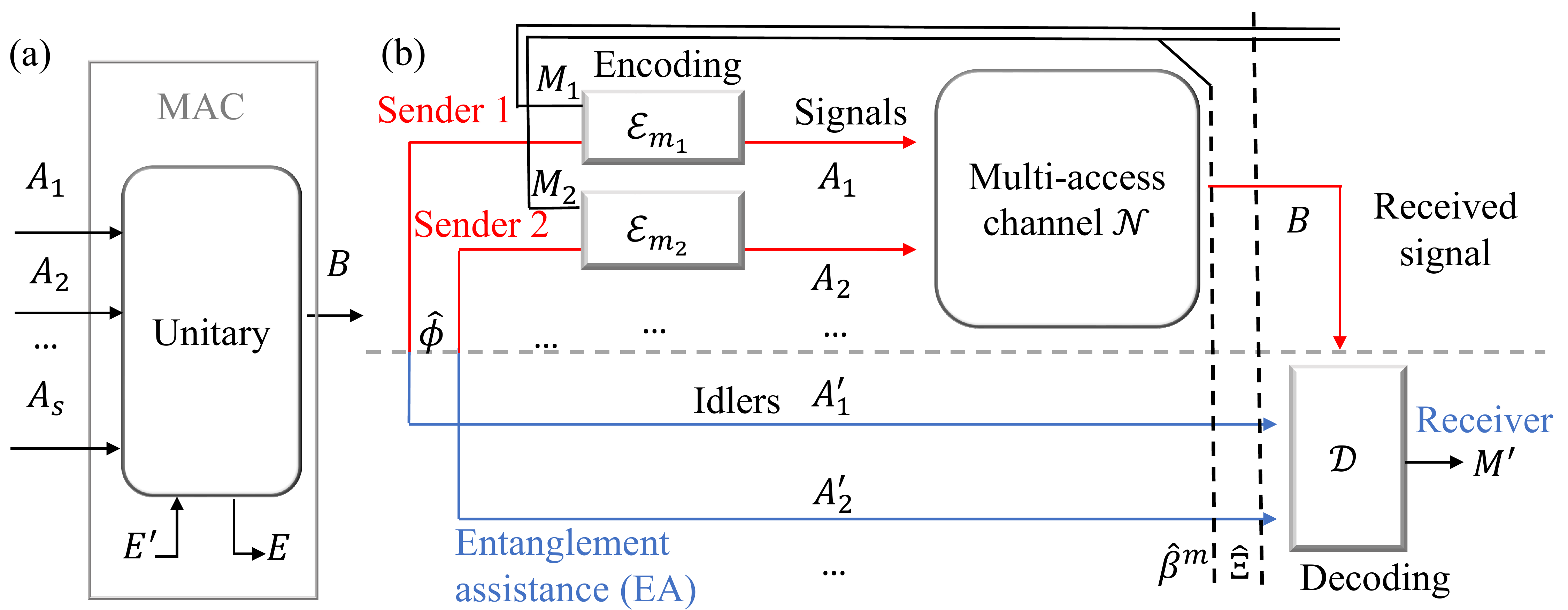}
    \caption{Schematic of (a) a general multiple-access channel and (b) entanglement-assisted communication through quantum multiple-access channel. See also Fig. 2 in Ref.~\cite{shi2021entanglement}.}
    \label{fig:EAMACschematic_main}
\end{figure}

In optical communication, the relevant channels of interest are phase-insensitive BGCs, where the channel unitary dilation and the environment state are Gaussian~\cite{Weedbrook_2012}. Formally, phase-insensitivity invokes a symmetry about the phase rotation unitary $\hat{R}(\theta)\equiv \exp\left(-i\theta \hat{a}^\dagger \hat{a}\right)$ with $\theta\in[0,2\pi)$, acting on a mode with the annilation operator $\hat{a}$. For a phase-insensitive BGMAC, a phase rotation $\hat R(\theta)$ on the output mode can be equivalently implemented on the input by properly choosing a phase rotation $\hat{R}_k(\theta_k)$ on each input $A_k$, where $\theta_k=(-1)^{\delta_k} \theta$ with $\delta_k\in\{0,1\}$; namely
\be 
\calN_{A\to B}\left[\hat{R}\left(\bm \theta\right) \hat{\rho}_{A} \hat{R}^\dagger\left(\bm \theta\right)\right]=\hat{R}(\theta)\calN_{A\to B}\left(\hat{\rho}_{A}\right)\hat{R}^\dagger(\theta),
\label{eq:phase_invariance}
\ee 
where the overall phase rotation $\hat{R}(\bm \theta)=\otimes_{k=1}^s \hat{R}_k\left(\theta_k\right)$. When the symmetries across different users are homogeneous---$\delta_k$'s are equal, we call the BGMAC global covariant ($\delta_k=0$) or global contravariant ($\delta_k=1$).

For the single-sender case of $s=1$, the BGMAC reduces to a point-to-point BGC represented by the Bogoliubov transforms on input mode $\hat{a}_A$ as
\be 
\hat{a}_B= w \left(\left(1-\delta\right)\hat{a}_{A}+\delta\hat{a}_{A}^\dagger \right)+
\hat \xi,
\label{aB_BGC}
\ee 
where $\delta=0,1$ indicates that the channel is covariant or contravariant and the noise term $\hat \xi$ combines two vacuum modes~\cite{caruso2008multi,caruso2011optimal} 
\be 
\hat \xi=u_1\hat{a}_{E^\prime,1}+u_2\hat{a}_{E^\prime,2}^\dagger\,.
\label{eq:xi}
\ee 
The mean photon number of the noise term is $\expval{\hat \xi^\dagger \hat \xi}=u_2^2$, where we have chosen $u_1,u_2$ to be real due to the phase degree-of-freedom of vacuum modes. The canonical commutation relation $[\hat{a}_B,\hat{a}_B^\dagger]=1$ requires that
$ 
(1-2\delta) |w|^2+(u_1^2-u_2^2)=1\,.
$
On vacuum inputs, the channel $\calN_{A\to B}$ produces a thermal state with the mean photon number $N_{\rm B}\equiv u_2^2+|w|^2 \delta$, which corresponds to `dark photon counts'. For a bona fide BGC, $N_{\rm B}\ge \max\{-1+|w|^2(1-\delta),|w|^2\delta\}$, where the channel is quantum-limited when the equality holds.

The phase-insensitive point-to-point BGCs contain four classes~\cite{GiovannettiV2014}.
The covariant family consists of three classes: the thermal-loss channels $\calL^{|w|^2,N_{\rm B}}$ with transmissivity $|w|^2<1$; the AWGN channels $\calE^{N_{\rm B}}$ when $|w|=1$; and the amplifier channels $\calA^{|w|^2,N_{\rm B}}$ with gain $|w|^2>1$. The contravariant family includes the conjugate amplifiers $\tilde{\calA}^{|w|^2+1,N_{\rm B}}$ with gain $|w|^2+1$. The $N_{\rm B}$ factor in the superscript denotes the mean dark photon counts of the channel.

Similar to the single-sender case, we can descirbe the multi-sender ($s\ge 2$) phase-insensitive BGMAC via the Bogoliobov transform
\bal
\hat{a}_B=&\left[\sum_{k=1}^s w_k \left(\left(1-\delta_k\right)\hat{a}_{A_k}+\delta_k\hat{a}_{A_k}^\dagger \right)\right]+
\hat \xi,
\label{eq:BGMAC_phaseinsensitive}
\eal
where the weights $\{w_k\}$'s are in general complex. 
It is natural to define the weight vector $\bm w=[w_1,w_2,\ldots,w_s]^T$, with its norm $|\bm w|=\sqrt{\sum_{k=1}^s|w_k|^2}$. The noise term $\hat\xi$ can still be reduced to the linear combination of two vacuum environment modes as in Eq.~\eqref{eq:xi} (see Appendix~\ref{app:noise_reduction}).

The canonical commutation relation $[\hat{a}_B,\hat{a}_B^\dagger]=1$ requires that
$
\left[\sum_{k=1}^s (1-2\delta_k) |w_k|^2\right]+(u_1^2-u_2^2)=1\,.
$
Similar to the single-sender case, on vacuum inputs, the MAC $\calN_{A\to B}$ produces a thermal state with the mean photon number $N_{\rm B}\equiv u_2^2+\sum_{k=1}^s|w_k|^2\delta_k$. Also, a bona fide BGMAC requires $N_{\rm B}\ge  \max\{-1+\sum_{k=1}^s |w_k|^2(1-\delta_k),\sum_{k=1}^s|w_k|^2\delta_k\}$. One can prove (see Appendix~\ref{app:outerbd}) that global covariant or contravariant BGMACs can always be constructed via interfering the users' input on a beamsplitter and then passing the mixed mode through a phase-insensitive point-to-point BGC. We will address this interference BGMAC in Sec.~\ref{sec:interference_def}.

In a BGMAC, as the Hilbert space of the quantum system is infinite-dimensional---an arbitrary number of photons can occupy a single mode due to the bosonic nature of light. To model a realistic communication scenario, we will consider an energy constraint on the mean photon number (brightness) of each sender's signal mode,
\be 
\expval{\hat{a}_{A_k}^\dagger \hat{a}_{A_k}}\le N_{S,k}, 1\le k \le s.
\label{eq:energy_constraint}
\ee 

In the above, we have focused on the single-mode BGMACs where a single party in the communication protocol has access to a single bosonic mode per channel use. 
We address the multi-mode generalization in Appendix~\ref{sec:Methods_multimode}.

\section{Communication over a MAC}
In this section, we introduce the communication rate region over a general MAC, with and without entanglement as the assistance.

\subsection{Unassisted communications}
To establish a multiple-access communication link to transmit classical messages, the $k$th sender independently sends quantum states according to a random variable $x_k$ generated from the type of codebook $p_{X_k}(x_k)$. Denote the classical register as $X_k$, the overall classical-quantum state describing the channel is
\be 
\hat \Xi=\sum_{x_1,\ldots, x_s}\left(\otimes_{k=1}^s p_{\small X_k}(x_k)\ketbra{x_k}_{X_k}\right) \otimes \hat\beta^{x_1\ldots x_s}_{B},
\ee
where $\hat \beta^x$ is the received state conditioned on the code word $x$.
Ref.~\cite{winter2001capacity} has derived the unassisted capacity region of an arbitrary quantum MAC to be the convex closure of all non-negative rate tuples $(R_1,R_2,\ldots,R_s)$ satisfying 
\be 
\sum_{k\in J}R_k\le I(X[J];B|X[J^c])_{\hat\Xi}\,,\forall J\subseteq \{1,2,\ldots,s\}\,,
\label{eq:C_def}
\ee
for some input distribution $\prod_{k=1}^s p_{X_k}(x_k)$.

Rate regions under specific input setups of an unassisted BGMAC have been investigated in Ref.~\cite{yen2005multiple} for an interference MAC with additive white Gaussian noises (AWGNs). Despite that an exact solution of the capacity region is still elusive, Ref.~\cite{yen2005multiple} has made a substantial progress by the combination of a coherent-state rate region and an outer-bound region. In Section \ref{sec:main_results}, we will extend the results for an arbitrary phase-insensitive BGMAC. 

\subsection{Entanglement-assisted communications}
The EA communication scenario is depicted in Fig.~\ref{fig:EAMACschematic_main}(b).
We consider the entanglement to be pairwise between each sender and the receiver such that the overall quantum state
$ 
\hat{\phi}_{AA^\prime}=\otimes_{k=1}^s \hat{\phi}_{A_k A_k^\prime}
$
is in a product form.
For convenience, we also define a quantum state after the channel but without the encoding,
\be 
\hat{\rho}_{BA^\prime}= \left[\calN_{A\to B}  \otimes \calI\right] \left(\hat{\phi}_{AA^\prime}\right).
\label{eq:rho_main}
\ee 
We can also consider the unitary dilation $\calU_\calN$ of the channel and write out the overall pure state output
\be 
\hat{\rho}_{EBA^\prime}= \calU_\calN \otimes \calI_{A^\prime}
\left[\hat{\phi}_{AA^\prime}\otimes \state{0}_{E^\prime}\right].
\ee 
Here $\ket{0}$ can be any pure state. In the case of phase-insensitive BGMACs, it is a multi-mode vacuum state. We will frequently utilize the purity in our analyses.

The EA capacity region of an $s$-sender MAC is conjectured in Ref.~\cite{hsieh2008entanglement} and proven in Ref.~\cite{shi2021entanglement}. Formally, the capacity is given by the regularized union 
\be
\calC_{\rm E}(\calN)=\overline{\bigcup_{n=1}^\infty \frac{1}{n}  \calC_{\rm E}^{(1)}(\calN^{\otimes n})},
\label{eq:CE_regularize_general}
\ee 
where the ``one-shot'' capacity region $\calC_{\rm E}^{(1)}(\calN)$ is the convex hull of the union of ``one-shot, one-encoding'' regions
\be 
\calC_{\rm E}^{(1)}(\calN)={\rm Conv}\left[\bigcup_{\hat{\phi}} \tilde \calC_{\rm E}(\calN,\hat{\phi} )\right].
\label{eq:CE_union_state}
\ee 
The ``one-shot, one-encoding''  rate region $\tilde \calC_{\rm E}(\calN,\hat{\phi} )$ for the 2s-partite pure product state $\hat{\phi}_{AA^\prime}=\otimes_{k=1}^s\hat{\phi}_{A_kA_k^\prime}$ over $AA^\prime$, is the set of rates $(R_1,\cdots,R_s)$ satisfying the following $2^s$ inequalities
\bal
\sum_{k\in J}R_k&\leq  I(A'[J];B|A'[J^c])_{\hat{\rho}}\equiv  \tilde C_{{\rm E},J}(\calN,\hat\phi), \forall J,
\label{eq:CeMAC_qinfo}
\,\eal 
where $A'[J]=\otimes_{k\in J}A'_k$ denotes a composite system of a subset $J$ of the $s$ senders and the conditional quantum mutual information is evaluated over the output state 
\be 
\hat{\rho}_{BA^\prime}=\calN_{A\to B}\otimes\calI_{A^\prime}(\hat{\phi}_{AA^\prime})\,.
\label{eq:rho}
\ee 
Here the quantum conditional mutual information between quantum systems $X$ and $Z$ conditioned on $Y$ is defined as
$ 
I(X;Z|Y)_{\hat{\alpha}}=S(XY)_{\hat{\alpha}}+S(YZ)_{\hat{\alpha}}-S(XYZ)_{\hat{\alpha}}-S(Y)_{\hat{\alpha}},
$ 
where $S(X)_{\hat{\alpha}}=S(\hat{\alpha}_X)=-\tr \left(\hat{\alpha}_X\log_2 \hat{\alpha}_X\right)$ is the von Neumann entropy. The independence of the coding among all $s$ senders ensures the independence among $A'_1,\ldots,A'_s$, and thereby a convenient relation
\be 
I(A'[J];B|A'[J^c])_{\hat{\rho}}=I(A'[J];BA'[J^c])_{\hat{\rho}},
\label{eq:MACindependence}
\ee
which will simplify the analyses in Sec.~\ref{sec:main_results}.

In Ref.~\cite{shi2021entanglement}, the rate-region of the common entanglement source TMSV has been evaluated for a thermal-loss MAC. However, the exact capacity is unknown due to the difficulty brought by: (a) the regularization in Eq.~\eqref{eq:CE_regularize_general}, which requires the entangled input over an arbitrary number of channel uses; (b) the union over all states in Eq.~\eqref{eq:CE_union_state}, which in general live in the infinite dimensional Hilbert space and are energy constrained; (c) the $2^s$ exponentially large number of inequalities in \eqref{eq:CeMAC_qinfo} that describe the boundary. 

Instead of obtaining the exact solution, we can also place bounds on the capacity region. To do that, we define the capacity of the partial communication rate in sender set $J$,
\be 
C_{{\rm E},J}(\calN)=\lim_{n\to\infty} \frac{1}{n} \max_{\hat\phi}\tilde C_{{\rm E},J}(\calN^{\otimes n},\hat\phi)
\label{eq:CE_J}
\ee 
where the one-shot one-state capacity of the partial rate for $n$ channel uses $\tilde C_{{\rm E},J}(\calN^{\otimes n},\hat\phi)$ is defined in Eq.~\eqref{eq:CeMAC_qinfo}.
The individual capacities form an outer bound for the capacity region, however, the outer bound is sometimes unattainable when the optimum states are different among different sets of senders.
Our main theorems focus on the total rate within the universal set $U\equiv\{1,2,\ldots,s\}$, where Eq.~\eqref{eq:CE_J} reduces to
\be 
\tilde C_{{\rm E},U}(\calN,\hat\phi)=I(A';B)_{\hat{\rho}}\,,
 \label{eq:CeMAC_qinfo_overall}
\ee
when combining with Eq.~\eqref{eq:CeMAC_qinfo}.

\section{Main results}

\label{sec:main_results}

In this section, we propose computable limits for both the unassisted and the EA communication protocols. For the unassisted protocols, we solve the coherent-state capacity region for general phase-insensitive BGMACs and an outer bound for the noisy phase-insensitive BGMACs. For the EA protocols, we prove for the total rate that the EA classical capacity of a general quantum MAC is additive, and that the ultimate capacity of a single-mode phase-insensitive BGMAC is achieved by an $s$-partite product of TMSV states, which immediately leads to a capacity theorem for BGCs when $s=1$. Similar to the unassisted case, an outer bound for the EA case is derived. For a multi-mode phase-insensitive BGMAC, we find that a general Gaussian state to be sufficient to achieve the optimum total rate.

\subsection{Limits of unassisted communications}
\label{sec:unassisted}

The coherent-state region constrains the input state to be an ensemble of coherent states, which is widely used to model laser-communication scenarios. The input state conditioned on code words $x_1,\ldots,x_s$ is a product coherent state
\be 
\hat\sigma^{x_1\ldots x_s}=\otimes_{k=1}^s\ketbra{x_k}{x_k}_{A_k}\,.
\ee
For communication over a general phase-insensitive BGMAC defined by Eq.~\eqref{eq:BGMAC_phaseinsensitive},  given the coefficients $\bm w$, the thermal dark count noise $N_{\rm B}$, and the signal brightness per sender $\{N_{{\rm S},k}\}_{k=1}^s$, the optimal input distribution that maximizes Eq.~\eqref{eq:C_def} is the circularly-symmetric Gaussian distribution $p_{X_k}(x_k)\propto\exp{-{|x_k|^2}/{N_{{\rm S},k}}},\,1\le k\le s$ over complex displacement $x_k$.
Therefore, the coherent-state capacity region can be obtained as the region inside the boundaries
\begin{align} 
\sum_{k\in J}R_k&\le  C_{{\rm coh}, J} \equiv g\left(\sum_{k\in J}|w_k|^2 N_{{\rm S},k}\!+\!\! N_{\rm B}\!\right)
-g\left(N_{\rm B}\!\right),
\label{eq:C_coh}
\end{align}
where $J\subseteq \{1,2,\ldots,s\}$ is an arbitrary sender set and the function $g(x)=(x+1)\log(x+1)-x\log (x)$. 

Meanwhile, we can obtain an outer-bound region via individual upper bounds per sender and an upper bound for the total rate (see Appendix~\ref{app:outerbd}). We solve the outer bounds for channels satisfying either one of the following conditions,
\begin{align}
&\mbox{(a): } N_{\rm B}\ge \max\{|\bm w|^2-1,0\}+\sum_{k=1}^s|w_k|^2\delta_k; 
\label{condition_a}
\\
&\mbox{(b): } N_{\rm B}\ge |\bm w|^2+\sum_{k=1}^s|w_k|^2(1-\delta_k)
\label{condition_b}
\end{align}
The former fits better a BGMAC with more covariant components such that $\sum_{k=1}^s |w_k|^2(1-2\delta_k)\ge 0$, otherwise the latter would provide a tighter bound. 

In the noisy scenario (a), the individual bound of the $k$th sender is evaluated as if $N_{{\rm S},k}$ photons from the $k$th sender alone travel through a covariant ($\delta_k$=0) or contravariant ($\delta_k$=1) BGC which amplifies/attenuates the signal mode by $|\bm w|^2$ with dark photon count 
\be 
N_{\rm B}^k= N_{\rm B}+|\bm w|^2\delta_k-\sum_{\ell=1}^s|w_\ell|^2\delta_\ell\,.
\label{eq:def_Nbk}
\ee 
In this case, each individual rate $R_k$ satisfies an upper bound
\begin{align}
R_k\le g\left( |\bm w|^2  N_{{\rm S},k}+N_{\rm B}^k\right) -g\left(N_{\rm B}^k\right)\, .
\label{eq:C_outer}
\end{align} 
The bound on the total rate is evaluated as if $\sum_{k=1}^s |w_k|^2/|\bm w|^2(N_{{\rm S},k}+\delta_k)$ photons from all the $s$ senders travel through a covariant BGC of gain/loss $|\bm w|^2$ with the dark photon count $N_{\rm B}-\sum_{k=1}^s|w_k|^2\delta_k$.
For the total rate of all $s$ senders, by the bottleneck inequality (data processing inequality) we have the upper bound 
\be 
\sum_{k=1}^s R_k\le g\left(\sum_{k=1}^s |w_k|^2N_{{\rm S},k}+N_{\rm B}\right)-g\left(N_{\rm B}-\sum_{k=1}^s|w_k|^2\delta_k\right)\,.
\label{eq:C_outer_sum}
\ee
When the BGMAC is global covariant, coherent-state encoding achieves the optimal unassisted total rate, as Eq.~\eqref{eq:C_coh} coincides with the upper bound in Ineq.~\eqref{eq:C_outer_sum}. 

In scenario (b), following a similar derivation but switching the choices of the BGCs, we have the outer bound in this case 
\begin{align}
R_k&\le g\left( |\bm w|^2  N_{{\rm S},k}+N_{\rm B}^{k,c}\right) -g\left(N_{\rm B}^{k,c}\right)\,,
\label{eq:C_outer_contra}
\\
\sum_{k=1}^s R_k&\le g\left(\sum_{k=1}^s |w_k|^2 N_{{\rm S},k}+N_{\rm B}\right)\!\!
\nonumber
\\
&\ \ \ \ -\! g\!\left(\! N_{\rm B}\!-\!\!\sum_{k=1}^s|w_k|^2(1-\delta_k)\right)\,.
\label{eq:C_outer_sum_contra}
\end{align} 
where the dark photon counts per sender in this case are 
\be 
N_{\rm B}^{k,c}\equiv N_{\rm B}+|\bm w|^2(1-\delta_k)-\sum_{\ell=1}^s|w_\ell|^2(1-\delta_\ell)\,.
\label{eq:def_Nbkc}
\ee 
It is noteworthy that, an arbitrary global covariant or contravariant BGMAC always satisfies one of the noise conditions (a) or (b), therefore the corresponding outer bounds Eqs.~\eqref{eq:C_outer}~\eqref{eq:C_outer_sum} or Eq.~\eqref{eq:C_outer_contra}~\eqref{eq:C_outer_sum_contra} can always be applied, as will be discussed in Sec.~\ref{sec:interference_def}.

\subsection{Limits of EA communications}
\label{sec:ea} 

We prove the additivity of the total rate of EA communication over a general MAC as a generalization of the two-sender version in Ref.~\cite{hsieh2008entanglement}.
 
\begin{theorem}[additivity of the total rate]
\label{theorem: EA_MAC_main_additivity}
The EA classical capacity of the total communication rate over an $s$-sender MAC $\calN$ is additive,
\be 
C_{{\rm E},U}(\calN)=C_{{\rm E},U}^{(1)}(\calN).
\label{eq:CE_additivity}
\ee 
\end{theorem}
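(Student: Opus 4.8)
The plan is to reduce the statement to the exact identity $a_n = n\,a_1$ for the sequence $a_n\equiv C_{{\rm E},U}^{(1)}(\calN^{\otimes n})$, i.e.\ the maximum of $\tilde C_{{\rm E},U}(\calN^{\otimes n},\hat\phi)=I(A';B)_{\hat\rho}$ over one-shot one-encoding inputs $\hat\phi$ of $\calN^{\otimes n}$ (the $n=1$ instance of Eq.~\eqref{eq:CE_J}); once this holds, Eq.~\eqref{eq:CE_J} with $J=U$ gives $C_{{\rm E},U}(\calN)=\lim_n\tfrac1n a_n=a_1=C_{{\rm E},U}^{(1)}(\calN)$. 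Superadditivity $a_{n+m}\ge a_n+a_m$ is immediate: tensoring the optimal inputs for $\calN^{\otimes n}$ and $\calN^{\otimes m}$ gives an admissible input for $\calN^{\otimes(n+m)}$ that is still pure and product over the $s$ senders. The content is subadditivity, and I would obtain it from a single one-step lemma and then iterate it with $\calN_1=\calN^{\otimes(n-1)}$, $\calN_2=\calN$: for any two $s$-sender MACs $\calN_i:A_i\to B_i$ ($i=1,2$) and any admissible pure, sender-factorized input $\hat\phi=\otimes_{k=1}^s\hat\phi_{A_{1,k}A_{2,k}A'_{1,k}A'_{2,k}}$, writing $A_i=\otimes_k A_{i,k}$, $A'_i=\otimes_k A'_{i,k}$, $A'=A'_1A'_2$, $B=B_1B_2$, and $\hat\rho=(\calN_1\otimes\calN_2\otimes\calI_{A'})(\hat\phi)$,
\be
I(A';B)_{\hat\rho}\ \le\ C_{{\rm E},U}^{(1)}(\calN_1)+C_{{\rm E},U}^{(1)}(\calN_2).
\ee
This is the $s$-sender analogue of the two-sender argument of Ref.~\cite{hsieh2008entanglement}.

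To prove the lemma I would split $I(A'_1A'_2;B_1B_2)_{\hat\rho}=I(A'_1A'_2;B_1)_{\hat\rho}+I(A'_1A'_2;B_2|B_1)_{\hat\rho}$ by the chain rule and bound each term by an admissible one-shot rate of $\calN_1$, resp.\ $\calN_2$. For the first term, $\calN_2$ acts only on $A_2$ and leaves the marginal on $A'_1A'_2B_1$ unchanged, so $I(A'_1A'_2;B_1)_{\hat\rho}=I(A'_1A'_2;B_1)_{\hat\tau}$ with $\hat\tau=(\calN_1\otimes{\rm id}_{A_2}\otimes\calI_{A'})(\hat\phi)$; enlarging the first argument gives $\le I(A'_1A'_2A_2;B_1)_{\hat\tau}$, and this is precisely $\tilde C_{{\rm E},U}(\calN_1,\hat\phi)$ when $\hat\phi$ is \emph{regrouped} so that sender $k$'s signal is $A_{1,k}$ and its ancilla is $A_{2,k}A'_{1,k}A'_{2,k}$ (total ancilla $A_2A'_1A'_2$), hence $\le C_{{\rm E},U}^{(1)}(\calN_1)$. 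For the second term, $I(A'_1A'_2;B_2|B_1)\le I(A'_1A'_2B_1;B_2)$ since $I(A'_1A'_2B_1;B_2)=I(B_1;B_2)+I(A'_1A'_2;B_2|B_1)$ with $I(B_1;B_2)\ge0$; then $\hat\rho$ arises from $\hat\tau'=({\rm id}_{A_1}\otimes\calN_2\otimes\calI_{A'})(\hat\phi)$ by running $\calN_1$ on $A_1$ only, so data processing gives $I(A'_1A'_2B_1;B_2)_{\hat\rho}\le I(A'_1A'_2A_1;B_2)_{\hat\tau'}$, which by the dual regrouping ($\calN_2$ with sender-$k$ signal $A_{2,k}$ and ancilla $A_{1,k}A'_{1,k}A'_{2,k}$) equals $\tilde C_{{\rm E},U}(\calN_2,\hat\phi)\le C_{{\rm E},U}^{(1)}(\calN_2)$. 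Summing the two bounds proves the lemma, hence $a_n\le n\,a_1$, and with superadditivity $a_n=n\,a_1$, which is Eq.~\eqref{eq:CE_additivity}.

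I expect the only delicate point to be verifying that the regrouped states are genuinely admissible EA inputs, i.e.\ pure and \emph{product over the $s$ senders}: this property survives moving, within each sender's local block, some subsystems from ``signal'' to ``ancilla'' precisely because the factorization $\hat\phi=\otimes_k\hat\phi_{A_{1,k}A_{2,k}A'_{1,k}A'_{2,k}}$ is across senders and not across channel uses. This is also why the argument delivers additivity for the total rate $U$ but cannot be run for a proper subset $J\subsetneq U$: there the conditioning on $A'[J^c]$ in $\tilde C_{{\rm E},J}$ breaks the clean regrouping, and the partial-rate capacities need not be additive. The entropic ingredients used—the chain rule, nonnegativity of conditional quantum mutual information, and data processing—are standard and carry over to the infinite-dimensional bosonic setting, with the usual finiteness caveats handled by energy truncation.
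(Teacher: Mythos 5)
Your proposal is correct, and its overall skeleton is the same as the paper's: trivial superadditivity plus a one-step subadditivity lemma for the total-rate mutual information, iterated to give $C_{{\rm E},U}^{(1)}(\calN^{\otimes n})=n\,C_{{\rm E},U}^{(1)}(\calN)$. Where you differ is in how the key lemma is proved. The paper (Appendix~\ref{app:Theorem1} and~\ref{app:subadditivity}) passes to the Stinespring dilation, uses purity of $Q_1'Q_2'RE_1E_2$ to rewrite each mutual information as $S(Q')+S(Q'|E)$ in terms of output and environment entropies, and then invokes subadditivity of the entropy and of the conditional entropy $S(Q_1'Q_2'|E_1E_2)\le S(Q_1'|E_1)+S(Q_2'|E_2)$. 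You instead prove the same inequality $I(A';B_1B_2)\le I(A'A_2;B_1)_{\hat\tau}+I(A'A_1;B_2)_{\hat\tau'}$ directly via the chain rule, nonnegativity of conditional mutual information, and data processing, without introducing environments — essentially the Hsieh--Devetak--Winter route generalized to $s$ senders. Your version has the merit of making fully explicit the regrouping step (moving $A_{2,k}$ into sender $k$'s ancilla and checking the regrouped state is still pure and product across senders), which is exactly the point the paper handles implicitly by defining $I_U$ as a functional of reduced states; you also correctly identify that this regrouping is what fails for partial rates $J\subsetneq U$. The paper's environment-based route, on the other hand, yields the standalone subadditivity of $S(Q_1'Q_2'|E_1E_2)$, which it reuses in the discussion surrounding Conjecture~\ref{conjecture: min_ent_conj}. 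Both arguments are sound; the one residual subtlety common to both (immaterial for Theorem~\ref{theorem: EA_MAC_main_additivity} as stated for general MACs, but relevant when an average-per-use energy constraint is imposed) is that the iteration $a_n\le a_{n-1}+a_1$ implicitly requires optimizing the energy split across uses, which the paper later handles via concavity of the capacity in the input energy.
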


The proof of Theorem~\ref{theorem: EA_MAC_main_additivity} relies on the additivity of quantum mutual information in Eq.~\eqref{eq:CeMAC_qinfo_overall}. However, for the partial information rates $C_{{\rm E},J\neq U}(\calN)$, we note that a proof of the additivity is still elusive, due to the emergence of the EA systems $A'[J^c]$ in Eq.~\eqref{eq:CeMAC_qinfo} when $J^c$ is non-empty. In this case, $A'[J^c]$ is unfortunately indivisible into one subsystem per channel use when senders in $J^c$ apply joint coding among channel uses. Nevertheless, it is noteworthy that a conditional additivity for partial rate of $J$ holds under the constraint that senders in $J^c$ apply independent coding among channel uses. This reveals an intriguing phenomenon, that entangling $A[J]$ among channel uses does not improve the communication rate when $A[J^c]$ is uncorrelated, which will help our proof of Proposition \ref{proposition:gaugeinv_local}.

Using the additivity, we solve the ultimate capacity for the total rate of EA communication over a phase-insensitive BGMAC as summarized below.

\begin{theorem}[Total rate of BGMACs]
\label{theorem: EA_MAC_main_TMSV}
The EA classical capacity of total rate $C_{{\rm E},U}(\calN)$ over a single-mode phase-insensitive $s$-sender BGMAC $\calN$, under the energy constraint $\{ N_{{\rm S},k}\}_{k=1}^s$, is additive and achieved by an $s$-partite TMSV state, i.e.
\be  
C_{{\rm E},U}(\calN)=\tilde C_{{\rm E},U}(\calN,\otimes_{k=1}^s \ket{\zeta}_{A_kA_k'}(N_{{\rm S},k}) )\,.
\label{eq:CE_totalrate}
\ee 
\end{theorem}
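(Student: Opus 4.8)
The plan is to combine three ingredients: (i) the additivity from Theorem~\ref{theorem: EA_MAC_main_additivity}, which reduces the regularized quantity $C_{{\rm E},U}(\calN)$ to the single-letter optimization $C_{{\rm E},U}^{(1)}(\calN)=\max_{\hat\phi}I(A';B)_{\hat\rho}$ over product pure states $\hat\phi_{AA'}=\otimes_k\hat\phi_{A_kA_k'}$ with $\expval{\hat a_{A_k}^\dagger\hat a_{A_k}}\le N_{{\rm S},k}$; (ii) a Gaussian-extremality argument showing the maximizing $\hat\phi$ can be taken Gaussian; (iii) a phase-symmetrization argument pinning down the optimal Gaussian state to the phase-covariant one, i.e. the TMSV with brightness saturating the constraint. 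So first I would write $I(A';B)_{\hat\rho}=S(B)_{\hat\rho}+S(A')_{\hat\rho}-S(BA')_{\hat\rho}$, and use the Stinespring dilation $\hat\rho_{EBA'}$ (pure) so that $S(BA')=S(E)$ and $S(A')=S(AA'\,|\,\text{pure on }AA')$ — actually, since each $\hat\phi_{A_kA_k'}$ is pure, $S(A_k')=S(A_k)$, giving $I(A';B)=S(B)+\sum_k S(A_k)-S(E)$.

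Next I would invoke Gaussian extremality for conditional-entropy-like functionals (Holevo--Werner, Wolf--Giedke--Cirac~\cite{Holevo2001,wolf2006}): among all states with a fixed covariance matrix, the quantity $S(B)-S(E)$ (a difference of output and complementary-output entropies of a Gaussian channel) is extremized by the Gaussian state, while $S(A_k)$ for the fixed local marginal covariance is maximized by a thermal $A_k$, hence by a Gaussian purification $\hat\phi_{A_kA_k'}$. Care is needed on the direction of the inequality: the relevant statement is that $I(A';B)$ as a function of the input is maximized by a Gaussian input at fixed input covariance, so the sup over all inputs equals the sup over Gaussian inputs; this is the step where I would lean most heavily on the cited extremality theorems and may need a short lemma stating precisely which functional is concave/extremal. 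Once restricted to Gaussian $\hat\phi$, the problem becomes a finite-dimensional optimization over the $s$ input covariance matrices (each a one-mode block plus its purifying partner), subject to the photon-number constraints.

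Then comes the symmetry reduction. Because the BGMAC is phase-insensitive, Eq.~\eqref{eq:phase_invariance} holds; applying the phase rotations $\hat R_k(\theta_k)$ on the $A_k$ and a compensating rotation on $A_k'$ leaves $I(A';B)_{\hat\rho}$ invariant, since local unitaries on $A'$ do not change any entropy appearing in the mutual information and the output-side rotation $\hat R(\theta)$ does not change $S(B)$. Averaging a general Gaussian $\hat\phi$ over $\bm\theta$ (concavity of $I(A';B)$ in the state, again from the extremality package) produces a phase-covariant Gaussian state whose one-mode $A_k$-marginal is thermal and whose $A_kA_k'$ correlations are exactly those of a TMSV; monotonicity of $I(A';B)$ in $N_{{\rm S},k}$ then forces saturation $\expval{\hat a_{A_k}^\dagger\hat a_{A_k}}=N_{{\rm S},k}$. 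This yields Eq.~\eqref{eq:CE_totalrate}, and the $s=1$ specialization is the announced BGC capacity theorem. The main obstacle I anticipate is making the Gaussian-extremality step rigorous for the \emph{difference} $S(B)-S(E)$ together with the energy constraint: one must verify that the functional to which extremality applies is exactly $I(A';B)$ (not some surrogate), handle the non-compactness of the infinite-dimensional input set, and confirm the optimum is attained (rather than only approached) at finite energy — likely via a truncation/continuity argument or by appealing to the version of Gaussian extremality already adapted to bosonic channels in the single-sender EA-capacity literature.
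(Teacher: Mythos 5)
Your overall route — additivity from Theorem~\ref{theorem: EA_MAC_main_additivity} to get a single-letter quantity, Gaussian extremality to restrict to Gaussian inputs, then phase-symmetrization plus concavity and monotonicity in $N_{{\rm S},k}$ to pin down the TMSV — is exactly the paper's strategy, so the architecture is sound. Two technical points need repair. First, your symmetrization step asserts that averaging a Gaussian $\hat\phi$ over the phase rotations ``produces a phase-covariant Gaussian state'': it does not, since a mixture of Gaussian states with different second moments is generically non-Gaussian. The correct chain (as in the paper's proof of Proposition~\ref{proposition:gaugeinv_local}) is: by concavity of the mutual-information functional in the reduced input state, the phase-averaged (non-Gaussian) state does at least as well as the squeezed input; its covariance matrix is block-diagonal and thermal; then a \emph{second} application of Gaussian extremality passes to the Gaussian state with that covariance matrix, which is the reduced TMSV at the (conserved) energy $\cosh(2r_k)N_{{\rm S},k}+\sinh^2 r_k$, and monotonicity in energy finishes the argument. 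Second, applying extremality separately to the difference $S(B)-S(E)$ is the step you rightly flag as delicate — the inequality directions for output versus complementary-output entropy do not obviously combine — and the paper avoids this by applying the Wolf--Giedke--Cirac lemma directly to the full functional $I_U$, whose subadditivity (Lemma~\ref{lemma:subadditivity}) and invariance under the channel-wise Hadamard transform (Lemma~\ref{lemma:invariance_Hadamard}) are established separately; you should do the same rather than split the functional.
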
 
Here a TMSV between two modes $D,D^\prime$, with a mean photon number $N_{\rm S}$ per mode, is given by the wavefunction
\be
\hat{\zeta}_{D D^\prime}(N_{\rm S})\propto \sum_{n_k=0}^\infty (1+1/N_{\rm S})^{-n/2} \ket{n}_{D}\ket{n}_{D^\prime},
\ee 
where $\ket{n}$ is the Fock state defined by $\hat{a}^\dagger \hat{a}\ket{n}=n\ket{n}$.

Note that Theorem~\ref{theorem: EA_MAC_main_TMSV} also applies to general point-to-point phase-insensitive BGCs, leading to the following corollary.
\begin{corollary}[Capacity of BGCs]
The EA capacity of a single-mode point-to-point phase-insensitive bosonic Gaussian channel is achieved by a TMSV state. For a BGC described by Eq.~\eqref{aB_BGC}, the EA capacity $C_{\rm E}(\delta,N_{\rm S}, |w|^2,N_{\rm B})$ is specified as the following: For the covariant BGCs $(\delta=0)$, 
\be
C_{\rm E}(0,N_{\rm S}, |w|^2,N_{\rm B})=g(N_{\rm S})+g(N_{\rm S}^\prime)-g(A_+)-g(A_-),
\label{eq:CE_BGC}
\ee 
where
$A_\pm=(D-1\pm(N_{\rm S}^\prime-N_{\rm S}))/2$, $N_{\rm S}^\prime=|w|^2 N_{\rm S}+N_{\rm B}$ and $D=\sqrt{(N_{\rm S}+N_{\rm S}^\prime+1)^2-4|w|^2 N_{\rm S}(N_{\rm S}+1)}$; for the contravariant BGCs ($\delta=1$), 
\be
C_{\rm E}(1,N_{\rm S}, |w|^2,N_{\rm B})=g(N_{\rm S})+g(N_{\rm S}^{\prime } )-g(A_+^{\rm c})-g(A_-^{\rm c}),
\label{eq:CE_BGC_contra}
\ee 
where 
$A_\pm^{\rm c}=(\pm D^{\rm c}+N_{\rm S}^{\prime }+N_{\rm S})/2$, $N_{\rm S}^{\prime }=|w|^2 N_{\rm S}+N_{\rm B}$ and $D^{\rm c}=\sqrt{(N_{\rm S}^{\prime }-N_{\rm S})^2+4|w|^2 N_{\rm S}(N_{\rm S}+1)}$. 


\label{corollary:opt_singlesender}
\end{corollary}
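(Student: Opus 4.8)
The statement is the $s=1$ specialization of Theorem~\ref{theorem: EA_MAC_main_TMSV}, so it suffices to evaluate $\tilde C_{{\rm E},U}(\calN,\hat{\zeta}_{AA'}(N_{\rm S}))=I(A';B)_{\hat{\rho}}$ for the single-mode phase-insensitive BGC of Eq.~\eqref{aB_BGC} acting on the $A$ mode of a TMSV of brightness $N_{\rm S}$. Writing $I(A';B)=S(A')+S(B)-S(A'B)$, the plan is to obtain the three entropies from the Gaussian covariance matrix of $\hat{\rho}_{A'B}$. The marginal on $A'$ is a thermal state of brightness $N_{\rm S}$, so $S(A')=g(N_{\rm S})$; the marginal on $A$ is likewise thermal of brightness $N_{\rm S}$, and propagating it through Eq.~\eqref{aB_BGC} produces a thermal output of brightness $N_{\rm S}^\prime=|w|^2 N_{\rm S}+N_{\rm B}$ in both the covariant and contravariant cases (using $N_{\rm B}=u_2^2+|w|^2\delta$), hence $S(B)=g(N_{\rm S}^\prime)$.

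The substance is the joint term $S(A'B)$. First I would write the $2\times2$ block covariance matrix of $\hat{\phi}_{AA'}$ (thermal diagonal blocks of brightness $N_{\rm S}$, off-diagonal TMSV correlation block $\propto\sqrt{N_{\rm S}(N_{\rm S}+1)}$), then apply the Bogoliubov map of Eq.~\eqref{aB_BGC} to the $A$ block and add the vacuum noise contribution of $\hat{\xi}$ from Eq.~\eqref{eq:xi} to the $B$ block. The structural point that separates the two cases is that for the contravariant channel the $\hat{a}_A^\dagger$ term conjugates the $A$ quadratures, flipping the sign with which the TMSV correlations enter the $A'$--$B$ cross block relative to the covariant case; this is exactly what turns the $-4|w|^2 N_{\rm S}(N_{\rm S}+1)$ inside $D$ into $+4|w|^2 N_{\rm S}(N_{\rm S}+1)$ inside $D^{\rm c}$. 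Computing the two symplectic eigenvalues $\nu_\pm$ of the resulting two-mode covariance matrix---most cleanly from the invariants $\nu_+^2+\nu_-^2$ and $\nu_+^2\nu_-^2=\det$---one identifies $\nu_\pm=2A_\pm+1$ in the covariant case and $\nu_\pm=2A_\pm^{\rm c}+1$ in the contravariant case, whence $S(A'B)=g(A_+)+g(A_-)$ or $g(A_+^{\rm c})+g(A_-^{\rm c})$; collecting the three entropies assembles into Eqs.~\eqref{eq:CE_BGC} and~\eqref{eq:CE_BGC_contra}. (Equivalently one may use $S(A'B)=S(E)$ by purity of $\hat{\rho}_{EBA'}$ and diagonalize the two environment modes, which gives the same answer.)

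I expect the main obstacle to be purely computational: carrying the covariance-matrix algebra through with one consistent convention and simplifying the characteristic polynomial for $\nu_\pm^2$ so that the discriminants collapse to the compact radicals $D$ and $D^{\rm c}$ as stated. Two secondary checks are physicality---that $\nu_\pm\ge 1$, equivalently $A_\pm,A_\pm^{\rm c}\ge 0$, with the convention $g(0)=0$ handled correctly at the quantum-limited boundary $N_{\rm B}=\max\{-1+|w|^2(1-\delta),|w|^2\delta\}$---and the smooth reduction of the covariant formula to the AWGN channel at $|w|=1$, where $D$ stays well defined. These reproduce the known EA capacities of the thermal-loss and amplifier channels and extend them uniformly to the AWGN and conjugate-amplifier channels.
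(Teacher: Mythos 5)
Your proposal is correct and follows essentially the same route as the paper: the corollary is obtained by specializing Theorem~\ref{theorem: EA_MAC_main_TMSV} to $s=1$ (so TMSV optimality is inherited), and the explicit formulas are the standard Gaussian evaluation of $I(A';B)=S(A')+S(B)-S(A'B)$ via the symplectic eigenvalues of the joint covariance matrix, with the covariant case reproducing the Holevo--Werner result and the contravariant case differing only by the sign with which the TMSV correlations enter the cross block. Your bookkeeping of $N_{\rm B}=u_2^2+|w|^2\delta$, which makes $S(B)=g(N_{\rm S}^\prime)$ uniform across both cases, matches the paper's conventions.
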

This result not only generalizes the capacity of the lossy and noisy case in Ref.~\cite{Holevo2001}, but also provide a start point for obtaining the outer bound below.

Indeed, the derivation of the outer bound in the unassisted case also applies to the EA case. For channels satisfying Condition (a) of Eq.~\eqref{condition_a}
\begin{align}
&R_k\le 
C_{\rm E}(\delta_k,N_{{\rm S},k},|\bm w|^2,N_{\rm B}^k)\,
\label{eq:CE_outer}
\\
&\sum_{k=1}^s R_k\le 
\nonumber
\\
&C_{\rm E}\left(0,\sum_{k=1}^s \frac{|w_k|^2}{|\bm w|^2}(N_{{\rm S},k}+\delta_k),|\bm w|^2,N_{\rm B}-\sum_{k=1}^s|w_k|^2\delta_k\right)\,;
\label{eq:CE_outer_sum}
\end{align}
and for channels satisfying Condition (b) of Eq.~\eqref{condition_b}
\begin{align}
&R_k\le 
C_{\rm E}(\delta_k, N_{{\rm S},k},|\bm w|^2,N_{\rm B}^{k,c})\,,
\label{eq:CE_outer_contra}\\
&\sum_{k=1}^s R_k\le \nonumber\\
& C_{\rm E}\left(1,\sum_{k=1}^s \frac{|w_k|^2}{|\bm w|^2}(N_{{\rm S},k}+\delta_k),|\bm w|^2,N_{\rm B}-\sum_{k=1}^s|w_k|^2(1-\delta_k)\right)\,.
\label{eq:CE_outer_sum_contra}
\end{align}
Here the dark photon counts $N_{\rm B}^k, N_{\rm B}^{k,c}$ are defined in Eqs.~\eqref{eq:def_Nbk}~\eqref{eq:def_Nbkc}.
Similarly, our results are complete for all global covariant or global contravariant cases---interference BGMACs, as we will detail in Sec.~\ref{sec:interference_def}.


Finally, we generalize our results to the multi-mode case, where each sender and receiver can send multiple modes to the BGMAC. (See Appendix~\ref{sec:Methods_multimode} for a detailed definition)

\begin{theorem}[multi-mode BGMACs]
\label{theorem: EA_MAC_main_Gaussian}
The energy-constrained EA classical capacity of total rate $C_{{\rm E},U}(\calN)$ over a phase-insensitive $s$-sender BGMAC $\calN$ is additive and achieved by a zero-mean Gaussian state, i.e.
\be  
C_{{\rm E},U}(\calN)=  \max_{\hat\phi^G}\tilde C_{{\rm E},U}(\calN,\hat{\phi}^G ).
\label{CE_Gaussian_union}
\ee 
where the maximization is taken over all zero-mean Gaussian states $\hat{\phi}^G$ satisfying the energy constraints.
\end{theorem}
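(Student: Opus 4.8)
The plan is to combine the additivity of the total rate (Theorem~\ref{theorem: EA_MAC_main_additivity}) with the maximum-entropy principle and the Gaussian extremality of the conditional entropy, so that the only surviving problem is a finite-dimensional optimization over input covariance matrices. By Theorem~\ref{theorem: EA_MAC_main_additivity} the regularization in Eq.~\eqref{eq:CE_J} for $J=U$ may be removed, and together with Eq.~\eqref{eq:CeMAC_qinfo_overall} this gives
\be
C_{{\rm E},U}(\calN)=\sup_{\hat\phi}\,I(A';B)_{\hat\rho},\qquad \hat\rho=(\calN_{A\to B}\otimes\calI_{A'})(\hat\phi),
\ee
with the supremum over product inputs $\hat\phi=\otimes_{k=1}^s\hat\phi_{A_kA_k'}$ obeying the energy constraints. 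Purifying each factor $\hat\phi_{A_kA_k'}$ and using that enlarging the reference does not decrease $I(A';B)$, one sees the supremum is the same whether taken over all such $\hat\phi$ or over the pure ones; it therefore suffices to exhibit a zero-mean Gaussian $\hat\phi$ that attains it.

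I would then perform the Gaussian replacement. Fix an admissible $\hat\phi$. Because a local displacement on the modes of $A_k$ (and of $A_k'$) leaves $I(A';B)$ invariant---on the output it becomes a displacement on $B$, a local unitary---and can only decrease $\expval{\hat a_{A_k}^\dagger\hat a_{A_k}}$, we may assume $\hat\phi$ has vanishing first moments. Let $\gamma=\oplus_{k=1}^s\gamma_k$ be its covariance matrix, block-diagonal because the input is a product over senders, and let $\hat\phi^G$ be the zero-mean Gaussian state with covariance matrix $\gamma$. Then $\hat\phi^G=\otimes_{k=1}^s\hat\phi^G_{A_kA_k'}$ is again a product over senders, and its $A_k$-marginal has the same covariance matrix---hence the same mean photon number---as that of $\hat\phi$, so $\hat\phi^G$ is admissible. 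Since $\calN$ is a Gaussian channel it acts affinely on covariance matrices, so $\hat\rho^G=(\calN\otimes\calI)(\hat\phi^G)$ and $\hat\rho$ have the same covariance matrix and $\hat\rho^G$ is Gaussian.

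The core step is the inequality $I(A';B)_{\hat\rho}\le I(A';B)_{\hat\rho^G}$. I would obtain it from the decomposition $I(A';B)=S(B)-S(B|A')$, bounding the two terms in reinforcing directions: the marginals $\hat\rho_B$ and $\hat\rho^G_B$ have the same covariance matrix, so the maximum-entropy property of Gaussian states gives $S(B)_{\hat\rho}\le S(B)_{\hat\rho^G}$; and by the Gaussian extremality theorem of Wolf, Giedke and Cirac~\cite{wolf2006}---which in particular states that, among all states with a given covariance matrix, the Gaussian one minimizes the conditional von Neumann entropy---we get $S(B|A')_{\hat\rho}\ge S(B|A')_{\hat\rho^G}$. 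Hence the supremum restricts to zero-mean Gaussian inputs. For such an input the optimal reference is its canonical Gaussian purification (with at most as many modes as $A_k$ for the $k$th reference), so the supremum is over the covariance matrices $\{\gamma_k\}$ of the senders' modes; the energy constraints bound $\mathrm{Tr}\,\gamma_k$, so the $\gamma_k$ range over a compact set on which $I(A';B)$ depends continuously (energy-bounded von Neumann entropies are continuous), and the supremum is attained, giving Eq.~\eqref{CE_Gaussian_union}. Only the Gaussianity of $\calN$, not its phase insensitivity, actually enters.

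The main obstacle I anticipate is making the Wolf--Giedke--Cirac extremality rigorous in the present infinite-dimensional, energy-constrained setting: their proof goes through a truncation of Hilbert space and a limit of averaged copies, and one must check that $S(B)-S(B|A')$ is continuous along this limit---each of $S(B)$, $S(BA')$ and $S(A')$ is finite on energy-bounded inputs, but uniform control is required to interchange the limit with the entropies. A secondary, more mechanical point is the bookkeeping connecting $\hat\phi^G$ (which need not itself be pure) to a legitimate \emph{pure} product Gaussian input via a Gaussian purification with an enlarged reference, together with the attainability of the energy-constrained capacity, which relies on the polytope structure in Eq.~\eqref{eq:CeMAC_qinfo} and on known continuity and attainability results for energy-constrained entanglement-assisted capacities.
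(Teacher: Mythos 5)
Your scaffolding (removing the regularization via Theorem~\ref{theorem: EA_MAC_main_additivity}, reducing to zero mean by displacement invariance, and the purification bookkeeping) matches what is needed, but the core step is broken: the Gaussian extremality of the conditional entropy goes in the \emph{opposite} direction from what you assert. The conditional entropy is \emph{sub}-additive with respect to marginals, $S(X_1X_2|Y_1Y_2)\le S(X_1|Y_1)+S(X_2|Y_2)$ (chain rule plus strong subadditivity), it is additive on products and invariant under the Hadamard beamsplitter network, so the Wolf--Giedke--Cirac machinery yields $S(B|A')_{\hat\rho}\le S(B|A')_{\hat\rho^G}$ for a fixed covariance matrix --- the Gaussian state \emph{maximizes} the conditional entropy, it does not minimize it. A two-line counterexample to your claimed inequality: take $\hat\rho=\state{1}_B\otimes\state{0}_{A'}$, whose Gaussification is $\hat\rho^G=\hat\rho_{\rm th}(\bar n=1)_B\otimes\state{0}_{A'}$; then $S(B|A')_{\hat\rho}=0<g(1)=S(B|A')_{\hat\rho^G}$. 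Consequently, in the decomposition $I(A';B)=S(B)-S(B|A')$ your two bounds pull in opposite directions ($S(B)$ and $S(B|A')$ are both increased by Gaussification) and no conclusion follows; the same happens for $S(A')-S(A'|B)$ and for $S(A')+S(B)-S(A'B)$. Nor can you apply the extremality theorem to $I(A';B)$ directly as a functional of the output state: mutual information is neither sub- nor super-additive with respect to output marginals (swapped maximally entangled pairs violate one direction, a shared classical bit violates the other). Any term-by-term, output-side argument of this kind is a dead end.

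The paper's proof avoids this by applying the extremality lemma to $I_U$ viewed as a functional of the \emph{input} reduced state $\otimes_k\hat\phi_{A_k}$. The sub-additivity it needs is not sub-additivity of $I(A';B)$ in the output marginals but the Adami--Cerf-type channel inequality $I(R;Q_1'Q_2')\le I(RQ_2';Q_1')+I(RQ_1';Q_2')$ (Lemma~\ref{lemma:subadditivity}), which crucially exploits the purity of the input $AA'$ and the product structure of $\calN_1\otimes\calN_2$; combined with the commutation of the Hadamard network with $\calN^{\otimes n}$ (Lemma~\ref{lemma:invariance_Hadamard}) this gives $I_U(\calN,\{\hat\phi_{A_k}\})\le I_U(\calN,\{\hat\phi^G_{A_k}\})$ directly. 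If you want to salvage your route, you must prove a sub-additivity statement for the full functional with respect to the input, which is exactly the paper's Lemma~\ref{lemma:subadditivity}; there is no shortcut through separate entropic extremality of $S(B)$ and $S(B|A')$.
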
 

A sketch of the proofs of the main results can be found in Appendix~\ref{sec:proof_sketch}, while more details are presented in later parts of the appendices.

\section{Example: interference BGMACs and memory BGMACs}

\begin{figure}[tbp]
    \centering
    \includegraphics[width=0.3\textwidth]{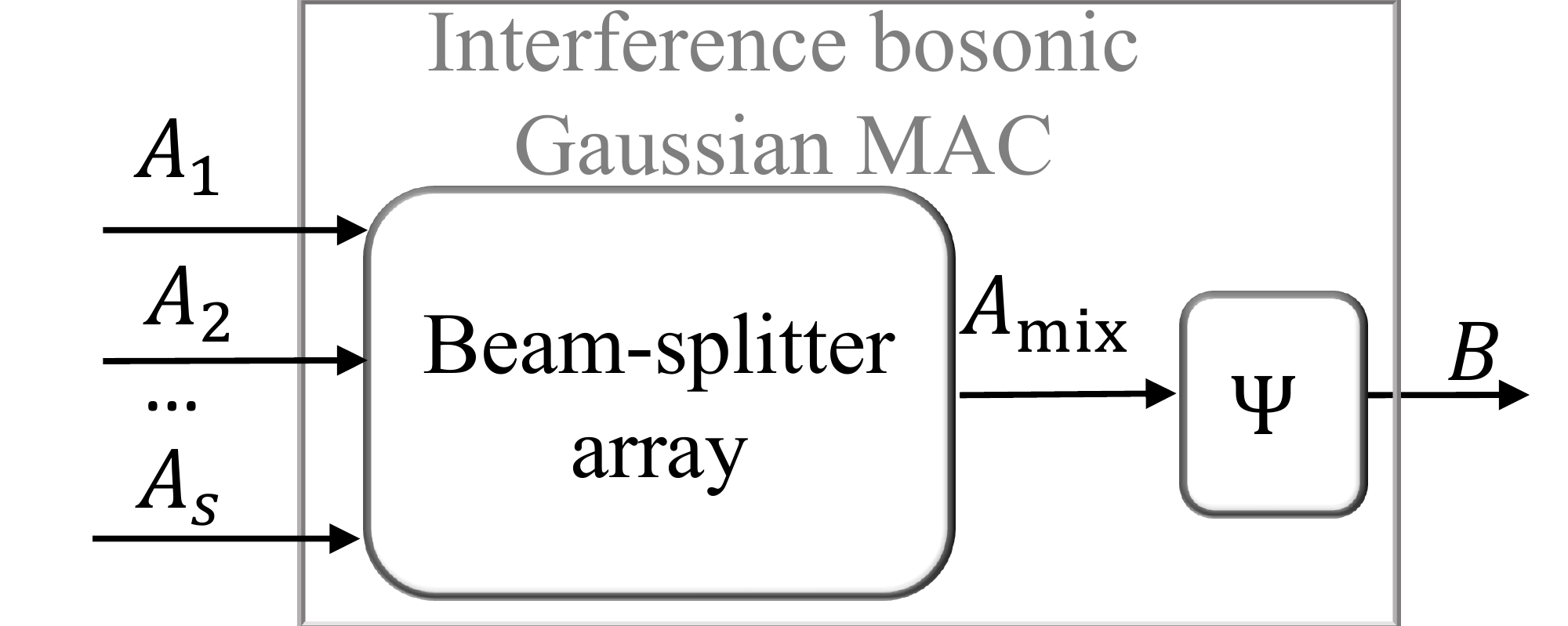}
    \caption{Schematic of interference bosonic Gaussian multiple-access channels. $\Psi$ is a single-mode BGC.}
    \label{fig:MACbosonic}
\end{figure}

So far we have derived the rate limits for both the unassisted and the EA communications over a phase-insensitive BGMAC. In this section, we evaluate the limits for interference BGMACs, and furthermore for its multi-mode version where memory effects are involved. We show that the product TMSV, with the proper additional passive linear operations accounting for any memory effects, achieves the total rate capacity. Compared with the coherent-state capacity region, the TMSV demonstrates a logarithmic EA advantage $\sim \log(1/N_{\rm S})$ in the overall signal brightness $N_{\rm S}=\sum_{k=1}^s N_{{\rm S},k}$, even when the memory effect is present.

\subsection{Interference-BGMAC}
\label{sec:interference_def}

Although our main theorems apply to general phase-insensitive BGMACs, in the numerical evaluation we will focus on the interference BGMACs as shown in Fig.~\ref{fig:MACbosonic}. For these channels, the input modes first interfere through a beamsplitter array then the mixed mode travels through a BGC. One can in general prove that as long as the BGMAC in Eq.~\eqref{eq:BGMAC_phaseinsensitive} is global covariant or contravariant ($\delta_k$'s are equal), it reduces to an interference BGMAC. Therefore, interference BGMACs form a fairly general class of BGMACs. Indeed, interference-BGMACs are relevant in applications, and all of the BGMACs considered in the literature are in this class~\cite{yen2005multiple,xu2013sequential,wilde2012explicitMAC,shi2021entanglement}.
Formally, the interference BGMAC $\calN$ is a concatenation 
\be 
\calN=\Psi\circ\calB
\label{eq:def_interf}
\ee
of an $s$-input-one-output beamsplitter $\calB$ and a single-mode BGC $\Psi$.
Upon the input modes $\hat{a}_{A_1}\cdots \hat{a}_{A_s}$ from the $s$ senders, the MAC $\calN$ first combines the modes through the beam-splitter $\calB$ to produce a mixture mode 
\be
\hat{a}_{A_{\rm mix}}=\sum_{k=1}^s \frac{w_k}{|\bm w|}\hat{a}_{A_k},
\label{eq:Amix}
\ee 
while all other ports of the beam-splitter array are discarded. Then the mixture mode goes through the single-mode BGC $\Psi$. For convenience, we introduce the power interference ratios $\{ \eta_k=|w_k|^2/|\bm w|^2\}_{k=1}^s$ that are normalized to unity. 
Interference BGMACs can be classified into four fundamental classes, depending on the four classes of the BGC $\Psi$ involved, as introduced in Sec.~\ref{sec:general_definition}.

For all the four classes of $\Psi$, we can respectively obtain the coherent-state capacity region and the outer bounds for the unassisted case.
Via the general result of Eq.~\eqref{eq:C_coh}, we have the coherent-state capacity region defined by
\bal  
C_{{\rm coh},J} =&g\left[\sum_{k\in J}|\bm w|^2 \eta_k N_{{\rm S},k}+N_{\rm B}\right]-g(N_{\rm B})\,,
\label{eq:C_coh_interference}
\eal 
for any $J\subseteq \{1,\ldots,U\}$.
Similarly, from Eqs.~\eqref{eq:C_outer}~\eqref{eq:C_outer_sum} for the covariant interference BGMACs and Eqs.~\eqref{eq:C_outer_contra}~\eqref{eq:C_outer_sum_contra} for the contravariant case, we have the outer bound
\begin{align} 
&R_k\le g[|\bm w|^2 N_{{\rm S},k}+N_{\rm B}]-g(N_{\rm B}), \, 1\le k\le s\,,
\label{eq:C_outer_interference}\\
&\sum_{k=1}^s R_k\le g\left(\sum_{k=1}^s \eta_k|\bm w|^2 N_{{\rm S},k}\!+\!N_{\rm B}\!\right)\!\!-\!g(N_{\rm B})\,.
\label{eq:C_outer_sum_interference}
\end{align}
The formulae for the two cases coincide in the form, owing to our definition using $\bm w$ and $N_{\rm B}$ that differs from Ref.~\cite{GiovannettiV2014}. The above agree with the results in Ref.~\cite{shi2021entanglement} for the $\delta=0$ case. The outer-bound region is a computation-friendly benchmark. Any EA protocol surpassing the outer-bound region possesses a provable advantage over all unassisted protocols.



\begin{figure}[t]
    \centering
    \includegraphics[width=0.45\textwidth]{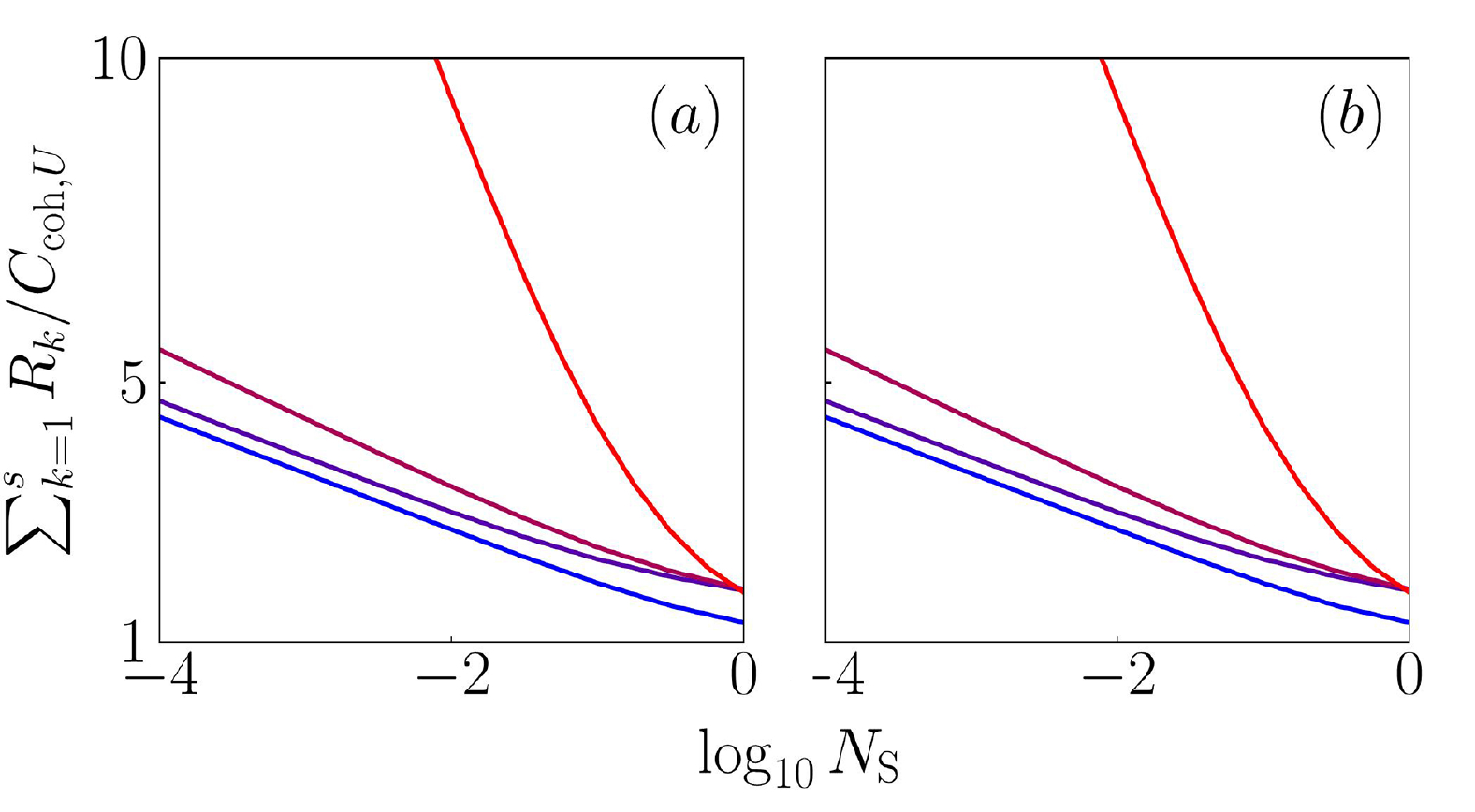}
    \caption{The logarithmic entanglement enhancement against signal brightness $N_{\rm S}$ in the total communication rate of an interference BGMAC $\calN$ defined in Eq.~\eqref{eq:def_interf}, over the unassisted coherent-state capacity $C_{{\rm coh}, U}$ Eq.~\eqref{eq:C_coh} which achieves the outer bound Eq.~\eqref{eq:C_outer} with $U=\{1,\ldots, s\}$.  Colored from blue to red for $\Psi$ being the thermal-loss channel, the AWGN channel, the amplifier channel, and the conjugate amplifier channel. (a) 2-sender case ($s=2$): signal brightness $N_{\rm S,1}=0.9N_{\rm S}, N_{\rm S,2}=0.1 N_{\rm S}$, interference ratio $\eta_1=0.9,\eta_2=0.1 $ (b) 3-sender case ($s=3$):  $N_{\rm S,1}=N_{\rm S}/2, N_{\rm S,2}= N_{\rm S}/3, N_{\rm S,3}= N_{\rm S}/6$, $\eta_1=1/2,\eta_2=1/3,\eta_3=1/6$. The gain/loss of the BGC component $\Psi$ is $|w|^2=0.1, 1, 1.1, 1.1$ with thermal noise $N_{\rm B}=0.1+\max\{(|w|^2-1)(1-\delta),0\}=0.1,0.1,0.2,0.1$ for the four cases respectively.
    \label{fig:EA_Rtot}
    }
\end{figure}

Similar to the unassisted case, we can also obtain the outer bound for the EA case from Eqs.~\eqref{eq:CE_outer}~\eqref{eq:CE_outer_sum} for the covariant cases, and Eqs.~\eqref{eq:CE_outer_contra}~\eqref{eq:CE_outer_sum_contra} for the contravariant case. Summarizing the results, we have
\begin{align}
R_k&\le C_{\rm E}(\delta,N_{{\rm S},k},|\bm w|^2,N_{\rm B}),\,1\le k\le s\,,
\label{eq:CE_outer_interference}\\
\sum_{k=1}^s R_k&\le C_{\rm E}(\delta,\sum_{k=1}^s \eta_kN_{{\rm S},k},|\bm w|^2,N_{\rm B})\,,
\label{eq:CE_outer_sum_interference}
\end{align}
where $\delta=0$ for the covariant case and $\delta=1$ for the contravariant case.
Although this bound is likely loose, it is easy to calculate since $C_{\rm E}$ is known explicitly in Eqs.~\eqref{eq:CE_BGC} and~\eqref{eq:CE_BGC_contra}. 

Besides the outer bounds, we can evaluate the ultimate capacity (of total rate) from Theorem~\ref{theorem: EA_MAC_main_TMSV}. We consider thermal-loss interference BGMACs as an example.
As shown in Fig.~\ref{fig:EA_Rtot}, we see a logarithmic EA advantage in the total communication rate, similar to the point-to-point communication in a thermal-loss channel~\cite{shi2020practical}. 
The EA advantage in (b) is slightly better than (a) because the brightness of the dominant sender (here sender 1) is lower, which enhances the EA advantage. 
To understand the influence of the beamsplitter ratio on the total rate, we consider the two-sender and three-sender cases in Fig.~\ref{fig:EA_Rtot_eta}. We find that the advantage's dependence on the beamsplitter ratio is weak, when the input energy is low. Therefore, the logarithmic EA advantage is robust for different BGMACs.

We also note that a gap emerges between the EA classical capacity and the bottleneck bound when the interference structure varies, as shown in Fig.~\ref{fig:EA_Rtot_eta}(a). Nevertheless, we find that the gap diminishes at least in the following two scenarios: (1) the BGMAC approaches a point-to-point BGC with $\eta\to 0$ or $1$; (2) the signal brightness distribution goes to homogeneous where the interference structure no longer influences the total rate. Since the former is described by a neat formula Eq.~\eqref{eq:C_outer_sum_interference}, it works as a fairly efficient benchmark in this regime. In conclusion, the infinite-fold EA advantage in point-to-point communication in noisy BGC extends to the multiple-access communication.

\begin{figure}[t]
    \centering
    \includegraphics[width=0.47\textwidth]{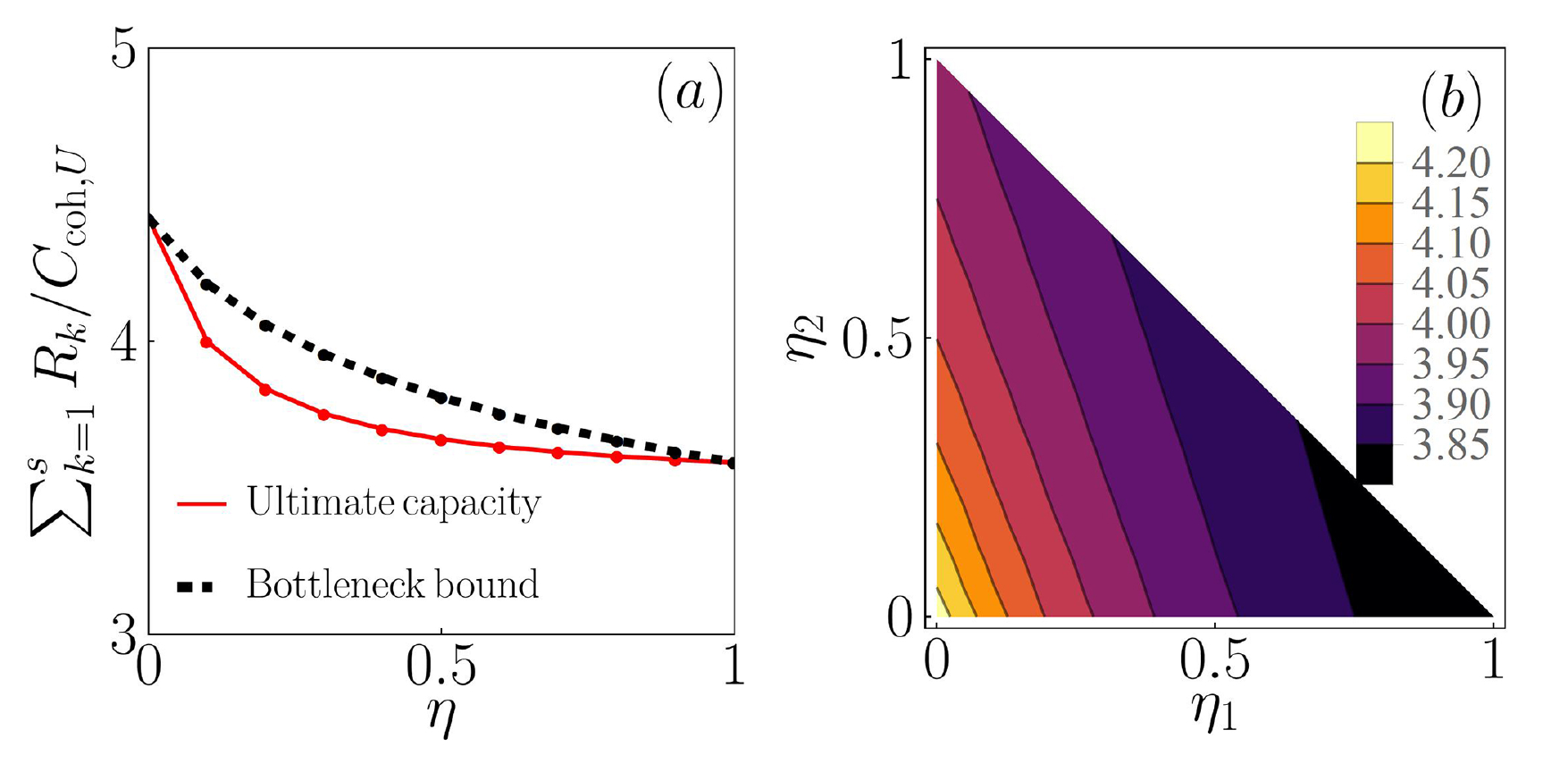}
    \caption{The total communication rate $R=\sum_{k=1}^sR_k$, normalized by the coherent-state capacity $C_{{\rm coh},U}$, $U=\{1,\ldots, s\}$, against the interference ratios of a thermal-loss interference BGMAC for (a) the 2-sender case ($s$=2) given the energy budget per sender $N_{S,1}=9\times 10^{-4}$, $N_{S,2}= 10^{-4}$; (b) the 3-sender case ($s$=3) given the energy budget per sender $N_{S,1}=1/2\times 10^{-3}$, $N_{S,2}=1/3\times 10^{-3}$, $N_{S,3}=1/6\times 10^{-3}$. The $\Psi$ component in each case is a thermal-loss BGC with $| w|^2=0.1,N_{\rm B}=0.1$. In (a) the EA bottleneck bound Eq.~\eqref{eq:CE_outer_sum_interference} is also presented for comparison. 
    \label{fig:EA_Rtot_eta} 
    }

\end{figure}


With the total rate understood, now we examine the one-shot capacity region with Gaussian encoding. Theorem~\ref{theorem: EA_MAC_main_TMSV} indicates that the capacity of total rate is achieved by a TMSV state, which is a special case of Gaussian states. In practice, Gaussian states are especially of interest since they are accessible with off-the-shelf devices. We evaluate the union of the rates over zero-mean Gaussian input states $\hat{\phi}_{AA^\prime}$. Due to the unitary degree of freedom in the EA $A^\prime$ that purifies the input system $A$, the only Gaussian state we need to consider is a product of squeezed TMSV states up to single-mode squeezing 
\be 
\ket{\phi\left(\bm r, \bm \theta\right)}_{AA^\prime}=
\otimes_{k=1}^s \hat{S}_{A_k}(r_k,\theta_k)\ket{\zeta(N_{S,k}^\prime)}_{A_kA^\prime_k},
\label{Gaussian_input}
\ee 
where the squeezing operator $\hat{S}(r,\theta)=\hat{R}(\theta)\hat{S}(r)$ is a concatenation of a single-mode squeezing operation with strength $r$ and a phase rotation of angle $\theta$. Mathematically,  $\hat{S}(r)=\exp\left(r\left(\hat{a}^2-\hat{a}^{\dagger 2}\right)/2\right)$, with $\hat{a}$ being the annihilation operation of the mode they act on.

Given the input in Eq.~\eqref{Gaussian_input}, the energy constraint of Eq.~\eqref{eq:energy_constraint} requires
\be 
N_{S,k}^\prime=\frac{4 N_{S,k} e^{2 r_k}+2 e^{2 r_k}-e^{4 r_k}-1}{2 \left(e^{4 r_k}+1\right)}\,.
\label{eq:Nsprime}
\ee 
The ovreall state $\hat{\phi}\left(\bm r, \bm \theta\right)$ is parameterized by two length-$s$ vectors $\bm \theta=[\theta_1,\cdots, \theta_s]^T$ and $\bm r=[r_1,\cdots, r_s]^T$. Therefore the one-shot Gaussian-state capacity region is given by the union
\be 
\calC^{(1)}_{\rm E,G}(\calN)= \bigcup_{\bm r, \bm \theta} \tilde \calC_{\rm E}\left(\calN,\hat{\phi}\left(\bm r, \bm \theta\right) \right),
\label{CG_formal}
\ee 
Due to the phase symmetry Eq.~\eqref{eq:phase_invariance}, the total number of parameters in the union is reduced to $2s-1$. This union can be numerically solved easily when $s$ is not too large.
Now we proceed to numerical solving the Gaussian capacity region of the two-sender case for the convenience of visualization.

\begin{figure}[t]
    \centering
    \includegraphics[width=0.5\textwidth]{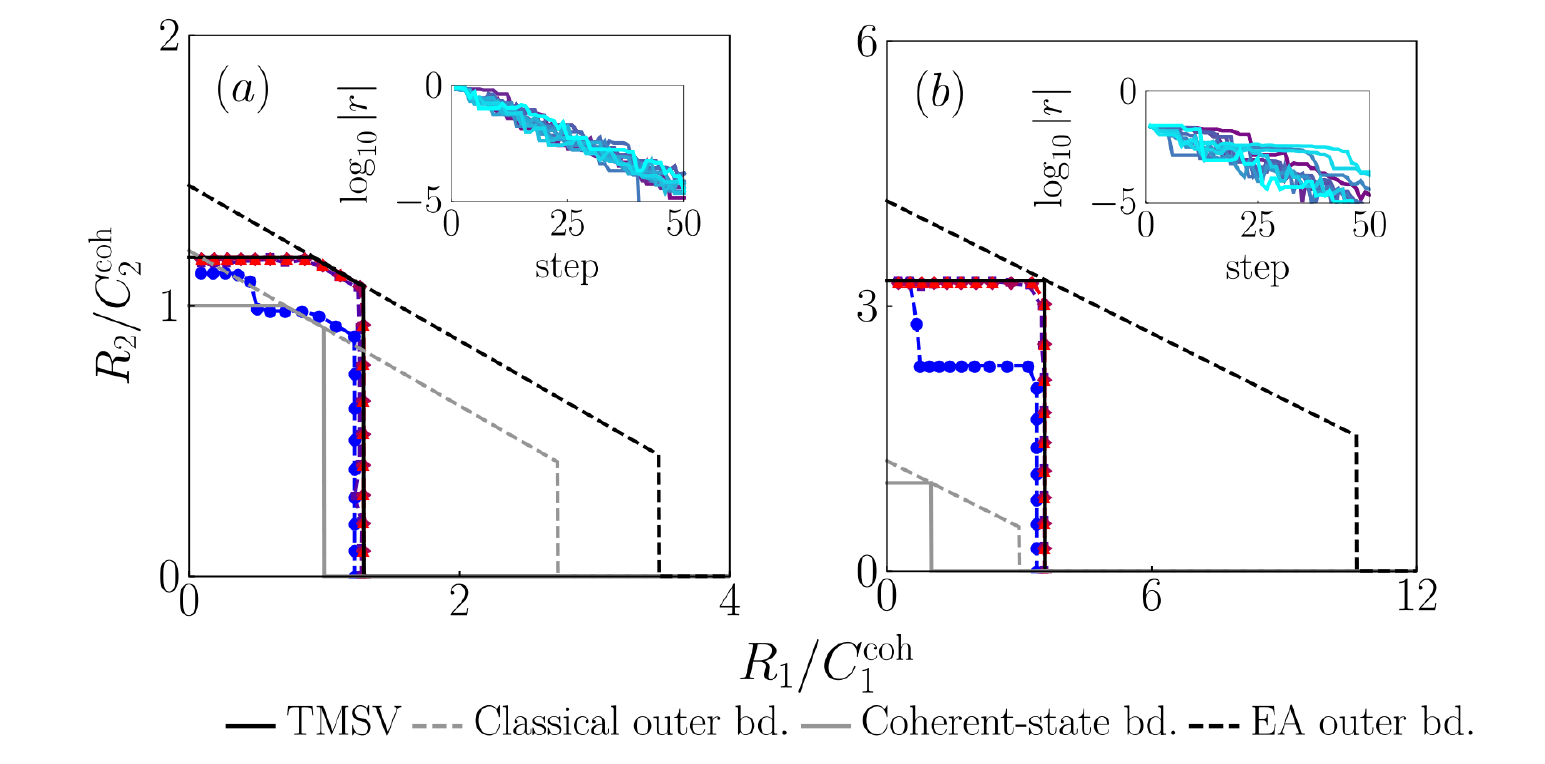}
    \caption{Optimization of the Gaussian-state rate regions of two-sender interference BGMACs with $\Psi$ being a thermal-loss channel. (a) Bright illumination $N_{{\rm S}, 1}=1,N_{{\rm S}, 2}=2$, interference ratio $\eta_1=1/3,\eta_2=2/3, |w|^2=0.1, N_{\rm B}=0.1$; (b) weak illumination $N_{{\rm S}, 1}=10^{-3},N_{{\rm S}, 2}=2\times 10^{-3}$, interference ratio $\eta_1=1/3,\eta_2=2/3, |w|^2=0.1, N_{\rm B}=0.1$.  Colored from blue to red at the $1,5,20,50$th progress steps of the numerical optimization. Insets: The evolution trend of $|r|=\sqrt{r_1^2+r_2^2}$ versus the progress step for different points on the boundary.  See Appendix~\ref{app:evalmethod} 
    for the evaluation method.}
    \label{fig:unionregion_main}
    
\end{figure}

For a thermal-loss MAC, Fig.~\ref{fig:unionregion_main} shows the evolution of the union region at the $1,5,20,50$th steps in the optimization over $(\bm r,\bm \theta)$ (see Appendix~\ref{app:evalmethod} 
for details).  We see that the union region approaches the TMSV region (black solid) as the numerical optimization proceeds. Indeed, the insets show that $(\bm r,\bm \theta)$ converges to $|\bm r|=0$ during the optimization, where $\bm \theta$ becomes irrelevant. Despite that the total rate of the TMSV region achieves the ultimate capacity, considering the individual rates, there is a appreciable gap between the TMSV region and the EA outer bound (black dashed), which implies further improvement surpassing the TMSV region is possible.
More details for the four classes including also the amplifier MAC, the conjugate amplifier MAC, and the AWGN MAC are shown in Appendix~\ref{app:supp_numerics}.

Above all, our numerical optimization shows that the TMSV state is the optimal Gaussian input in all the examined cases. Therefore, we conjecture that the optimal Gaussian input $\hat\phi$ of single-mode phase-insensitive BGMAC is an $s$-partite TMSV state, i.e.,
$ 
\calC^{(1)}_{\rm E,G}(\calN)= \tilde \calC_{\rm E}(\calN,\otimes_{k=1}^s\ket{\zeta(N_{S,k})}_{A_kA^\prime_k}).
$
Below we provide a proposition supporting the conjecture (See Appendix \ref{proof:prop_gaugeinv_local} 
for a detailed proof). 
\begin{proposition}
For a phase-insensitive BGMAC $\calN$, consider the $s$-partite TMSV state
$
\hat{\phi}_{AA^\prime}=\otimes_{k=1}^s\ket{\zeta(N_{S,k})}_{A_kA^\prime_k},
$ 
then $\tilde C_{{\rm E},J}(\calN,\hat{\phi}_{AA^\prime})$ in Eq.~\eqref{eq:CeMAC_qinfo} cannot be improved by squeezing any group of modes within $A[J]$. 
\label{proposition:gaugeinv_local}
\end{proposition}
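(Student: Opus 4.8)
The statement concerns a fixed $s$-partite TMSV input $\hat\phi_{AA'}=\otimes_k \ket{\zeta(N_{S,k})}_{A_kA_k'}$ and asks that the partial rate $\tilde C_{{\rm E},J}(\calN,\hat\phi)=I(A'[J];BA'[J^c])_{\hat\rho}$ not increase if we replace some subset of the modes $A_k$ ($k\in J$) by a squeezed-then-rotated TMSV, i.e.\ if we apply $\otimes_{k}\hat S_{A_k}(r_k,\theta_k)$ to a group $G\subseteq J$ of the senders. So the object to bound is $f(\bm r,\bm\theta)\equiv I(A'[J];BA'[J^c])$ evaluated on the Gaussian state $\ket{\phi(\bm r,\bm\theta)}$, as a function of the squeezing parameters $\{r_k,\theta_k\}_{k\in G}$ (with $r_k=0$ for $k\notin G$), and to show it is maximized at $\bm r_G=0$. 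The plan is: (i) reduce to a covariance-matrix computation for Gaussian states; (ii) use the conditional additivity noted in the text (entangling $A[J]$ across channel uses gives no gain when $A[J^c]$ is uncorrelated) to reduce the single-letter optimization to a genuine information-theoretic statement; and (iii) kill the dependence on $\bm\theta$ by the phase-insensitivity symmetry \eqref{eq:phase_invariance}, then handle $\bm r_G$ by a convexity/symmetrization argument.

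First I would set up the covariance matrix. Since $\calN$ is Gaussian and $\ket{\phi(\bm r,\bm\theta)}$ is a zero-mean Gaussian pure state, $\hat\rho_{BA'}$ is a zero-mean Gaussian state whose covariance matrix is an explicit function of $(\bm r,\bm\theta)$ through the symplectic action of $\otimes_{k\in G}\hat S_{A_k}(r_k,\theta_k)$ and of the channel's Bogoliubov map \eqref{eq:BGMAC_phaseinsensitive}. Then $I(A'[J];BA'[J^c])=S(A'[J])+S(BA'[J^c])-S(BA'[J^c]A'[J])$, and the last term is fixed: the global state $BA'$ is pure only if $J^c=\varnothing$; in general one uses purity of $\hat\rho_{EBA'}$ to write $S(BA')=S(E)$, which \emph{does} depend on the squeezing unless handled carefully. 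The cleaner route is to note $S(A'[J])=\sum_{k\in J}g(N_{S,k})$ is manifestly independent of $(\bm r,\bm\theta)$ (local squeezing on $A_k$ does not touch $A_k'$, and the reduced state of $A'[J]$ is the TMSV marginal), so maximizing $f$ is equivalent to maximizing $S(BA'[J^c])-S(EBA'[J^c])$, i.e.\ minimizing a conditional entropy $-S(E|BA'[J^c])$, i.e.\ maximizing $S(E|BA'[J^c])$. This is where I would invoke a minimum-output-entropy-type extremality: among all inputs with fixed energy per mode, the unsqueezed TMSV is extremal for exactly this conditional-entropy quantity. The key structural fact making this work is the phase symmetry: averaging the input over the phase rotations $\hat R(\bm\theta)$ guaranteed by \eqref{eq:phase_invariance} leaves $f$ unchanged if it is already phase-covariant, which forces the optimal $\bm\theta$ to be irrelevant and lets us restrict to $\theta_k\in\{0,\pi/2\}$ (squeezing aligned or anti-aligned with a fixed quadrature).

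Concretely, the argument I expect to carry the proposition: (1) Use \eqref{eq:MACindependence} and the conditional-additivity remark in the text to assert that $\tilde C_{{\rm E},J}(\calN,\hat\phi)$ with $\hat\phi$ having TMSV on $A[J^c]$ equals the $J$-partial EA capacity $C_{{\rm E},J}$ restricted to uncorrelated $A[J^c]$; so it suffices to show no single-letter squeezed Gaussian on $A[G]$ beats the TMSV for this quantity. (2) Write $f$ via the covariance matrix; show $\partial f/\partial r_k\big|_{\bm r=0}=0$ for each $k\in G$ using the phase symmetry (the first-order term is odd under $\theta_k\mapsto\theta_k+\pi/2$, which the symmetry identifies with a physically equivalent configuration, so it must vanish). (3) For the global (not just local) statement, use concavity of conditional entropy / the fact that $S(E|BA'[J^c])$ is concave in the input state together with the $\hat R(\bm\theta)$-symmetry to conclude the symmetric point $\bm r_G=0$ is the maximizer — this is the standard ``Gaussian extremality plus symmetrization'' template as in Refs.~\cite{wolf2006,Holevo2001}, adapted so that the symmetry group is the phase rotations rather than the full symplectic group.

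The main obstacle, I expect, is step (3): showing the maximum is \emph{globally} at $\bm r_G=0$ rather than merely a critical point. Vanishing first derivative plus the $\bm\theta$-symmetry does not by itself rule out the objective increasing for large $|\bm r|$ and then coming back; one really needs a concavity or monotonicity statement for $S(E|BA'[J^c])$ as a function of the added single-mode squeezing. The честный fix is to express $S(BA'[J^c])-S(EBA'[J^c])$ in terms of symplectic eigenvalues of the relevant covariance submatrices and show each relevant $g(\nu)$-combination is monotone in the ``squeezing mismatch'' — this is the calculation I would relegate to the appendix. An alternative that sidesteps the global-optimum subtlety: appeal directly to Theorem~\ref{theorem: EA_MAC_main_Gaussian}'s proof machinery (Gaussian optimality for the total rate) combined with a data-processing argument that adding local squeezing on $A[G]$ and then ``undoing'' its effect on the $J$-marginal is a degradation — but verifying that the requisite map is genuinely a channel on the relevant subsystem is itself delicate. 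So I would flag the monotonicity-in-squeezing lemma as the crux and prove it by direct symplectic-eigenvalue estimates.
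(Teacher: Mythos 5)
Your high-level template (phase symmetrization plus concavity plus Gaussian extremality) is the right one, and you correctly recognize that a vanishing first derivative at $\bm r=0$ cannot close the argument. But the crux you flag --- global optimality of $\bm r_G=0$ --- is exactly the part you leave unproven, and the route you propose to fill it (monotonicity of the relevant symplectic-eigenvalue combinations in the ``squeezing mismatch'') is neither carried out nor necessary. The paper closes the gap with no derivative or eigenvalue computation, via a three-step sandwich built on the auxiliary functional $f_J$ of Eq.~\eqref{eq:fJ_define}: (i) concavity of $f_J$ in the reduced state on $A[J]$ for fixed $A[J^c]$ (Lemma~\ref{lemma:fJconc}, proven by purifying a two-component mixture with a qubit flag and using data processing plus concavity of entropy); (ii) the two-point phase average $\tfrac12[\hat\phi'+\calR^s_{\pi/2}(\hat\phi')]$ of the squeezed input, which by the channel symmetry \eqref{eq:phase_invariance} leaves $f_J$ invariant termwise and whose covariance matrix is exactly thermal (the squeezing blocks cancel between $\theta$ and $\theta+\pi/2$, and the thermal $A[J^c]$ marginal is untouched), so concavity gives $f_J(\mbox{average})\ge f_J(\hat\phi')$; (iii) Gaussian extremality of $f_J$ for fixed Gaussian $A[J^c]$ (Lemma~\ref{lemma:newGaussopt}), which upgrades the non-Gaussian average to the Gaussian state with the same covariance matrix --- precisely the TMSV marginal --- giving $f_J(\mbox{TMSV})\ge f_J(\mbox{average})$. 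Step (iii) is where the conditional-additivity ingredient from your step (1) actually enters: the Wolf--Giedke--Cirac machinery requires subadditivity of $f_J$ across parallel channel uses, which holds only when $A[J^c]$ is uncorrelated between uses (Lemma~\ref{lemma:newsubadditivity}); it is not used to ``reduce to single-letter.''

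Two further concrete problems. First, your entropy bookkeeping is wrong: by purity of $EBA'$ one has $S(BA'[J^c]A'[J])=S(BA')=S(E)$, which \emph{does} vary with the squeezing, whereas $S(EBA'[J^c])=S(A'[J])$ is the constant term; you swapped the two, so your reduction of the objective to a conditional entropy of $E$ given $BA'[J^c]$ collapses to maximizing $S(BA'[J^c])$ alone and silently drops the $S(E)$ dependence. Second, under the fixed-energy convention of Eq.~\eqref{Gaussian_input} the seed TMSV photon number is renormalized to $N'_{S,k}$, so $S(A'[J])$ is not independent of $\bm r$ as you claim; the paper sidesteps this by comparing the squeezed state against the TMSV at the energy it actually consumes, $\cosh(2r_k)N_{S,k}+\sinh^2(r_k)$, rather than holding the seed energy fixed.
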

To prove that $\bm r=\bm 0$ is the optimum in Eq.~\eqref{CG_formal}, this proposition is one step away from the exact proof: it is still unclear whether squeezing modes in both $A[J]$ and $A[J^c]$ at the same time improves the rate or not.


\subsection{Memory interference BGMAC}
\label{sec:memory_example}

\begin{figure}[t]
    \centering
    \includegraphics[width=0.45\textwidth]{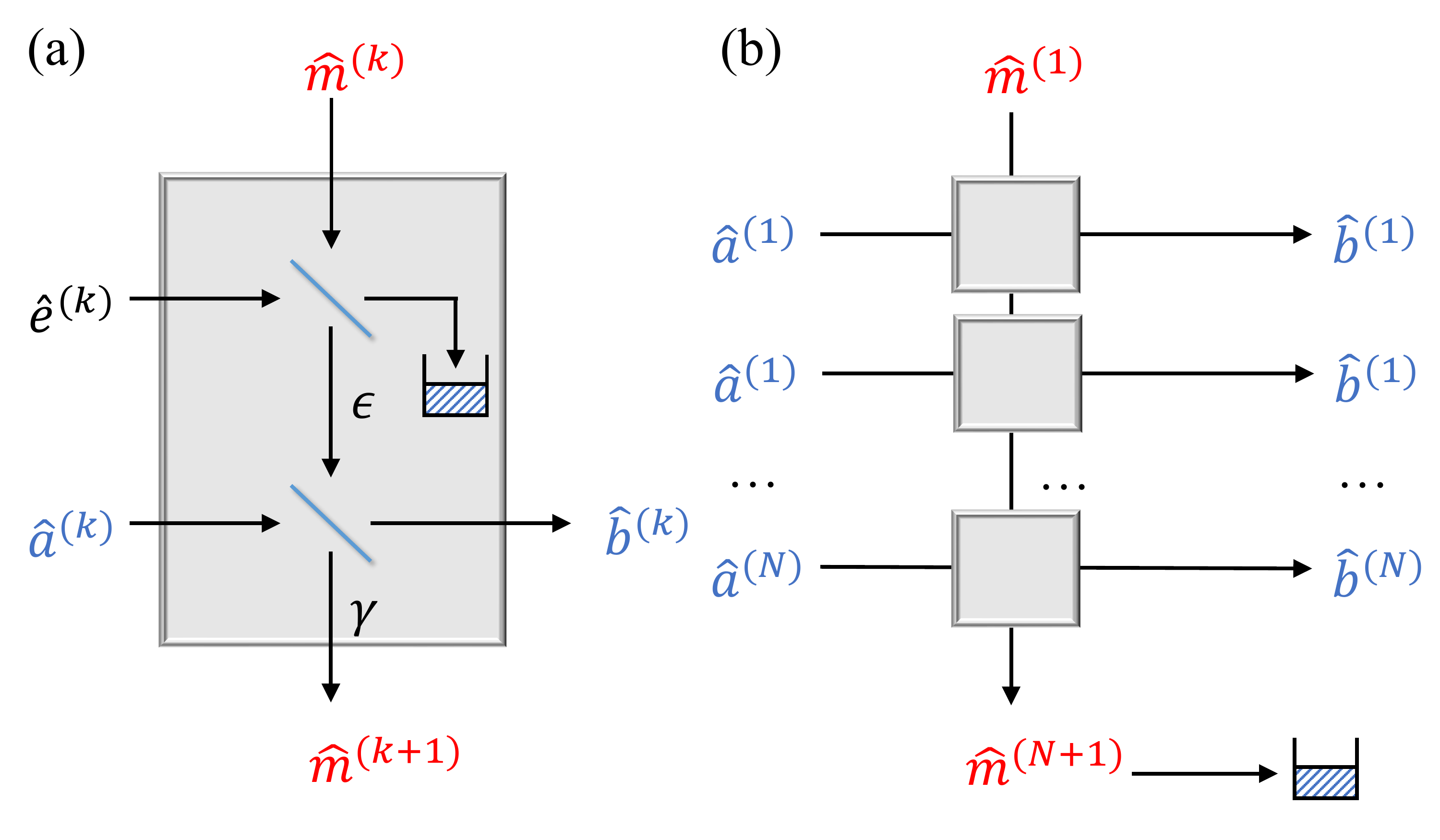}
    \caption{(a) The $k$th use of the causal memory thermal-loss channel; (b) overall representation of the $N$-fold causal memory thermal-loss channel, where the memory is inaccessible for the two communicating parties. The photon statistics at output mode $\hat b^{(k)}$ only depends on the input modes $\hat a^{(k')}$ with $k'\le k$ due to the causality.}
    \label{fig:causalmemory}
\end{figure}

In this section, we further address memory effects in interference BGMACs. To begin with, we review memory effects in BGCs. The model of memoryless BGCs assumes that the input modes from different channel uses suffer from independent noises, for example when the pulses are well-separated in a temporal sequence. In this case, for $N$ channel uses the noise environment $E$ can be decomposed into independent and identically distributed local environment modes $E^{(1)}, E^{(2)},\ldots, E^{(N)}$, which are associated with annihilation operators $\hat e^{(1)},\hat e^{(2)},\ldots,\hat e^{(N)}$. However, as the clock rate of communication increases, the signal pulses become dense in time and memory effects can result from unexpected overlaps between the input modes or interference due to the finite relaxation time of the local environment modes. For successive uses of the channel, the environment mode of a specific channel use can be correlated with input modes from previous channel uses. Such causal memory effect has been formulated in Refs.~\cite{kretschmann2005quantum,lupo2010capacities}. 

For a bosonic point-to-point thermal-loss channel, the causal memory effect can be modeled via successively connecting all of the environment modes $\{\hat e^{(k)}\}_{k=1}^N$ with a memory mode $\hat m$~\cite{lupo2010capacities}.
As shown in Fig.~\ref{fig:causalmemory}, an $N$-fold causal thermal-loss memory channel $\bm\Psi_{[N]}$ fulfills a product of identical unitary transforms $\hat U_1,\hat U_2,\ldots, \hat U_N$ with $\hat U_k$ being the interaction between the $k$th input mode, the environment mode $\hat{e}^{(k)}$ and the memory mode $\hat m^{(k)}$. Explicitly, we have
\be 
\bm\Psi_{[N]}(\hat{\bm\rho})= \hat{\bm U}_{[N]}(\hat{\bm\rho}\otimes \hat{\bm\sigma}_E)\hat {\bm U}_{[N]}^\dagger,
\ee
where $\hat{\bm \rho}$ is the channel input state and $\hat {\bm U}_{[N]}=\hat U_N\hat U_{N-1}\ldots \hat U_1$. Here $\hat{\bm \sigma}_E$ is the environment state of the memory mode $\hat m$ and all the local environment modes $\{\hat e^{(k)}\}_{k=1}^N$. Concretely, the evolution $\hat U_k$ of the $k$th memory mode and input mode $\hat a^{(k)}$ is modeled by two beamsplitters as shown in Fig.~\ref{fig:causalmemory}(a). The first beamsplitter couples the noise $\hat{e}^{(k)}$ and the memory $\hat m^{(k)}$ with transmissivity $\epsilon$; while the second beamsplitter couples the signal $\hat a^{(k)}$ and the memory $\hat m$ with transmissivity $\gamma$. The overall Bogoliubov transform is
\bal 
&\hat m^{(k+1)}=\sqrt{\epsilon\gamma}\,\hat m^{(k)}+\sqrt{1-\gamma}\,\hat a^{(k)}+\sqrt{\gamma(1-\epsilon)}\,\hat e^{(k)},
\\
&\hat b^{(k)}=-\sqrt{\epsilon(1-\gamma)}\,\hat m^{(k)}+\sqrt{\gamma} \,\hat a^{(k)}-\!\sqrt{(1-\epsilon)(1-\gamma)}\,\hat e^{(k)},
\label{eq:causalmemory_loss}
\eal 
where $\{\hat b^{(k)}\}_{k=1}^N$ are the output modes of the thermal-loss memory channel $\bm\Psi_{[N]}$.
The overall map of the $N$-fold causal memory BGC is fulfilled by iterating the Bogoliubov transform such as Eq.~\eqref{eq:causalmemory_loss} for $N$ times. Assuming the sender has no access to the memory mode, we initialize it the same as the environment noise modes. 

\begin{figure}[t]
     \centering
    \includegraphics[width=0.45\textwidth]{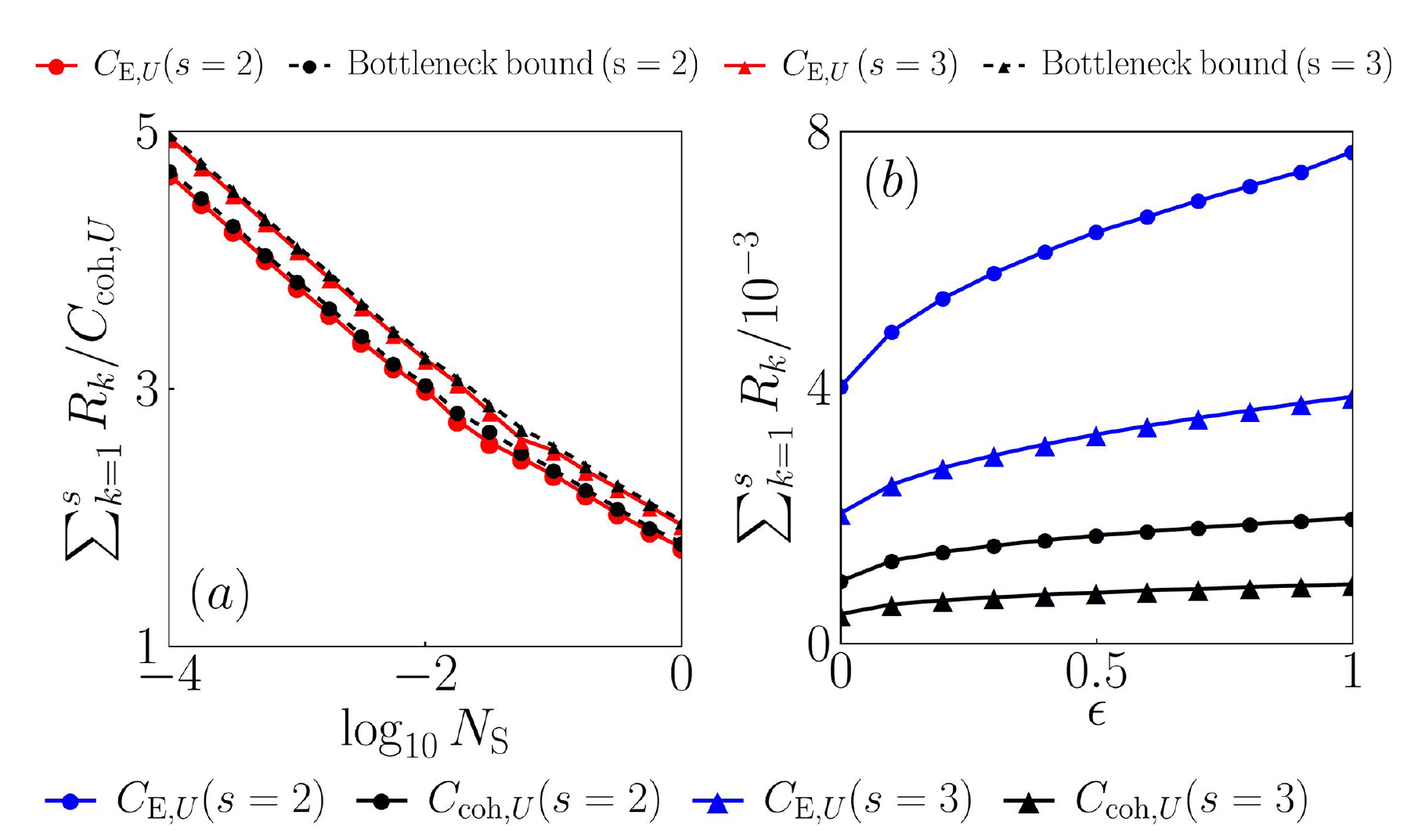}

    \caption{(a) The ultimate EA capacity (red solid) and the EA bottleneck bound (black dashed) of total rate $\sum_{k=1}^s R_k$ versus signal brightness $N_{\rm S}$ in a three-fold causal memory BGMAC $\bm\calN_{[3]}$ in the form of Eq.~\eqref{eq:def_memoryinterf}. Normalized by the coherent-state capacity $\sum_{d=1}^3 C_{{\rm coh}, U}(\calN_d)$, $U=\{1,\ldots,s\}$, where $\{\calN_d\}_{d=1}^3$ are the unravelled single-mode BGMACs. For the memory BGC component $\bm \Psi_{[N]}$, the attenuations are $\epsilon=1/2$ for the noise-memory coupling with thermal noise $N_{\rm B}=0.1$, and $\gamma=1/2$ for the signal-memory coupling. (b) The ultimate EA capacity (blue solid) and the coherent-state capacity (black solid) of total rate versus the attenuation parameter $\epsilon$ of noise-memory coupling. Dots: 2-sender case ($s=2$), total signal brightness budget per sender $N_{{\rm S},1}=0.9N_{{\rm S}}, N_{{\rm S},2}=0.1N_{{\rm S}}$, interference ratio $\eta_1=0.9,\eta_2=0.1$; triangles: 3-sender case, $N_{{\rm S},1}=N_{{\rm S}}/2$, $N_{{\rm S},2}=N_{{\rm S}}/3$, $N_{{\rm S},3}=N_{{\rm S}}/6$, $\eta_1=1/2, \eta_2=1/3, \eta_3=1/6$. 
    \label{fig:memory_evaluation} 
    }
\end{figure}

Finally, we define the $N$-fold causal memory interference BGMAC $\bm\calN_{[N]}$ as a concatenation of $N$ uses of beamsplitters and an $N$-fold causal memory thermal-loss channel $\bm\Psi_{[N]}$, namely 
\be 
\bm\calN_{[N]}=\bm\Psi_{[N]}\circ\calB^{\otimes N}\,.
\label{eq:def_memoryinterf}
\ee 
From Eq.~\eqref{eq:def_memoryinterf}, we can identify the memory interference BGMAC as a multi-mode BGMAC defined in Appendix~\ref{sec:Methods_multimode}.

Now we solve the ultimate total rate capacity of the causal memory interference BGMAC. The memory channel $\bm \Psi_{[N]}$ admits a decomposition $\bm \Psi_{[N]}=\bm \calU  \left(\otimes_{d=1}^N \Psi_d\right) \bm \calV$, where $\bm \calU$ and $\bm \calV$ are passive linear optics unitaries over the input modes and the output modes respectively~\cite{lupo2010capacities}. Here $\Psi_d$ is a single-mode BGC. Therefore, we have
\begin{align}
\bm\calN_{[N]}&=\left[\bm \calU  \left(\otimes_{d=1}^N \Psi_d\right) \bm \calV\right] \circ\calB^{\otimes N}
\\
&=\bm \calU  \left[\otimes_{d=1}^N \calN_d\right] \left(\otimes_{k=1}^s \bm \calV_k \right),
\end{align}
where we defined single-mode BGMACs $\calN_d=\Psi_d \circ\calB$ and $\calV_k$ is a unitary acting on each sender $k$. The second step following from commuting the passive linear optics unitary $\calV$ with the beamsplitters $\calB$, as detailed in Appendix~\ref{sec:Methods_unravel}. 
Now, because the passive linear optics unitary $\bm \calV_k$ conserves energy and unitaries on each individual senders or only on the receiver do not change the capacity, we have the EA total rate
\be 
C_{{\rm E},U}(\bm\calN_{[N]})=C_{{\rm E},U}(\otimes_{d=1}^N\calN_d)=  \sum_{d=1}^N C_{{\rm E},U}(\calN_d),
\ee 
where the last step is due to subadditivity (Lemma~\ref{lemma:subadditivity} of Appendix~\ref{app:Theorem1}). 
Note here each capacity $C_{{\rm E},U}(\calN_d)$ has the input energy constraint $\{N_{{\rm S},k,d}\}_{k=1}^s$ for all $s$ senders individually, which complies with the initial energy constraint $\{N_{{\rm S},k}\}_{k=1}^s$ for the $s$ senders on BGMAC $\bm\calN_{[N]}$. Namely, $\sum_{d=1}^{N}N_{{\rm S},k,d}\le N_{{\rm S},k}$ for any $1\le k\le s$. As the senders can choose an arbitrary distribution of the energy to each channel $\calN_d$, the overall EA capacity of the total rate $C_{{\rm E},U}(\bm\calN_{[N]})$ is given by an optimization over the energy constraints. The optimization can be evaluated efficiently, as each capacity $C_{{\rm E},U}(\calN_d)$ is solved by Theorem~\ref{theorem: EA_MAC_main_TMSV}, given the energy constraint $\{N_{{\rm S},k,d}\}_{k=1}^s$.

Now we evaluate the ultimate capacity of total rate over a memory thermal-loss BGMAC $\bm\calN_{[3]}$ for both 2-sender and 3-sender cases as examples.
In Fig.~\ref{fig:memory_evaluation}(a), we plot the the ultimate capacity of total rate, numerically optimized over $\{\bm N_{{\rm S},k}\}_{k=1}^s$, against $N_{\rm S}$. Meanwhile, we also optimize the coherent-state capacity $\sum_{d=1}^3 C_{{\rm coh}, U}(\calN_d)$ defined by Eq.~\eqref{eq:C_coh} as the classical benchmark, and the EA bottleneck bound $\sum_{d=1}^3 C_{\rm E}(\sum_{k=1}^s \eta_kN_{{\rm S},k},\Psi_d)$ using Eq.~\eqref{eq:CE_BGC} for comparison. As shown in Fig.~\ref{fig:memory_evaluation}(a), we see that the logarithmic EA advantage still holds. In these cases, the outer bounds simulates the ultimate EA capacity well. Similar to the memoryless case, there emerges a gap when the interference structure varies.
We note that numerically solving the whole capacity region is hard, as the additivity of the partial rates within sender set $J\neq U$ is unknown. To understand the influence of memory effects, we also vary the parameter $\epsilon$ that characterizes the strength of the memory effect in Fig.~\ref{fig:memory_evaluation}(b). We see that the total rate increases with a stronger memory effect for both the EA case and without EA. Indeed, encodings that adapt to the memory effect will benefit the communication rate.

\section{Conclusion and discussion}
We have presented three main results regarding the total EA communication rate over MACs. First, we show that it is additive for all MACs; second, we provide the formula of the ultimate EA capacity in a BGMAC; third, we show that the optimal input state of phase-insensitive BGMACs under energy constraint being a zero-mean Gaussian state. Specifically, for phase-insensitive single-mode BGMACs, the optimum EA input state is an $s$-partite TMSV state. 
Meanwhile, we have generalized the outer bounds of unassisted interference channel~\cite{yen2005multiple} to general unassisted phase-insensitive BGMACs. Equipped with our formula, numerical results show that the one-shot Gaussian-state capacity region has been sufficient to outperform the outer bounds of the unassisted protocols. 

The additivity of the whole capacity region of phase-insensitive BGMACs is still an open question. In order to prove additivity, one possible approach is to first show that the capacity of the partial rate in each possible $J$ is additive; then prove that all the partial rate capacities are simultaneously achieved by the same state on the boundaries of the capacity region. As we have shown the total rate is additive and achieved by a product of TMSV for the single-mode case, the optimal state for partial rate capacities is likely to be TMSV if the region additivity is true; on the other hand, it is also possible that superadditivity can be constructed to disprove additivity of the capacity region. By analogy with the minimum output entropy conjecture~\cite{guha2008entropy}, we propose the following conjecture to facilitate the future study of the additivity problem, which immediately leads to the optimality of TMSV and thereby the additivity of the ultimate capacity region in a single-mode thermal-loss interference BGMACs (see Appendix~\ref{app:connect} for details).

\begin{conjecture}[EA minimum entropy conjecture]
Let
$\hat{\bm a}$ be an $ns$-dimensional vector of annihilation operators acting on the composite input system $\bm A=\otimes_{\ell=1}^n(\otimes_{k=1}^s A_k^{(\ell)})$ and $\hat{\bm\xi}$ be the vector of noise operators (not necessarily canonical) from the environment system $\bm E=\otimes_{\ell=1}^n E^{(\ell)}$. The input satisfies the energy constraint in Eq.~\eqref{eq:energy_constraint} on average per channel use. Define a reference system $\bm A'=\otimes_{\ell=1}^n A^{\prime (\ell)}$ that purifies each $\otimes_{\ell=1}^nA_k^{(\ell)}$. The joint density operator is in a product state $\hat\phi_{\bm A\bm A'}\otimes \hat \sigma_{\bm E}$,
where $\hat\sigma_{\bm E} =\otimes_{\ell=1}^n \hat \sigma_{E}^{(\ell)}$
with each $\hat\sigma_{E}^{(\ell)}$
being a thermal state given a known average photon number.
At the receiver side, define an $n$-dimensional vector of annihilation operators
$\hat{\bm c}$ for the output system $\bm B=\otimes_{\ell=1}^nB^{(\ell)}$ from $n$ channel uses of Eq.~\eqref{eq:def_interf} as
$
\hat c_B^{(\ell)}=\left(\sum_{k=1}^s w_k\hat a_{A_k}^{(\ell)}\right)+\hat\xi_E^{(\ell)}, \, 1\le \ell\le n\,,
$
while the reference system is unchanged.
Denote the joint
density operator of output as $\hat\rho_{\bm B\bm A'}$. Given the energy constraint, the conjecture states that choosing $\hat\phi_{\bm A\bm A'}$ to be the $n$-mode TMSV
state minimizes the conditional von Neumann entropy $S(\bm A'[J]|\bm B\bm A'[J^c])_{\hat\rho}$ for any $J\subseteq \{1,2,\ldots,s\}$, where $\bm A'[J]=\otimes_{\ell=1}^n(\otimes_{k\in J}A_k^{(\ell)})$.
\label{conjecture: min_ent_conj}
\end{conjecture}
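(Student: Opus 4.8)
\noindent\emph{Proof strategy.} The plan is to follow the three-layer route that resolved the minimum-output-entropy conjecture~\cite{guha2008entropy}: a Gaussian-extremality step that removes the optimization over non-Gaussian inputs, a symmetrization plus subadditivity step that collapses the $n$-use multi-sender problem to one mode per sender and one channel use, and finally an explicit symplectic computation showing that all residual squeezing vanishes at the optimum.

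First I would purify the thermal environment $\hat\sigma_{\bm E}$ by an ancilla $\bm E''$ (a product of two-mode squeezed vacua) and dilate the $n$ uses of Eq.~\eqref{eq:def_interf} into a passive interferometer followed by beam-splitter/two-mode-squeezer unitaries, producing the output $\bm B$ together with a leaked environment $\bm F$, so that the global state on $\bm B\bm F\bm E''\bm A'$ is pure. Taking complements then gives $S(\bm A'[J]\mid\bm B\bm A'[J^c])_{\hat\rho}=S(\bm F\bm E'')_{\hat\rho}-S(\bm F\bm E''\bm A'[J])_{\hat\rho}=-S(\bm A'[J]\mid\bm F\bm E'')_{\hat\rho}$, so that minimizing the target functional is the same as maximizing the information the $J$-reference retains once the receiver is traced out, i.e.\ suppressing the leakage of $\bm A'[J]$ into the environment. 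Since this rewritten functional is continuous and invariant under local unitaries, I would then invoke a Gaussian-extremality argument of the Wolf--Giovannetti--Cirac type~\cite{wolf2006}, the same mechanism that underlies Theorem~\ref{theorem: EA_MAC_main_Gaussian}, to conclude that at fixed input covariance matrix a Gaussian $\hat\phi_{\bm A\bm A'}$ extremizes it in the direction needed, reducing the problem to Gaussian inputs. Checking that the extremality runs the correct way for a \emph{conditional} entropy of this form, rather than for a mutual information or an output entropy, is the first point where a new lemma may be needed.

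Second, using the phase-insensitivity Eq.~\eqref{eq:phase_invariance} I would average a candidate Gaussian input over the output phase rotation and restrict to phase-covariant Gaussian states; because the channel and the per-sender energy constraints act identically and independently on the $n$ uses, strong subadditivity applied block by block, together with concavity of the single-use minimum in the photon numbers $\{N_{{\rm S},k}\}$ (exactly the ingredient behind the memory-channel identity $C_{{\rm E},U}(\bm\calN_{[N]})=\sum_d C_{{\rm E},U}(\calN_d)$), shows that the optimal Gaussian input is a tensor product over channel uses. This reduces the conjecture to the single-use, one-mode-per-sender statement, where by the paper's own reduction the admissible states are precisely the squeezed-TMSV family $\ket{\phi(\bm r,\bm\theta)}$ of Eq.~\eqref{Gaussian_input}.

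Finally I would evaluate $S(A'[J]\mid BA'[J^c])$ for $\ket{\phi(\bm r,\bm\theta)}$ from the $2(s+1)\times 2(s+1)$ covariance matrix of the $s$ reference modes and the single output mode, and argue that it is nondecreasing in each $|r_k|$, placing the minimum at $\bm r=\bm 0$ where the state is the $s$-partite TMSV; by subadditivity across uses this then yields the $n$-mode statement. Proposition~\ref{proposition:gaugeinv_local} already gives this when only modes inside $A[J]$ are squeezed, so the remaining and, I expect, \emph{hardest} part is squeezing modes in $A[J]$ and $A[J^c]$ at the same time: the squeezing on the $J^c$ side injects signal--idler cross-correlations that propagate through the shared output mode $\hat c$ back into the conditioning system $A'[J^c]$, and the gauge/symmetry argument that treats one side at a time no longer decouples the two partitions. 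A route that sidesteps both the extremality direction and this coupling would be to bound $S(\bm B)$ and the conditional entropies directly with a multimode \emph{conditional} quantum entropy-power inequality for the beam-splitter/amplifier map, whose equality case is exactly the TMSV; the obstruction there is that a conditional entropy-power inequality strong enough to carry the side information $\bm A'[J^c]$ through the bound does not yet appear to be available.
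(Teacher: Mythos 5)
This statement is Conjecture~\ref{conjecture: min_ent_conj}, which the paper explicitly leaves open (``The proof or disproof of this conjecture will be an important future direction''), so there is no proof in the paper to compare against; your text is likewise a strategy outline rather than a proof, and by your own admission its two hardest steps are unresolved. The decisive gap is the very first load-bearing step: to invoke Wolf--Giedke--Cirac extremality~\cite{wolf2006} for the functional $S(\bm A'[J]\mid\bm B\bm A'[J^c])$ (equivalently, after your purification, $-S(\bm A'[J]\mid\bm F\bm E'')$), you need this functional to be super-additive under tensor products of the input \emph{reduced} state, with the single-use marginals fixed. That super-additivity is exactly what fails here, for the reason the paper itself isolates in Appendix~C when explaining why Theorem~\ref{theorem: EA_MAC_main_additivity} does not extend to partial rates: the reference system $\bm A'[J]$ purifies $\otimes_{\ell=1}^n A_k^{(\ell)}$ \emph{jointly}, so when the senders entangle their inputs across channel uses it is not divisible into one subsystem per use, and the block-by-block strong-subadditivity argument you propose (``applied block by block'') has no per-use blocks to act on. The authors attempted precisely your complementarity route ($S(\bm A'\mid\bm B)=S(\bm A'\mid\bm E)$ by purity, then subadditivity of conditional entropy over $E_1E_2$) and report that it fails because the purifying reference is generically inseparable across uses; this obstruction is not a technicality to be checked but is essentially equivalent to the conjecture itself. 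The same indivisibility also undercuts your single-letterization step, which silently assumes the optimal input factorizes over uses.

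Your remaining steps inherit further gaps that you partly flag yourself: Proposition~\ref{proposition:gaugeinv_local} controls squeezing only within $A[J]$ (or, by the symmetric argument, only within $A[J^c]$), and the simultaneous-squeezing case is open because the phase-averaging trick fixes one partition at a time while the shared output mode couples the two; and the conditional entropy-power inequality you would fall back on does not exist in the required strength. So the proposal does not establish the statement: it reassembles the known ingredients (Gaussian extremality, Hadamard symmetrization, Lemma~\ref{lemma:subadditivity}-type subadditivity, Proposition~\ref{proposition:gaugeinv_local}) around the same unproven core --- super-additivity of the conditional entropy with an indivisible reference --- that makes this a conjecture rather than a theorem. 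If you want to contribute here, the honest target is that single missing lemma, not the surrounding scaffolding.
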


The proof or disproof of this conjecture will be an important future direction to understand the role of entanglement in network communication.

\begin{acknowledgements}
This project is supported by the National Science Foundation (NSF) Engineering Research Center for Quantum Networks Grant No. 1941583 and Defense Advanced Research Projects Agency (DARPA) under Young Faculty Award (YFA) Grant No. N660012014029. Q.Z. also acknowledges Craig M. Berge Dean’s Faculty Fellowship of University of Arizona. The authors thank Min-Hsiu Hsieh, Graeme Smith
and Saikat Guha for helpful discussions, and Patrick Hayden for his insightful question that leads to our formulation of Conjecture 1. 
\end{acknowledgements}


%

\appendix

\section{Multi-mode generalization }
\label{sec:Methods_multimode} 

Compared to the single mode case of Eq.~\eqref{eq:BGMAC_phaseinsensitive}, a multi-mode BGMAC allows the output $B$ to contain $N$ modes $\{\hat{a}_{B,d}\}_{d=1}^N$ and each input system $A_k$ to contain $N_k$ modes $\{\hat a_{A_k, n_k}\}_{n_k=1}^{N_k}$. Each output mode is given by the input-output relation
\bal
\hat{a}_{B,d}\!=\!&\left[\sum_{k=1}^s \sum_{n_k=1}^{N_k} \!\! w_{d,h}\!\! \left(\left(1-\delta_{d,h}\right)\hat a_{A_k, n_k}\!\!+\!\delta_{d,h}\hat a_{A_k, n_k}^\dagger \right)\right]\!\!+\!
\hat \xi_d\,,\\
\label{eq:BGMAC_phaseinsensitive_multimode}
\eal
for $1\le d\le N$, where $h(1,n_k)=n_k$ and $h(k,n_k)\equiv \sum_{j=1}^{k-1}N_j+ n_k$ for $k\ge2$ is a way to index all modes from $1$ to $\sum_{k=1}^s N_k$. 
Here the $k$th sender has access to an $N_k$-mode system $A_k$; the receiver has access to an $N$-mode system $B$; the $N$ noise modes $\{\hat\xi_d\}_{d=1}^N$ are mutually independent. It is worth pointing out that when $N=1$ and each sender has the same type of symmetry among the modes ( $\delta_{h(k,n_k)}=\delta_k$ for any $1\le n_k\le N_k$, $\delta_k=1$ for all contravariant and $\delta_k=0$ for all covariant), the channel reduces to a single-mode case subject to Eq.~\eqref{eq:BGMAC_phaseinsensitive} by combining the $N_k$ modes into one effective mode.

Similar to the single-mode case, we apply a constraint on the total mean photon number of the signal modes per sender
\be 
\sum_{n_k=1}^{N_k}\expval{\hat a_{A_k, n_k}^\dagger \hat a_{A_k, n_k}}=N_{S,k}, 1\le k \le s.
\label{eq:energy_constraint_multi_mode}
\ee

Below we address the capacity region of the multi-mode BGMAC. As Theorem~\ref{theorem: EA_MAC_main_additivity} still applies, we are able to derive Theorem~\ref{theorem: EA_MAC_main_Gaussian} and reduce the total rate evaluation to Gaussian input states. Due to the multi-mode nature, a complete characterization of the Gaussian states involves multi-mode Gaussian unitary operations, which allow the correlations over the modes within each sender. Similar to Proposition~\ref{proposition:gaugeinv_local}, we can prove the following (see Appendix~\ref{proof_multi_mode}).
\begin{proposition}
For a phase-insensitive multi-mode memory BGMAC $\calN$ with $N_k$ modes input from the $k$th sender, consider the $s$-partite TMSV
\be 
\hat{\phi}_{AA^\prime}=\otimes_{k=1}^s\left[\otimes_{n_k=1}^{N_k}\ket{\zeta(N_{S,k}^{n_k})}\right]_{A_kA^{\prime}_k}\,,
\ee 
that satisfies the energy constraint $\sum_{n_k=1}^{N_k}N_{S,k}^{n_k}=N_{S,k}$. The information quantities in Eq.~\eqref{eq:CeMAC_qinfo} cannot be improved by any of the following: 1. product of single-mode squeezing on a group of modes within $A[J]$; or 2. phase-sensitive multi-mode Gaussian unitary on modes within any specific sender.
\label{proposition:multi-mode}
\end{proposition}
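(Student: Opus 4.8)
\noindent\emph{Proof strategy.}\ The plan is to treat Proposition~\ref{proposition:multi-mode} as the multi‑mode extension of Proposition~\ref{proposition:gaugeinv_local}, proved through two preliminary reductions and then a finite‑dimensional Gaussian argument. First, since a memory interference BGMAC factorises as $\bm\calN_{[N]}=\bm\calU\,(\otimes_d\calN_d)\,(\otimes_k\bm\calV_k)$ with passive $\bm\calU,\bm\calV_k$, and passive unitaries acting on an individual sender or on the receiver change neither the energy nor the rate, it suffices to treat the memoryless multi‑mode case. Second, because both perturbations act inside a single channel use, the conditional additivity of the partial rate under independent coding on $J^c$ (noted after Theorem~\ref{theorem: EA_MAC_main_additivity}) lets us work with the one‑shot, one‑state quantity $\tilde C_{{\rm E},J}(\calN,\hat\phi)=I(A'[J];BA'[J^c])_{\hat\rho}$ rather than the regularised $C_{{\rm E},J}$. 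Using the purity of $\hat\rho_{EBA'}$ and the product structure $\hat\phi_{AA'}=\otimes_k\hat\phi_{A_kA'_k}$, I would rewrite $\tilde C_{{\rm E},J}=\sum_{k\in J}S(A_k)+\bigl[S(BA'[J^c])-S(A'[J]BA'[J^c])\bigr]$, equivalently $\sum_{k\in J}S(A_k)+S(A'[J]\mid E)_{\hat\rho}$; since both perturbations are Gaussian, every state in the comparison stays Gaussian, so the problem is a comparison of covariance matrices and no Gaussian‑extremality input is required.

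For part~1 (a product of single‑mode squeezers on a group of modes inside $A[J]$, with the constituent TMSV brightnesses lowered to respect the per‑sender energy budget, the multi‑mode analogue of Eq.~\eqref{eq:Nsprime}) there are three ingredients. (i) The term $\sum_{k\in J}S(A_k)=\sum_{k\in J}\sum_{n_k}g\!\left(N_{S,k}^{\prime\,n_k}\right)$ is built from brightnesses $N_{S,k}^{\prime\,n_k}\le N_{S,k}^{n_k}$, so it is strictly smaller than at the plain TMSV and, as a function of the squeezing magnitudes $\bm r$, attains its maximum only at $\bm r=\bm 0$. (ii) The phase‑insensitivity relation Eq.~\eqref{eq:phase_invariance}: a coherent rotation $\hat R(\bm\phi)$ of the inputs, compensated by matched rotations on the references $A'_k$, fixes each unsqueezed TMSV pair, merely shifts the squeezing angles, and at the output acts as a product unitary across the $A'[J]\mid BA'[J^c]$ cut, hence leaves $\tilde C_{{\rm E},J}$ invariant; since every single‑mode‑squeezing perturbation of a covariance matrix, averaged over such a coherent phase orbit, vanishes, the first‑order variation of $\tilde C_{{\rm E},J}$ in every squeezing direction is zero, i.e. the TMSV is a critical point. (iii) The remaining work is to control $S(BA'[J^c])-S(A'[J]BA'[J^c])$: one must show it does not grow fast enough to overcome~(i), which I would do by writing the two output covariance submatrices along the (one‑parameter) squeezing deformations and checking the monotonicity of their symplectic spectra, equivalently concavity of $\tilde C_{{\rm E},J}$ along squeezing directions. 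Together these give $\bm r=\bm 0$ as the maximiser.

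For part~2 (a phase‑sensitive multi‑mode Gaussian unitary $\hat U_{A_k}$ on one sender) I would use the Bloch--Messiah decomposition $\hat U_{A_k}=\hat P_1\,\hat D(\bm r)\,\hat P_2$ with $\hat P_1,\hat P_2$ passive and $\hat D$ a product of single‑mode squeezers; the phase‑sensitive content of $\hat U_{A_k}$ is precisely $\hat D$. Conjugating the passive factors $\hat P_i$ by their counterparts $\bar{\hat P}_i$ on the reference $A'_k$ (a unitary supported entirely on one side of the $A'[J]\mid BA'[J^c]$ cut, hence rate‑invariant; equivalently, applying the Gaussian Schmidt decomposition to the resulting pure Gaussian state) reduces the problem to a passive transformation on $A_k$ applied to a product of TMSVs whose total brightness is $\le N_{S,k}$, with strict inequality whenever $\hat D\neq\mathrm{id}$ because single‑mode squeezing wastes energy. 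The mechanism of part~1 — the $\sum_{k}S(A_k)$ deficit together with the bound on the entropy difference — then shows that no phase‑sensitive $\hat U_{A_k}$ can improve the rate over the plain TMSV.

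The main obstacle is step~(iii) of part~1: $S(BA'[J^c])-S(A'[J]BA'[J^c])$ is a difference of von Neumann entropies, and neither term is monotone in the squeezing on its own, so no single minimum‑output‑entropy inequality settles it; one has to track the two symplectic spectra simultaneously along the deformation (or establish concavity of the Gaussian‑state rate along squeezing directions directly). This is exactly the content, localised to the TMSV point, of the EA minimum‑entropy conjecture (Conjecture~\ref{conjecture: min_ent_conj}); only the local statement is needed here, which keeps the argument self‑contained, but it is also the reason the method does not extend to simultaneous squeezing on $A[J]$ and $A[J^c]$, and hence not to the full region‑additivity statement.
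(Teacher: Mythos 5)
There is a genuine gap, and you have located it yourself: step (iii) of your part~1 is never carried out, and as you note it amounts to a localised version of Conjecture~\ref{conjecture: min_ent_conj}. Showing that the TMSV is a critical point under phase averaging (your step (ii)) plus an unproven concavity of $\tilde C_{{\rm E},J}$ along squeezing directions does not close the argument; a critical point of a non-concave functional need not be a maximum, and controlling the difference $S(BA'[J^c])-S(A'[J]BA'[J^c])$ mode-by-mode through symplectic spectra is precisely the hard entropic comparison the paper is structured to avoid. The missing idea is that one should use concavity in the \emph{state}, not in the squeezing \emph{parameter}: the functional $f_J(\calN,\hat\phi_{A[J]},\hat\phi_{A[J^c]})$ of Eq.~\eqref{eq:fJ_define} is concave in each argument as a density operator (Lemma~\ref{lemma:fJconc}, proved by purifying the mixture with an extra register and applying data processing), and this is a soft, general fact requiring no spectral tracking.

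With that lemma in hand the paper's route is short: the gauge symmetry $\calR_{-\theta}^{\otimes N}\circ\calN\circ\calR_\theta^s=\calN$ makes $f_J$ invariant along the phase orbit of any input; averaging the squeezed/phase-sensitive state over the full orbit $\int d\theta\,\calR_\theta^s(\hat\phi'_A)/2\pi$ therefore cannot decrease $f_J$ (concavity), and an explicit covariance-matrix computation shows the averaged state has exactly the block-diagonal, phase-insensitive CM of a correlated thermal state at the same per-sender energy (the $c_{\rm s}$ entries and the off-diagonal squeezing terms cancel, while the diagonal entries become $\cosh(2r)N_S+\sinh^2 r$, reflecting the energy wasted by squeezing). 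Gaussian extremality (Lemma~\ref{lemma:newGaussopt}, itself resting on the conditional subadditivity of Lemma~\ref{lemma:newsubadditivity} and Hadamard invariance) then upgrades the non-Gaussian average to the Gaussian state with the same CM, which is the reduced state of the product TMSV, giving $I_J(\mathrm{TMSV})\ge f_J(\overline{\hat\phi}')\ge f_J(\hat\phi')=I_J(\hat\phi')$ in one chain that covers both items of the proposition simultaneously. Your preliminary reductions (unravelling the memory, working one-shot, the purity identity $S(A'[J])=\sum_{k\in J}S(A_k)$, and the Bloch--Messiah observation that squeezing wastes energy) are all sound and consistent with the paper, but they feed into a final step that is not established; I would also caution that your part~2 reduction only eliminates the single-mode-squeezing content of $\hat U_{A_k}$, whereas the conclusion of the proposition (and of Corollary~\ref{corollary:multi-mode}) deliberately leaves the phase-insensitive intra-sender correlations as a residual freedom rather than proving full TMSV optimality.
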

The proposition above constrains the optimum state up to an extra freedom on the correlations between modes within each sender. In the case of a memory interference channel, as we shown in Sec.~\ref{sec:memory_example}, the TMSV is indeed optimal for the total rate, up to passive transforms within each sender.

For a single-sender BGC, similar to Corollary~\ref{corollary:opt_singlesender}, Proposition \ref{proposition:multi-mode} yields a necessary condition for the optimum state.
\begin{corollary}
The optimum input state of point-to-point multi-mode phase-insensitive BGC is the purification of an $s$-partite correlated thermal state
\be 
\hat{\phi}_{AA^\prime}=\otimes_{k=1}^s\Phi_{A_k\to A_kA_k'}[\rho_{A_k}]\,,
\ee
where $\Phi$ is an arbitrary purification of the $N_k$-mode correlated thermal state $\rho_{A_k}$.
\label{corollary:multi-mode}
\end{corollary}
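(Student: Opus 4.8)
\textbf{Proof plan for Corollary~\ref{corollary:multi-mode}.} The starting point is Theorem~\ref{theorem: EA_MAC_main_Gaussian} (which in turn rests on the additivity Theorem~\ref{theorem: EA_MAC_main_additivity}): the energy-constrained total-rate capacity of the multi-mode phase-insensitive channel is attained by a zero-mean Gaussian input. I would take such an optimal Gaussian state to be pure, writing it as the purification $\hat\phi_{AA'}$ of its input marginal $\rho_A=\otimes_{k=1}^s\rho_{A_k}$; this is legitimate because $I(A';B)_{\hat\rho}$ in Eq.~\eqref{eq:CeMAC_qinfo_overall} is unchanged by any unitary acting on the reference $A'$ alone, so the rate is a functional of $\rho_A$ only, and every zero-mean Gaussian $\rho_A$ admits a zero-mean Gaussian purification. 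The energy constraint~\eqref{eq:energy_constraint_multi_mode} is likewise a property of the $\rho_{A_k}$'s alone. Hence the task reduces to identifying, among zero-mean Gaussian marginals obeying the per-sender photon constraint, one that maximizes $\tilde C_{{\rm E},U}$ and showing it may be taken correlated-thermal.

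The second step is a normal-form decomposition of each $\rho_{A_k}$. Combining the Williamson decomposition of its covariance matrix with the Bloch--Messiah/Euler factorization of the symplectic part, I would write that covariance matrix as $O_{{\rm L},k}\,Z_k\,\tau_k\,Z_k^{T}O_{{\rm L},k}^{T}$, where $O_{{\rm L},k}$ is passive (a beamsplitter-phase network on the $N_k$ modes of sender $k$), $Z_k$ is a product of single-mode squeezers, and $\tau_k$ is the covariance matrix of an $N_k$-mode correlated thermal state, obtained by absorbing the inner passive factor of the Bloch--Messiah decomposition into the diagonal Williamson core. Since a passive network conjugating a correlated thermal state again produces a correlated thermal state, it suffices to show that one may take $Z_k$ trivial: then each $\rho_{A_k}$ is itself correlated-thermal, $\hat\phi_{AA'}=\otimes_{k}\Phi_{A_k\to A_kA_k'}[\rho_{A_k}]$ is exactly the claimed purification, and because every operation in the decomposition acts within a single sender, the product structure over $k$ is preserved.

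The crux is therefore the third step: showing that inserting the single-mode squeezers $Z_k$ on the input modes cannot increase $\tilde C_{{\rm E},U}$ (i.e.\ $\tilde C_{{\rm E},J}$ with $J=U$) relative to the unsqueezed reference. This is precisely the content of Proposition~\ref{proposition:multi-mode}, item~1 (a product of single-mode squeezings on a group of input modes does not improve the information quantities of Eq.~\eqref{eq:CeMAC_qinfo}), with item~2 available if one wishes to push the normal form further within a sender. Two technical points remain. First, Proposition~\ref{proposition:multi-mode} is phrased as a non-improvement statement about perturbing away from the product-TMSV reference, whereas here the unsqueezed reference $\tau_k$ is a general correlated thermal state; I expect the phase-covariance and data-processing argument behind Proposition~\ref{proposition:multi-mode} to go through verbatim with $\tau_k$ in place of the TMSV core, since the only ingredients used are the phase-insensitivity~\eqref{eq:phase_invariance} of $\calN$ and invariance of the reference marginal under phase rotations, which a correlated thermal state enjoys once the passive network is stripped. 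Second, energy bookkeeping: removing the squeezers lowers $\langle\hat N_k\rangle$, so the unsqueezed state still satisfies~\eqref{eq:energy_constraint_multi_mode}, and any slack can only be fed back into the thermal occupations of $\tau_k$, which cannot decrease the rate; hence the optimum is attained inside the correlated-thermal family.

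I expect the main obstacle to be exactly this upgrade of Proposition~\ref{proposition:multi-mode} from a statement about squeezing near the TMSV point to the global ``squeezing never helps'' assertion needed in Step three. If that upgrade can only be established at first order, then the corollary should be understood, as the surrounding text already indicates, as a \emph{necessary} (stationarity) condition on the optimal Gaussian input rather than a complete characterization; the residual freedom is then the intra-sender correlation structure of $\rho_{A_k}$ together with the possibility that simultaneous squeezing across several senders (not covered by Proposition~\ref{proposition:multi-mode}) could in principle help, paralleling the gap already noted after Proposition~\ref{proposition:gaugeinv_local} in the single-mode case.
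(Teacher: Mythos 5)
Your proposal follows the paper's own route: Theorem~\ref{theorem: EA_MAC_main_Gaussian} gives Gaussian optimality, and the corollary is then read off from Proposition~\ref{proposition:multi-mode} as a \emph{necessary} condition on the optimal Gaussian input, with exactly the residual freedom (intra-sender correlations, cross-sender simultaneous squeezing) that you identify at the end. Your one residual worry --- that Proposition~\ref{proposition:multi-mode} might hold only perturbatively near the product-TMSV point --- is moot: the paper's proof of that proposition is global, averaging over the full phase orbit $\calR^s_\theta$ and invoking concavity of $f_J$ together with Gaussian extremality, which compares an arbitrarily squeezed, phase-sensitively correlated Gaussian input directly against the correlated thermal state carrying the same phase-insensitive correlations, so the Bloch--Messiah reduction you describe goes through for arbitrary squeezing strength.
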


\section{Reduction of the noise mode}
\label{app:noise_reduction}

In general, for an $s$-mode input channel, the noise system consists of $2s$ vacuum environment modes $\{\hat a_{E',k}\}_{k=1}^{2s}$~\cite{caruso2008multi,caruso2011optimal}. In a phase-insensitive BGMAC described by Eq.~\eqref{eq:BGMAC_phaseinsensitive}, the noise mode $\hat{\xi}$ can always be written in the linear combination of the $2s$ modes
\be 
\hat \xi=\sum_{k=1}^{2s} u_k\left( \left(1-\delta_{E^\prime,k}\right)\hat{a}_{E^\prime,k}+\delta_{E^\prime,k}\hat{a}_{E^\prime,k}^\dagger\right),
\ee 
where similarly the constants $\delta_{E',k}\in \{0,1\}$.
Because the $2s$ modes are identically vacuum, it can be further simplified to 
\be 
\hat \xi=|\bm u_1|\hat a_{E',1}'+|\bm u_2| \hat a_{E',2}^{\prime\dagger}
\label{eq:xi_app}
\ee
with 
\bal 
\hat a_{E',1}'= & \sum_{k=1}^s\frac{(1-\delta_{E',k})u_k}{|\bm u_1|}\hat{a}_{E^\prime,k}\,,\\
\hat a_{E',2}^{\prime\dagger}=&\sum_{k=1}^s \frac{\delta_{E',k} u_k}{|\bm u_2|}\hat{a}_{E^\prime,k}^\dagger\,,
\eal 
where $|\bm u_1|=\sqrt{\sum_{k=1}^s(1-\delta_{E',k})u_k^2}$, $|\bm u_2|=\sqrt{\sum_{k=1}^s\delta_{E',k} u_k^2}$. We see that $|\bm u_1|^2+|\bm u_2|^2=\sum_{k=1}^s u_k^2$.

\section{Proof Sketch of main results}
\label{sec:proof_sketch}

Below we provide the backbone of the proofs of the main results and theorems. More detailed proofs are presented in Appendices~\ref{app:outerbd},~\ref{app:Theorem1},~\ref{app:Theorem3},~\ref{app:LemmasForTheorem1&2}.

We begin with the unassisted case in Sec.~\ref{sec:unassisted}. To solve an outer-bound region, we unravel the general phase-insensitive BGMAC defined in Eq.~\eqref{eq:BGMAC_phaseinsensitive} into $s$ individual point-to-point BGCs for the noisy case of $N_{\rm B}\ge |\bm w|^2-1+\sum_{k=1}^s|w_k|^2\delta_k$ (see Appendix~\ref{app:outerbd}). Then, by applying a super-receiver that has access to all the $s$ outputs, we also obtain an outer-bound region that allows arbitrary input states. 


Next, we address the major part of the results of the EA case in Sec.~\ref{sec:ea}.
To prepare our analyses, we recast the conditional quantum information in Eq.~\eqref{eq:CeMAC_qinfo} into a functional $I_J$ with respect to the reduced state of the signal system $A$
\be
I_J(\calN,  \{\hat{\phi}_{A_i}\}_{i=1}^s)\equiv I(A^\prime[J];B|A^\prime[J^c])_{\hat\rho}\,.
\label{eq:I_J}
\ee
Here the output state $ \hat\rho=\calN_{A\to B}\otimes\calI_{A^\prime}({\Phi}_{A\to AA^\prime}(\otimes_{i=1}^s{\hat{\phi}_{A_i}}))$ is expressed by the reduced state of the input via a purification process, where we denote $\Phi_{X\to XX^\prime}(\hat{\zeta}_{X})$ as the purification of system $X$ in state $\hat{\zeta}$, defined in the joint system $XX^\prime$. 
Similarly, for the total rate of Eq.~\eqref{eq:CeMAC_qinfo_overall} within the universal set $U$, we have
\be
I_U(\calN,  \{\hat{\phi}_{A_i}\}_{i=1}^s)\equiv I(A^\prime;B)_{\hat\rho}\,.
\label{eq:I_U}
\ee
The above functional forms will facilitate us to prove the results.

\begin{figure}[tbp]
    \centering
    \includegraphics[width=0.45\textwidth]{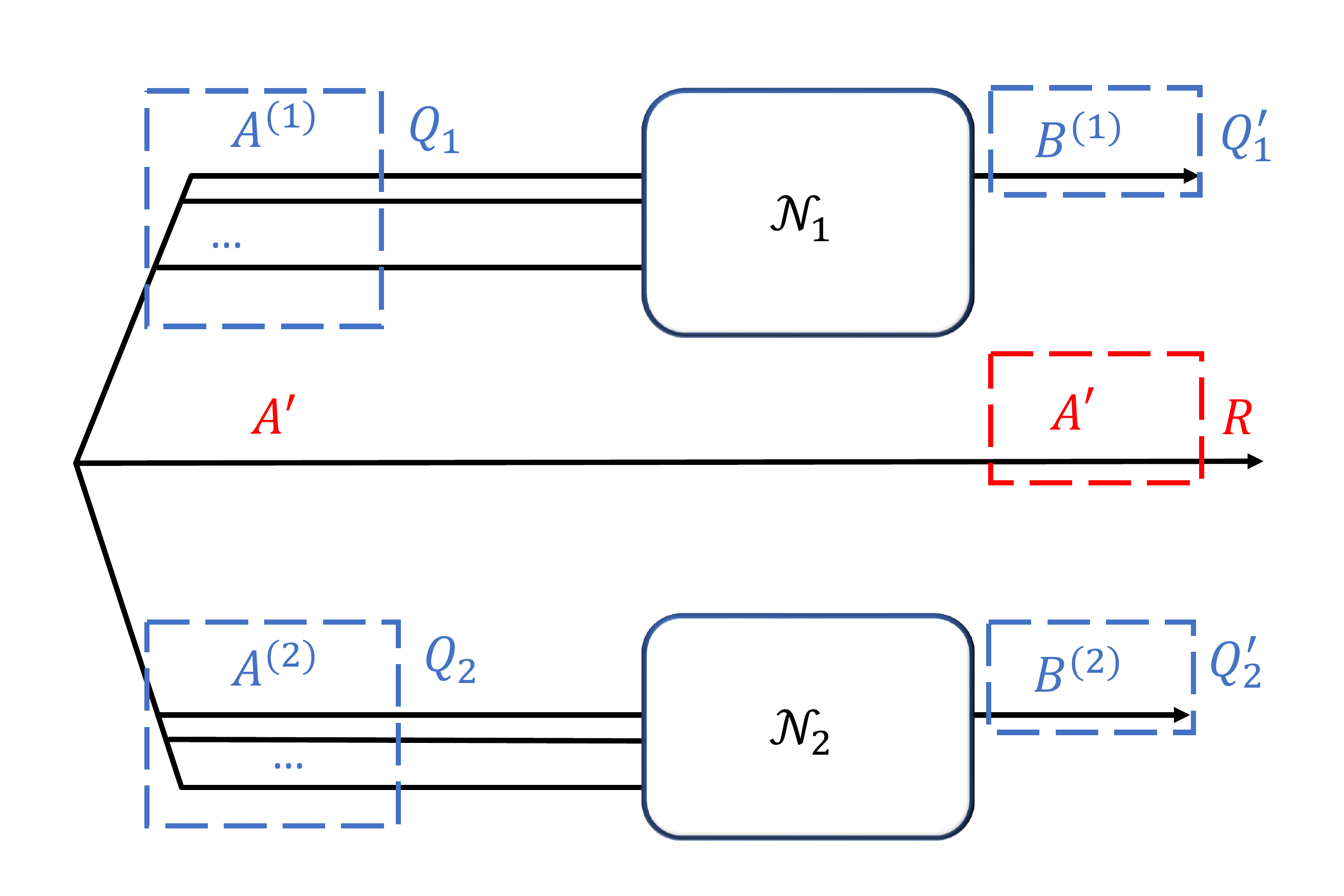}
    \caption{The input-output relation of the parallel use of two $s$-sender MACs $\calN_1,\calN_2$, for the derivation of subadditivity of the total rate with $J=\{1,2,\ldots,s\}$. $A^{(1)},A^{(2)}$ are the composite input systems of $\calN_1$, $\calN_2$ respectively, each containing $s$ independent subsystems. Here we use superscripts $(1),(2)$ to distinguish from the subscripts $k_1,k_2$ that label the sender indices as in $A_{k_1}^{(1)},A_{k_2}^{(2)}$. The subadditivity is derived by proving $I(R;Q^\prime_1Q^\prime_2)\le I(RQ_2;Q^\prime_1)+I(RQ_1;Q^\prime_2)$. Here the quantum system of interest $Q_1'Q_2'$ is $B^{(1)}B^{(2)}$. $Q_1,Q_2$ are the channel inputs for the output systems $Q_1',Q_2'$. The reference system $A'$, denoted as $R$, purifies $Q_1Q_2$. 
    }
    \label{fig:subadditivity_main}
\end{figure}

To prove the additivity in Theorem~\ref{theorem: EA_MAC_main_additivity}, we further modify the functional form of Eq.~\eqref{eq:I_J}. For each channel use, the conditional quantum information can be reduced to quantum mutual information via
\begin{align} 
I_J(\calN, \{\hat{\phi}_{A_i}\}_{i=1}^s)&=I(A^\prime[J];B|A^\prime[J^c])_{\hat\rho}
\nonumber
\\
&=I(A^\prime[J];BA^\prime[J^c])_{\hat\rho}
\nonumber
\\
&=I(R^\prime;Q^\prime)_{\hat\rho}.
\label{eq:IRQ}
\end{align} 
The second equality is due to the independence condition of Eq.~\eqref{eq:MACindependence}; in the last equality, we reorganize the systems: define the input quantum system of interest $Q=AA^\prime[J^c]$, which is purified by a reference system $R=A^\prime[J]$; after the channel $\Sigma_{Q\to Q^\prime}\equiv\calN_{A\to B}\otimes \calI_{A^\prime[J^c]}$, the quantum system becomes $Q^\prime=BA^\prime[J^c]$, while the reference $R^\prime=R$ is unchanged.

Now we examine the parallel use of two arbitrary MACs, as depicted in Fig.~\ref{fig:subadditivity_main}.
Consider the tensor product channel $\calN_1\otimes\calN_2$ as a whole, the total information rate is also in the standard form of quantum information
\be 
I_U(\calN_1\otimes\calN_2, \{\hat{\phi}_{A_i^{(1)}A_i^{(2)}}\}_{i=1}^s)=I(R^\prime;Q^{\prime}_1 Q^{\prime}_2)_{\hat\rho}\,,
\ee 
similar to Eq.~\eqref{eq:IRQ},
where $R'=A'$, $Q'_\ell=B^{(\ell)}$, $\ell\in\{1,2\}$ labels the subsystem associated with the $\ell$th channel. Assuming the same reduced state in each  channel use, the individual rate per channel use in a separable strategy
\bal 
&I_U(\calN_1, \{\hat{\phi}_{A_i^{(1)}}\}_{i=1}^s)=I(A^\prime B^{(2)} ;B^{(1)})_{\hat\rho}=I(R^\prime Q^{\prime}_2;Q^{\prime}_1)_{\hat\rho}\,,\\
&I_U(\calN_2, \{\hat{\phi}_{A_i^{(2)}}\}_{i=1}^s)=I(A^\prime B^{(1)} ;B^{(2)})_{\hat\rho}=I(R^\prime Q^{\prime}_1;Q^{\prime}_2)_{\hat\rho}\,.\\
\eal 
The subadditivity of quantum information~\cite{adami1997neumann,bennett2002entanglement}
\be 
I(R^\prime;Q^{\prime}_1 Q^{\prime}_2)_{\hat\rho}\le I(R^\prime Q^{\prime}_2; Q^{\prime}_1)_{\hat\rho}+ I(R^\prime Q^{\prime}_1; Q^{\prime}_2)_{\hat\rho}
\label{eq:subadditivity}
\ee
immediately yields the additivity of the capacity of total rate.
In the detailed proof, we shall frequently utilize the concavity and the subadditivity of quantum information with respect to input system $Q$, of which a general proof can be found in Ref.~\cite{adami1997neumann}. 

For partial communication rates within sender sets $J\neq \{1,2,\ldots,s\}$, the above proof method fails. This is because our method requires that the system of interest, e.g. $B^{(1)}B^{(2)}$ for the total rate $I(A';B^{(1)}B^{(2)})$, must be divided into one subsystem per channel use as shown in Fig.~\ref{fig:subadditivity_main}, which is not true for the partial rate $I(A'[J];B^{(1)}B^{(2)}A'[J^c])$ within sender set $J$: here the system of interest becomes $B^{(1)}B^{(2)}A'[J^c]$ with the reference system $A'[J^c]$ indivisible when $A^{(1)}[J^c]$ and $A^{(2)}[J^c]$ are correlated. For example, for $J=\{1\},J^c=\{2\}$, consider a correlated for $J^c$ in two channel uses
$ 
\hat{\rho}^\star\propto \ketbra{00}_{A_2^{(1)}A_2^{(2)}}+\ketbra{11}_{A_2^{(1)}A_2^{(2)}}.
$
By analogy with the proof above, one may expect a purification $\hat{\phi}^\star=(\ket{0000}_{A_2^{(1)}A_2^{(2)}A_2^{\prime(1)}A_2^{\prime(2)}}+\ket{1111}_{A_2^{(1)}A_2^{(2)}A_2^{\prime(1)}A_2^{\prime(2)}})/\sqrt{2}$. Unfortunately, the pairwise reduced states $\hat{\phi}^\star_{A_2^{(1)}A_2^{\prime(1)}}$ and $\hat{\phi}^\star_{A_2^{(2)}A_2^{\prime(2)}}$ are no longer purified. In this case, plugging in $R'=A_1^{\prime},Q_1'=B^{(1)}A_2^{\prime(1)},Q_2'=B^{(2)}A_2^{\prime(2)}$, the right hand side of the subadditivity Eq.~\eqref{eq:subadditivity} fails to match the individual $I_J$'s for two channel uses.

Finally, we summarize the proof of Theorem \ref{theorem: EA_MAC_main_TMSV}.
Ref.~\cite{wolf2006} gives a lemma of extremality of Gaussian states for any functional that is subject to the subadditivity and the symmetry under channel-wise Hadamard transform. 
Consider the $J$-set rate functional $I_J$. 
The Hadamard transform commutes with the MAC, thus it does not change $I_J$ as a unitary transform. Combining the subadditivity and the symmetry, we obtain Gaussian optimality in BGMACs. Therefore, we only need to optimize over the zero-mean Gaussian states to achieve the ultimate total rate. Note that any pure Gaussian state can be generated from a product of TMSV by a Gaussian unitary. Because the rate depends on merely the composite signal system $A$ and MAC forbids cooperation between senders, we can limit the Gaussian unitary to be a product of Gaussian unitaries, each on a single system $A_k$. For a single-mode BGMAC, as each sender has access to only a single signal mode, the Gaussian unitary on $A_k$ further reduces to a combination of single-mode squeezing operations and phase rotations. Later we will prove Proposition~\ref{proposition:gaugeinv_local}, which indicates that the total rate cannot be improved by any single-mode squeezing operation among the modes within $J=\{1,2,\ldots,s\}$ that includes all the $s$ senders. Hence, Theorem~\ref{theorem: EA_MAC_main_TMSV} is proven.



\section{Proof of the outer bounds}
\label{app:outerbd}

\begin{figure}[tbp]
    \centering
    \includegraphics[width=0.5\textwidth]{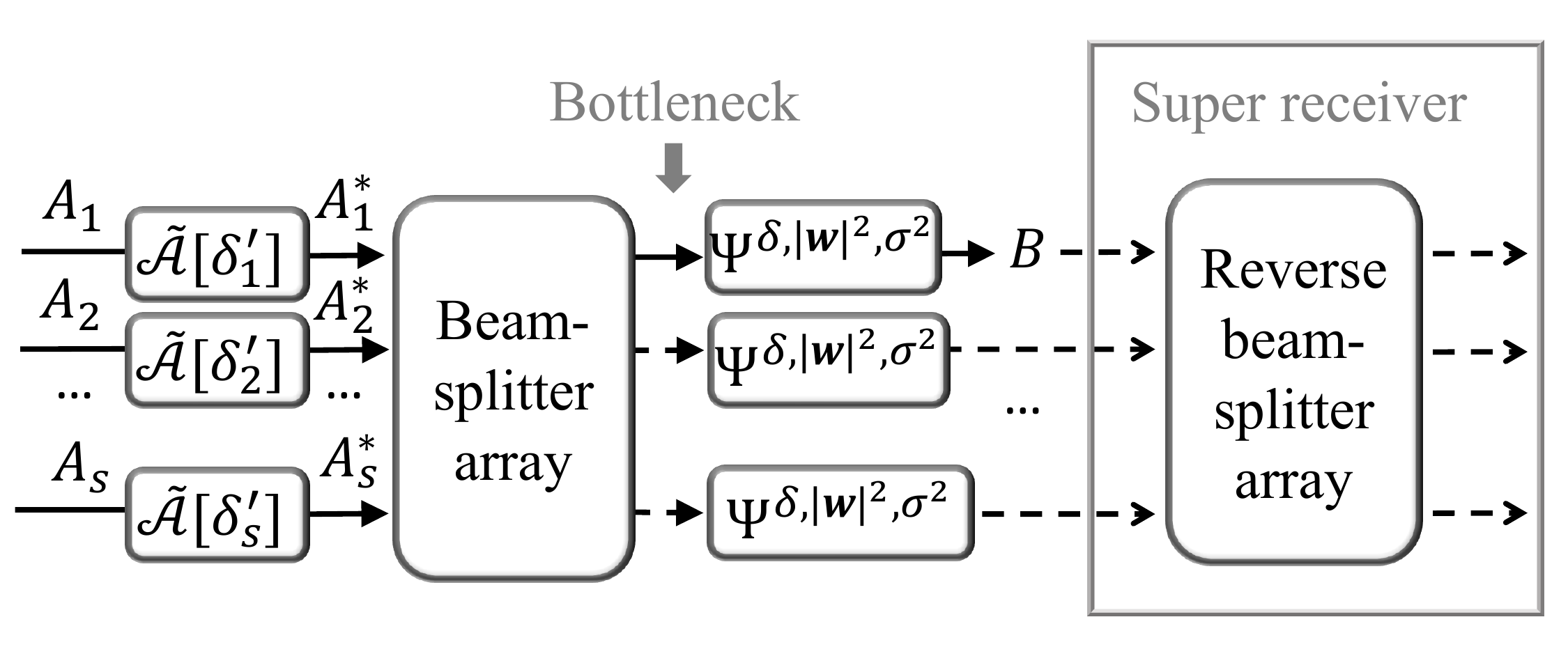}
    \caption{The unravelled version of an arbitrary phase-insensitive $s$-sender single-mode BGMAC. The original MAC maps $s$ inputs $A_1,A_2\ldots, A_s$ to a single output $B$, while here it is unravelled to an interference BGMAC after conjugate amplifiers (solid arrows). The individual bounds are evaluated by the super receiver that has access to all the $s$ outputs of the extended MAC, including the dashed arrows. The bound over the total rate is evaluated between the bottleneck right after the beamsplitter array and the final output $B$.}
    \label{fig:outerbd}
\end{figure}

Following Eq.~\eqref{aB_BGC}, we define the channel $\Psi^{\delta, w,\sigma^2}_{A\to B}$ via the Bogoliubov transform on input mode $\hat{a}_A$ as
\be 
\hat{a}_B= w \left(\left(1-\delta\right)\hat{a}_{A}+\delta\hat{a}_{A}^\dagger \right)+
\hat \xi,
\label{aB_BGC_app}
\ee 
where $\delta=0,1$.
On vacuum inputs, the channel $\Psi^{\delta, w,\sigma^2}_{A\to B}$ produces a thermal state with the mean photon number $\sigma^2$, which corresponds to `dark photon counts'. For a bona fide BGC, 
$
\sigma^2\ge \max\{(|w|^2-1)(1-\delta),|w|^2\delta\}.
$ 

We attempt to decompose the BGMAC defined in Eq.~\eqref{eq:BGMAC_phaseinsensitive} as
\be 
\calN_{A\to B}=\Psi^{\delta, |\bm w|^2,\sigma^2}\circ\calB^*_{A^*\to B}\circ(\otimes_{k=1}^s {\tilde\calA}_{A_k\to A_k^*}[\delta_k^\prime])\,,
\label{eq:simulatePIBGMAC}
\ee
where we denote the conditional phase-conjugator as 
 \be 
 {\tilde\calA}_{A_k\to A_k^*}[\delta_k^\prime]: \hat a_k^*\equiv (1-\delta_k^\prime)a_k+\delta_k^\prime(\hat a_k^\dagger+\sqrt{2} \hat v_k),\, 1\le k\le s\,.
 \ee
For the equality of Eq.~\eqref{eq:simulatePIBGMAC} to hold, we require
\begin{align}
&\delta_k=\delta_k^\prime +\delta \mod 2,
\label{constraint_delta}
\\ 
&N_{\rm B}=\sigma^2+\sum_{k=1}^s |w_k|^2\delta_k^\prime. 
\label{constraint_NB}
\end{align}
For $\Psi^{\delta, |\bm w|^2,\sigma^2}$ to be physical, we require
\be 
\sigma^2\ge \max\{(|\bm w|^2-1)(1-\delta),|\bm w|^2\delta\}.
\label{constraint_physical}
\ee

With Eq.~\eqref{eq:simulatePIBGMAC}, we construct a super-receiver that has access to all the $s$ outputs of the beamsplitter $\calB^\star$, and we denote the $s$-input-$s$-output beamsplitter as $\calB_{A^*}$ below. In this case, the $s$ outputs travel through the $s$-product channel $(\Psi^{\delta, |\bm w|^2,\sigma^2})^{\otimes s}$, which commutes with the beamsplitter since they act on orthogonal subspaces. Therefore, the receiver can fully reverse the effect of $\calB^\star$ by applying an inverse mapping $\calB^{*-1}$ to the overall channel output,
\bal  
&\calB_{A^*}^{*-1}\circ\left(\Psi^{\delta, |\bm w|^2,\sigma^2}\right)^{\otimes s}\circ\calB_{A^*}^*\circ(\otimes_{k=1}^s {\tilde\calA}_{A_k\to A_k^*}[\delta_k^\prime])\\
&=\otimes_{k=1}^s \left[\Psi^{\delta, |\bm w|^2,\sigma^2}\circ {\tilde\calA}_{A_k\to A_k^*}[\delta_k^\prime]\right]\,.
\eal
Since the unitary channel $\calB^{*-1}$ does not affect the entropic quantities, the super-receiver above provides an outer bound for the information rates. For the $k$th sender, we have the the rate $R_k$ upper bounded by the point-to-point unassisted classical capacity of channel $\Psi^{\delta, |\bm w|^2,\sigma^2}\circ {\tilde\calA}_{A_k\to A_k^*}[\delta_k^\prime]$ under input energy $N_{{\rm S},k}$, namely,
\be 
R_k\le C_{}\left(N_{{\rm S},k},\Psi^{\delta, |\bm w|^2,\sigma^2}\circ {\tilde\calA}[\delta_k^\prime]\right), \, 1\le k\le s\,.
\label{eq:C_outer_app}
\ee
Applying the minimal entropy result~\cite{holevo2013classical,giovannetti2015solution}, we obtain the bounds for individual rates as
\be 
R_k\le g\left[|\bm w|^2 \left(N_{{\rm S},k}+\delta_k^\prime\right)+\sigma^2\right]-g(\sigma^2+|\bm w|^2\delta_k^\prime),
\label{Rk_UB_supp}
\ee
where $g(x)=(x+1)\log(x+1)-x\log (x)$.

Next, we give an upper bound for the total rate of all $s$ senders.  According to the bottleneck inequality (data processing inequality), the total rate of $s$ senders through $\calN$ is upper bounded by the classical capacity of channel $\Psi^{\delta, |\bm w|^2,\sigma^2}$ under the input energy constraint $\sum_{k=1}^s \eta_k(N_{{\rm S},k}+ \delta_k^\prime)$, viz.,
\bal
\sum_{k=1}^s R_k&\le C\left(\sum_{k=1}^s \eta_k(N_{{\rm S},k}+\delta_k^\prime),\Psi^{\delta, |\bm w|^2,\sigma^2}\right)\\
&= g\left(\sum_{k=1}^s |w_k|^2(N_{{\rm S},k}+\delta_k^\prime)+\sigma^2\right)-g(\sigma^2)\,.
\label{eq:C_outer_sum_app}
\eal

To obtain the tightest bound, we minimize R.H.S. of Eq.~\eqref{eq:C_outer_sum_app} and Eq.~\eqref{Rk_UB_supp} over $\delta^\prime_k$ and $\delta$ subject to constraints of Eqs.~\eqref{constraint_delta},~\eqref{constraint_NB} and~\eqref{constraint_physical}. For the global covariant case $\delta_k=0$, we choose $\delta=\delta^\prime_k=0$, such that $\sigma^2=N_B\ge |\bm w|^2-1$ is always satisfied; for the global contravariant case $\delta_k=1$, we choose $\delta=1$ and $\delta^\prime_k=0$, such that $\sigma^2=N_{\rm B}\ge |\bm w|^2$ is always satisfied. 


The outer bounds of Ineqs.~\eqref{eq:C_outer} and ~\eqref{eq:C_outer_sum} are obtained by choosing $\delta_k^\prime=\delta_k$ and $\delta=0$, such that $\sigma^2=N_B-\sum_{k=1}^s |w_k|^2\delta_k$. In this case, a physical $\Psi^{\delta, |\bm w|^2,\sigma^2}$ requires
\be 
N_B\ge \max\{(|\bm w|^2-1),0\}+\sum_{k=1}^s |w_k|^2\delta_k\,.
\ee 
Meanwhile, the outer bounds of Ineqs.~\eqref{eq:C_outer_contra} and ~\eqref{eq:C_outer_sum_contra} are evaluated under $\delta_k^\prime=1-\delta_k$ and $\delta=1$, such that $\sigma^2=N_B-\sum_{k=1}^s |w_k|^2(1-\delta_k)$. In this case we need 
\be 
N_B\ge |\bm w|^2+\sum_{k=1}^s |w_k|^2(1-\delta_k)\,.
\ee


Indeed, by the unraveling in Eq.~\eqref{eq:simulatePIBGMAC}, we have demonstrated the following lemma.
\begin{lemma}
A global covariant BGMAC with $\delta_k=0$ (or a global contravariant BGMAC with $\delta_k=1$) can always be reduced into a covariant interference BGMAC (or a contravariant BGMAC).
\end{lemma}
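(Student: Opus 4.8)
The plan is to read off the claim as the homogeneous special case of the unraveling identity Eq.~\eqref{eq:simulatePIBGMAC} already established in this appendix. That identity writes an arbitrary phase-insensitive single-mode BGMAC as $\calN_{A\to B}=\Psi^{\delta,|\bm w|^2,\sigma^2}\circ\calB^*_{A^*\to B}\circ(\otimes_{k=1}^s{\tilde\calA}_{A_k\to A_k^*}[\delta_k^\prime])$, subject to the parity constraint $\delta_k=\delta_k^\prime+\delta\bmod 2$, the noise matching $N_{\rm B}=\sigma^2+\sum_k|w_k|^2\delta_k^\prime$, and the physicality bound $\sigma^2\ge\max\{(|\bm w|^2-1)(1-\delta),|\bm w|^2\delta\}$. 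The key observation I would exploit is that whenever all $\delta_k^\prime$ can be taken to be $0$, each conditional phase-conjugator ${\tilde\calA}_{A_k\to A_k^*}[0]$ degenerates to the identity channel $\hat a_k^*=\hat a_k$, so the decomposition collapses to $\calN_{A\to B}=\Psi^{\delta,|\bm w|^2,N_{\rm B}}\circ\calB^*_{A^*\to B}$. This is by definition an interference BGMAC in the sense of Eq.~\eqref{eq:def_interf}: the $s$-input-one-output beamsplitter $\calB^*_{A^*\to B}$ is precisely the beamsplitter $\calB$ that produces the mixed mode $\hat a_{A_{\rm mix}}=\sum_k(w_k/|\bm w|)\hat a_{A_k}$ of Eq.~\eqref{eq:Amix} while discarding the remaining ports, and $\Psi^{\delta,|\bm w|^2,N_{\rm B}}$ is a single-mode phase-insensitive BGC.

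For the global covariant case I would take $\delta_k=0$ for all $k$; then $\delta_k^\prime=0$ and $\delta=0$ satisfy the parity constraint, and the residual single-mode channel $\Psi^{0,|\bm w|^2,N_{\rm B}}$ is covariant --- a thermal-loss channel $\calL^{|\bm w|^2,N_{\rm B}}$, an AWGN channel $\calE^{N_{\rm B}}$, or an amplifier channel $\calA^{|\bm w|^2,N_{\rm B}}$ according to whether $|\bm w|^2<1$, $=1$, or $>1$. For the global contravariant case I would take $\delta_k=1$ for all $k$, so that $\delta=1$ and $\delta_k^\prime=0$ satisfy the parity constraint and the residual $\Psi^{1,|\bm w|^2,N_{\rm B}}$ is the contravariant conjugate amplifier $\tilde\calA^{|\bm w|^2+1,N_{\rm B}}$. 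In either case $\calN=\Psi\circ\calB$ with $\calB$ the beamsplitter of Eq.~\eqref{eq:Amix}, i.e.\ a covariant (resp.\ contravariant) interference BGMAC, which is the claim.

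The remaining point to check is that $\Psi^{\delta,|\bm w|^2,N_{\rm B}}$ is a legitimate channel, i.e.\ that $\sigma^2=N_{\rm B}$ meets the physicality bound. For $\delta=0$ this reads $N_{\rm B}\ge\max\{|\bm w|^2-1,0\}$ and for $\delta=1$ it reads $N_{\rm B}\ge|\bm w|^2$; these are exactly the bona fide conditions on $N_{\rm B}$ recorded in Sec.~\ref{sec:general_definition}, specialized to $\sum_k|w_k|^2\delta_k=0$ (global covariant) and $\sum_k|w_k|^2\delta_k=|\bm w|^2$ (global contravariant), so they hold automatically. I expect the only genuinely delicate step --- and it has already been discharged in the verification of Eq.~\eqref{eq:simulatePIBGMAC} above --- to be the bookkeeping of the complex weights $w_k$ and of the vacuum on the discarded beamsplitter ports: one must make sure that the phases of the $w_k$ are entirely absorbed into the definition of $\hat a_{A_{\rm mix}}$, so that the surviving single-mode transform carries the real positive weight $|\bm w|$ and exactly $N_{\rm B}$ dark photons, rather than picking up additional loss or noise from a mismatched port.
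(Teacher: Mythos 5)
Your proposal is correct and follows essentially the same route as the paper: the lemma is obtained by specializing the unraveling identity Eq.~\eqref{eq:simulatePIBGMAC} with $\delta=\delta_k^\prime=0$ (global covariant) or $\delta=1$, $\delta_k^\prime=0$ (global contravariant), so that the conditional phase-conjugators become identities and $\calN$ collapses to $\Psi^{\delta,|\bm w|^2,N_{\rm B}}\circ\calB^*$. Your explicit check that $\sigma^2=N_{\rm B}$ meets the physicality bound via the bona fide BGMAC condition matches the paper's remark that these inequalities are always satisfied in the respective homogeneous cases.
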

For interference BGMACs defined by Eq.~\eqref{eq:def_interf}, by plugging in $w_k=\sqrt{\eta_k(\tau-\delta)}$, where $\delta=0,1$ for the covariant case and the contravariant case respectively, one immediately obtain Eqs.~\eqref{eq:C_outer_interference}~\eqref{eq:C_outer_sum_interference}.\\

Finally, we note that above unravelling also applies to the EA communication. In the EA communication protocol, the super-receiver for the individual rates and the bottleneck after the beamsplitter array for the total rate can be constructed following the same procedure in the derivation of Eqs.~\eqref{eq:C_outer_app} and~\eqref{eq:C_outer_sum_app}, giving
\be 
R_k\le C_{\rm E}\left(N_{{\rm S},k},\Psi^{\delta, |\bm w|^2,\sigma^2}\circ {\tilde\calA}[\delta_k^\prime]\right), \, 1\le k\le s\,
\label{eq:CE_outer_app}
\ee
and
\bal
\sum_{k=1}^s R_k&\le C_{\rm E}\left(\sum_{k=1}^s \eta_k(N_{{\rm S},k}+\delta_k^\prime),\Psi^{\delta, |\bm w|^2,\sigma^2}\right).
\label{eq:CE_outer_sum_app}
\eal
Note that $\Psi^{\delta, |\bm w|^2,\sigma^2}\circ {\tilde\calA}[\delta_k^\prime]$ and $\Psi^{\delta, |\bm w|^2,\sigma^2}$ are both covariant BGCs ($\delta$=0) or contravariant BGCs ($\delta$=1), the outer bound above can be easily evaluated using Eq.~\eqref{eq:CE_BGC} for $\delta$=0 or Eq.~\eqref{eq:CE_BGC_contra} for $\delta$=1. 
For the global covariant case $\delta_k=0$, we choose $\delta=\delta^\prime_k=0$, such that $\sigma^2=N_B\ge |\bm w|^2-1$ is always satisfied.
For the global contravariant case $\delta_k=1$, we choose $\delta=1$ and $\delta^\prime_k=0$, such that $\sigma^2=N_{\rm B}\ge |\bm w|^2$ is always satisfied. 
By plugging in $w_k=\sqrt{\eta_k(\tau-\delta)}$, where $\delta=0,1$ for the covariant case and the contravariant case respectively, one immediately obtain Eqs.~\eqref{eq:CE_outer_interference}~\eqref{eq:CE_outer_sum_interference} and their contravariant generalizations.

\section{Connecting the minimum entropy conjecture to the capacity region}
\label{app:connect}
Consider an $s$-sender quantum MAC $\calN$, each of the EA capacity region boundary involves a specific sender block $J$. For $n$ channel uses, define the $ns$-mode input system of $\calN$ as $\bm A=\otimes_{\ell=1}^n(\otimes_{k=1}^s A_k^{(\ell)})$, purified by $\bm A'=\otimes_{k=1}^s \bm A'_k$ for each of the $s$ senders, and an $n$-mode output system $\bm B=\otimes_{\ell=1}^n B^{(\ell)}$. For any $J\subseteq \{1,2,\ldots,s\}$, from Eq.~\eqref{eq:CE_J} we have
\bal 
&C_{{\rm E},J}(\calN)=\lim_{n\to\infty}  \frac{1}{n}\max_{\hat{\bm\phi}}\tilde C_{{\rm E},J}(\calN^{\otimes n},\hat{\bm\phi})\\
&=\lim_{n\to\infty}  \frac{1}{n} \max_{\hat{\bm\phi}} \left[S(\bm A'[J])_{\hat{\bm\rho}}+S(\bm B\bm A'[J^c])_{\hat{\bm\rho}}\right]-S(\bm B\bm A')_{\hat{\bm\rho}}\\
&=\lim_{n\to\infty}  S_{\max}(\calN, \{N_{S,k}\}_{k=1}^s)-\frac{1}{n}S_{\min}(\calN^{\otimes n}),
\eal 
where the von Neumann entropies are evaluated on the output state $\hat{\bm\rho}=(\calN_{A\to B}\otimes \calI_{A'})^{\otimes n}(\hat {\bm\phi}_{\bm A\bm A'})$. In the last equality, we define two quantities, detailed below. First,
\bal 
S_{\max}(\calN, \{N_{S,k}\}_{k=1}^s)&\equiv \max_{\hat{\bm\phi}\in {\bm \calH}(\{N_{S,k}\}_{k=1}^s)} S(\bm A'[J])_{\hat{\bm\rho}}/n
\\
&=\max_{\hat{\bm\phi}\in {\bm \calH}(\{N_{S,k}\}_{k=1}^s)} S(\bm A[J])_{\hat{\bm\rho}}/n,
\label{eq:Smax_app}
\eal  
where $\hat{\rho}=\calN_{A\to B}\otimes \calI_{A'}(\hat {\bm\phi}_{ A A'})$ is the output state of a single channel use, and $\hat{\bm\phi}$ is maximized over the Hilbert space ${\bm \calH}(\{N_{S,k}\}_{k=1}^s)$ satisfying the energy constraint of on average.
In the last step, we utilized the purity of the input $\hat{\bm \phi}_{\bm A \bm A^\prime}$.
The second quantity is a minimum entropy
\be 
S_{\min}(\calN^{\otimes n})\equiv \min_{\bm{\hat\phi}} S(\bm B\bm A')_{\bm{\hat\rho}}-S(\bm B\bm A'[J^c])_{\bm{\hat\rho}}\,.
\label{eq:Smin_app}
\ee
Given the energy constraint of Eq.~\eqref{eq:energy_constraint} on average per channel use, Eq.~\eqref{eq:Smax_app} is the total entropy, which is maximized by an identical product of $s$-partite thermal state under the input energy constraint~\cite{giovannetti2013electromagnetic}. 
Note that the purification of a $s$-partite thermal state is the $s$-partite TMSV state $\hat{\zeta}$, we have
\be 
S_{\max}(\calN, \{N_{S,k}\}_{k=1}^s)=S(A[J])_{\hat\zeta}\,.
\label{eq:Smax_tmsv_app}
\ee 
Meanwhile, Conjecture~\ref{conjecture: min_ent_conj} states that
\begin{align}
S_{\min}(\calN^{\otimes n})=S[(\calN\otimes\calI)^{\otimes n}(\hat{\bm \zeta})]=n S[\calN\otimes\calI(\hat\zeta)]\,,
\label{eq:Smin_tmsv_app}
\end{align}
where $\hat {\bm \zeta}=\hat{\zeta}^{\otimes n}$. Combining with Eq.~\eqref{eq:Smax_tmsv_app}, Conjecture~\ref{conjecture: min_ent_conj} leads to the optimality of the $s$-partite TMSV for each $J\subseteq \{1,2,\ldots,s\}$ simultaneously
\be 
C_{{\rm E},J}(\calN)=S( A'[J])_{\hat\zeta}+S( B A'[J^c])_{\hat\zeta}-S( B A')_{\hat\zeta}\,.
\ee
This immediately leads to the additivity of the capacity region.


\section{Proof of Theorem \ref{theorem: EA_MAC_main_additivity}}
\label{app:Theorem1}


We first prove the subadditivity using the definition with $I_U$ in Eq.~\eqref{eq:I_U}. The subadditivity of $I_U$ is stated as follows.
\begin{lemma}
For any two parallel channel uses acting on $A_1$, $A_2$, the subadditivity follows: 
\begin{align} 
&I_U(\calN_1\otimes\calN_2, \{\hat\phi_{A_i^{(1)}A_i^{(2)}}\}_{i=1}^s)
\nonumber
\\
&\le I_U(\calN_1, \{\hat\phi_{A_i^{(1)}}\}_{i=1}^s)+I_U(\calN_2, \{\hat\phi_{A_i^{(2)}}\}_{i=1}^s).
\end{align}
\label{lemma:subadditivity}
\end{lemma}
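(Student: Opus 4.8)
The plan is to rewrite both sides as quantum mutual informations evaluated on a single global pure state and then invoke the subadditivity of quantum mutual information, Eq.~\eqref{eq:subadditivity}. First I would fix the joint input: for each sender $i$, let $\hat\phi_{A_i^{(1)}A_i^{(2)}A_i'}$ be a purification of $\hat\phi_{A_i^{(1)}A_i^{(2)}}$, and form the product $\hat\omega$ over $i$, a pure state on $A'A^{(1)}A^{(2)}$ with $A'=\otimes_{i=1}^s A_i'$ and $A^{(\ell)}=\otimes_{i=1}^s A_i^{(\ell)}$. Let $\calU_{\calN_\ell}$ be the Stinespring dilation of $\calN_\ell$ with environment $E^{(\ell)}$, and set $\hat\omega'=(\calU_{\calN_1}\otimes\calU_{\calN_2}\otimes\calI_{A'})(\hat\omega)$, a pure state on $A'B^{(1)}E^{(1)}B^{(2)}E^{(2)}$. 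By the functional definition Eq.~\eqref{eq:I_U} together with Eq.~\eqref{eq:CeMAC_qinfo_overall},
\[
I_U(\calN_1\otimes\calN_2,\{\hat\phi_{A_i^{(1)}A_i^{(2)}}\}_{i=1}^s)=I(A';B^{(1)}B^{(2)})_{\hat\omega'}.
\]

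Next I would identify the two single-channel terms inside $\hat\omega'$. Since $\hat\phi_{A_i^{(1)}A_i^{(2)}A_i'}$ is pure, $A_i^{(2)}A_i'$ purifies the marginal $\hat\phi_{A_i^{(1)}}$, so $\otimes_{i=1}^s(A_i^{(2)}A_i')$ is a per-sender purification of $\otimes_{i=1}^s\hat\phi_{A_i^{(1)}}$. Because $\calN_2$ acts isometrically on $A^{(2)}$ and commutes with $\calN_1$, the mutual information between $B^{(1)}$ and this reference is unchanged when $A^{(2)}$ is replaced by its isometric image $B^{(2)}E^{(2)}$; hence
\[
I_U(\calN_1,\{\hat\phi_{A_i^{(1)}}\}_{i=1}^s)=I(A'B^{(2)}E^{(2)};B^{(1)})_{\hat\omega'}\ge I(A'B^{(2)};B^{(1)})_{\hat\omega'},
\]
the inequality being monotonicity of mutual information under discarding $E^{(2)}$. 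Symmetrically, $I_U(\calN_2,\{\hat\phi_{A_i^{(2)}}\}_{i=1}^s)\ge I(A'B^{(1)};B^{(2)})_{\hat\omega'}$.

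Finally I would apply the subadditivity of quantum mutual information, Eq.~\eqref{eq:subadditivity}, with $R'=A'$, $Q_1'=B^{(1)}$, $Q_2'=B^{(2)}$, namely
\[
I(A';B^{(1)}B^{(2)})_{\hat\omega'}\le I(A'B^{(2)};B^{(1)})_{\hat\omega'}+I(A'B^{(1)};B^{(2)})_{\hat\omega'},
\]
which itself is the chain rule combined with strong subadditivity. Chaining the three displays yields $I_U(\calN_1\otimes\calN_2,\{\hat\phi_{A_i^{(1)}A_i^{(2)}}\})\le I_U(\calN_1,\{\hat\phi_{A_i^{(1)}}\})+I_U(\calN_2,\{\hat\phi_{A_i^{(2)}}\})$.

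The main obstacle is not the core inequality, which is standard, but the bookkeeping in the middle step: one must check that $\otimes_{i=1}^s(A_i^{(2)}A_i')$ is a legitimate purification of the marginal input $\otimes_{i=1}^s\hat\phi_{A_i^{(1)}}$ compatible with the functional $I_U(\calN_1,\cdot)$, and that the data-processing step (keeping $B^{(2)}$, discarding $E^{(2)}$) bounds the single-channel functional \emph{from below}, so that it may appear on the right-hand side of the subadditivity bound. It is worth emphasizing that because $J=U$ there is no conditioning register, so the factorization of the reference $A'$ over senders is immaterial here; this is exactly where the same argument fails for $J\neq U$, as discussed after Theorem~\ref{theorem: EA_MAC_main_additivity}.
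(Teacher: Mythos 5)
Your proposal is correct and follows essentially the same route as the paper: both reduce the lemma to the subadditivity of quantum mutual information, Eq.~\eqref{eq:subadditivity}, after observing that for $J=U$ each single-channel term $I_U(\calN_\ell,\cdot)$ can be evaluated with the complementary input plus $A'$ serving as the purifying reference. The only cosmetic difference is bookkeeping — you keep the channel-$2$ output $B^{(2)}$ and discard $E^{(2)}$ by data processing, while the paper's appendix retains the input $Q_2$ as the purifier and proves the needed inequality via subadditivity of the environment-conditioned entropies $S(Q_1'Q_2'|E_1E_2)\le S(Q_1'|E_1)+S(Q_2'|E_2)$; your version is, if anything, slightly more careful than the main-text sketch, which writes this step as an equality.
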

We defer the proof in Sec.~\ref{app:subadditivity}. The subadditivity of the EA classical capacity of total rate follows
\bal 
C_{{\rm E},U}(\calN)&=\frac{1}{n}\max_{\hat\phi,n} I_U(\calN^{\otimes n}, \{\hat\phi_{\otimes_{\ell=1}^nA_i^{(\ell)}}\}_{i=1}^s)\\
&\le \frac{1}{n}\sum_{\ell=1}^n \max_{\hat\phi,n}I_U(\calN, \{\hat\phi_{A_i^{(\ell)}}\}_{i=1}^s)\\
&=\frac{1}{n}\cdot nC_{{\rm E},U}^{(1)}(\calN)\\
&=C_{{\rm E},U}^{(1)}(\calN)\,.
\eal

On the other hand, the superadditivity of $C_{{\rm E},U}$ is trivial since $nC^{(1)}_{{\rm E},U}(\calN)$ is immediately achieved by $n$-fold repetitive optimum encoding of $C^{(1)}_{{\rm E},U}(\calN)$ for $\calN^{\otimes n}$.

Combining the subadditivity and the superadditivity, we obtain the additivity
\be
C_{{\rm E},U}(\calN)=C^{(1)}_{{\rm E},U}(\calN)\,.
\ee\\

For later convenience, we generalize the subadditivity to $I_J$ conditioned on an extra independence constraint as follows (See Sec.~\ref{proof_lemma3} for a proof).
\begin{lemma}
For any two parallel channel uses acting on $A_1$, $A_2$, the additivity follows: 
\begin{align} 
&I_J(\calN_1\otimes\calN_2, \{\hat\phi_{A_i^{(1)}A_i^{(2)}}\}_{i=1}^s)
\nonumber
\\
&\le I_J(\calN_1, \{\hat\phi_{A_i^{(1)}}\}_{i=1}^s)+I_J(\calN_2, \{\hat\phi_{A_i^{(2)}}\}_{i=1}^s)\,,
\end{align}
\label{lemma:subadditivity_region}
under the constraint that the inputs of senders in $J^c$ for two channels $A^{(1)}[J^c],A^{(2)}[J^c]$ are mutually independent.
\end{lemma}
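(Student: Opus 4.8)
The plan is to mimic the proof of Lemma~\ref{lemma:subadditivity} for $I_U$, now carrying the conditioning register $A'[J^c]$ through the argument, and to pinpoint the independence hypothesis on $J^c$ as exactly the property that keeps $A'[J^c]$ divisible into one subsystem per channel use.

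First I would fix the state. Since a MAC forbids correlations between senders, the input to $\calN_1\otimes\calN_2$ is a product over senders, so $A'[J]=\otimes_{i\in J}A'_i$ and $A'[J^c]=\otimes_{i\in J^c}A'_i$ are independent; by the same step as Eq.~\eqref{eq:MACindependence} this gives $I_J(\calN_1\otimes\calN_2,\{\hat\phi_{A_i^{(1)}A_i^{(2)}}\})=I(A'[J];B^{(1)}B^{(2)}A'[J^c])_{\hat\rho}$. Now I invoke the hypothesis: for each $i\in J^c$ the reduced state $\hat\phi_{A_i^{(1)}A_i^{(2)}}$ factorizes, so its purifier may be chosen as $A_i^{\prime(1)}\otimes A_i^{\prime(2)}$ with $A_i^{\prime(\ell)}$ purifying $A_i^{(\ell)}$, and each pure pair $A_i^{(\ell)}A_i^{\prime(\ell)}$ decouples from every other system. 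Writing $A^{\prime(\ell)}[J^c]=\otimes_{i\in J^c}A_i^{\prime(\ell)}$ and grouping $Q'_\ell=B^{(\ell)}A^{\prime(\ell)}[J^c]$, one lands on $I_J(\calN_1\otimes\calN_2,\cdot)=I(A'[J];Q'_1Q'_2)_{\hat\rho}$.

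Second I would apply the strong-subadditivity bound in the form of Eq.~\eqref{eq:subadditivity}, namely $I(A'[J];Q'_1Q'_2)\le I(A'[J]Q'_2;Q'_1)+I(A'[J]Q'_1;Q'_2)$, and bound each summand by a single-channel $I_J$. For the first summand: the pure blocks $\{A_i^{(1)}A_i^{(2)}A'_i\}_{i\in J}$ make $A'[J]A^{(2)}[J]$ a purification of $A^{(1)}[J]$, and adjoining the decoupled $A^{(2)}[J^c]A^{\prime(2)}[J^c]$ leaves it one; hence, on the state obtained by applying only $\calN_1$, purification-invariance of the channel mutual information in the reference register gives $I_J(\calN_1,\{\hat\phi_{A_i^{(1)}}\})=I(A'[J]A^{(2)}A^{\prime(2)}[J^c];B^{(1)}A^{\prime(1)}[J^c])$. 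Applying the CPTP map $\calN_2$ to $A^{(2)}$ — which sits entirely in the left argument and is disjoint from $B^{(1)}A^{\prime(1)}[J^c]$ — can only decrease the mutual information, so $I(A'[J]Q'_2;Q'_1)_{\hat\rho}\le I_J(\calN_1,\{\hat\phi_{A_i^{(1)}}\})$; by symmetry $I(A'[J]Q'_1;Q'_2)_{\hat\rho}\le I_J(\calN_2,\{\hat\phi_{A_i^{(2)}}\})$. Combining with the subadditivity bound proves the lemma.

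The main obstacle — and the reason the unconditional statement is false — is the split $A'[J^c]=A^{\prime(1)}[J^c]\otimes A^{\prime(2)}[J^c]$: it needs the $J^c$-inputs to factorize across channel uses, precisely the hypothesis, and without it no grouping $Q'_\ell=B^{(\ell)}\otimes(\text{part of }A'[J^c])$ recovers the single-channel $I_J$'s, which is the pathology exhibited by the $\hat\phi^\star$ example in Appendix~\ref{sec:proof_sketch}. A secondary care point is the purification bookkeeping in the identification step: one must ensure the conditioning register $A^{\prime(\ell)}[J^c]$ appears only on the output side of each single-channel $I_J$ while the purifier of $A^{(\ell)}[J]$ appears only on the reference side, which I would handle through purification-invariance of the channel mutual information rather than tracking Stinespring environments.
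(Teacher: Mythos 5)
Your proof is correct and follows essentially the same route as the paper's: you use the independence hypothesis to split $A'[J^c]=A'^{(1)}[J^c]\otimes A'^{(2)}[J^c]$, regroup into $R=A'[J]$, $Q'_\ell=B^{(\ell)}A'^{(\ell)}[J^c]$, apply the subadditivity inequality of Eq.~\eqref{eq:subadditivity}, and identify each summand with a single-channel $I_J$ via purification invariance. Your explicit data-processing step justifying $I(A'[J]Q'_2;Q'_1)\le I(A'[J]Q_2;Q'_1)$ is a small refinement of bookkeeping that the paper's proof writes as an equality, but the substance is identical.
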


\section{Proof of Theorem \ref{theorem: EA_MAC_main_Gaussian}}

\label{app:Theorem3}

The proof relies on the Gaussian extremality theorem~\cite{wolf2006} and the sub-additivity of quantum mutual information $I_U$. Note that $I_U$ only depends on $\hat\phi_A$, we only need to prove that the optimal reduced state in $A$ is Gaussian, then its purification $\hat{\phi}_{AA'}$ is immediately Gaussian. We first prove that Eq.~\eqref{eq:CeMAC_qinfo_overall}
is optimized by a Gaussian state.

Revealed in Lemma 1 of Ref.~\cite{wolf2006}, the quantum central limit theorem states that, given any state $\hat{\rho}$ and a Gaussian state $\hat{\rho}_G$ with the same first and second order moments, the lemma states
\be
f(\hat{\rho})\le f(\hat{\rho}_G)
\label{eq:Gaussopt}
\ee
for a sub-additive continuous $s$-mode functional $f:\calH^{\otimes s}\to R$ which is invariant under local unitary acting on the $n$-copy state $f[\calU^{\otimes s}(\hat{\rho}^{\otimes n})]=f(\hat{\rho}^{\otimes n})$. Specifically, $\calU$ fulfills an $n\times n$ Hadamard transform $H^{\otimes m}$ acting on the canonical operators in the Heisenberg picture per mode $j\in\{1,2,\ldots,s\}$ as
\be 
\hat{q}_j^{(k)} \to \sum_{l=1}^n \frac{H^{\otimes m}_{kl}}{\sqrt n}\hat q_j^{(l)}, \quad \hat{ p}_j^{(k)}\to \sum_{l=1}^n \frac{H^{\otimes m}_{kl}}{\sqrt n}\hat p_j^{(l)}
\label{eq:Hadamard_def}
\ee
where $H$ is the $2\times 2$ Hadamard matrix, $n=2^m$. This lemma indicates that the optimum of any functional $f$ can be achieved by Gaussian states when the argument $\hat{\rho}$ is constrained by a condition with respect to the covariance matrix.

The sub-additivity of $I_U$ has been proven in our Lemma \ref{lemma:subadditivity}. In Sec.~\ref{app:LemmasForTheorem1&2} we will prove the invariance of $I_U$, with respect to the argument $\otimes_{k=1}^s \hat\phi_{A_k}$, under the Hadamard transform as the following lemma
\begin{lemma}
Any entropic functional, which is a linear combination of Von Neumann entropies of the argument, is invariant under an $n\times n$ Hadamard transform $H^{\otimes \log_2 n}$ defined by Eq.~\eqref{eq:Hadamard_def}.
\label{lemma:invariance_Hadamard}
\end{lemma}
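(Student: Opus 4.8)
The plan is to show that precomposing the channel input with the mode-wise Hadamard transform of Eq.~\eqref{eq:Hadamard_def} merely conjugates the output state by a unitary acting \emph{locally} on the receiver's system $\bm B$; every von Neumann entropy entering $I_J$ (and hence every linear combination of such entropies) is then manifestly unchanged.

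First I would establish the key commutation relation $\calN^{\otimes n}\circ\calU_{\bm A}=\calV_{\bm B}\circ\calN^{\otimes n}$, where $\calU_{\bm A}$ is the Hadamard $H^{\otimes m}/\sqrt{n}$ mixing the $n$ copies of each input mode of each sender (so $\calU_{\bm A}$ factorizes as a product over senders, and in the multi-mode case as a product over all sender modes), and $\calV_{\bm B}$ is the same Hadamard mixing the $n$ copies of the output mode $B$. This is a direct consequence of the linearity of the Bogoliubov map in Eq.~\eqref{eq:BGMAC_phaseinsensitive}: since one and the same orthogonal matrix acts on every input mode, it commutes through the combination $\sum_k w_k(\cdots)$ and reappears, unchanged, as a Hadamard on the output signal; the only residual discrepancy lives in the noise term $\hat\xi$ of Eq.~\eqref{eq:xi}, and because the environment of a phase-insensitive BGMAC is an i.i.d.\ multi-mode vacuum (more generally an i.i.d.\ thermal state of fixed temperature), its joint state over the $n$ copies is invariant under the orthogonal mixing. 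Real-orthogonality of $H^{\otimes m}/\sqrt{n}$ also makes it commute with the conjugation $\hat a\mapsto\hat a^\dagger$, so covariant and contravariant components are handled on the same footing.

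Given this, let $\hat\rho=(\calN^{\otimes n}\otimes\calI_{\bm A'})(\hat\Phi_{\bm A\bm A'})$ be the output state for a per-sender purified product input, and $\hat\rho'$ the output for the Hadamard-transformed input $(\calU_{\bm A}\otimes\calI_{\bm A'})\hat\Phi_{\bm A\bm A'}(\calU_{\bm A}\otimes\calI_{\bm A'})^\dagger$, which is still an admissible MAC input since $\calU_{\bm A}$ is a product of unitaries one per sender and preserves the energy constraints. The commutation relation then yields $\hat\rho'=(\calV_{\bm B}\otimes\calI_{\bm A'})\hat\rho(\calV_{\bm B}\otimes\calI_{\bm A'})^\dagger$. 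Each subsystem appearing in $I_J$ — namely $\bm A'[J]$, $\bm B\bm A'[J^c]$, $\bm B\bm A'$, and $\bm A'$ — either contains all of $\bm B$ or none of it, so $\calV_{\bm B}$ restricted to it is either a genuine unitary or the identity; in either case the reduced state of $\hat\rho'$ on that subsystem is unitarily equivalent to that of $\hat\rho$, hence has the same von Neumann entropy. Therefore $I_J$, and with it any linear combination of the entropies it is built from, is invariant under $\calU_{\bm A}$, which is the claim.

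I expect the main work to be pinning down the commutation relation as a genuine identity of channels rather than as an equality of outputs on a restricted input set: this requires the clean statement that passive orthogonal mixing of the $n$ i.i.d.\ environment modes (each assembled via Eq.~\eqref{eq:xi} from the vacuum modes of the Stinespring dilation) leaves the total environment state invariant, together with the observation that the Hadamard touches only $\bm A$ and hence never disturbs the reference $\bm A'$ or the product-over-senders structure. Once the commutation relation is secured, the entropy bookkeeping is routine.
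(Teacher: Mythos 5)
Your proposal is correct and follows essentially the same route as the paper's proof: you establish the commutation relation $\calN^{\otimes n}\circ\calU_{\bm A}=\calV_{\bm B}\circ\calN^{\otimes n}$ by pushing the real-orthogonal Hadamard mixing through the linear Bogoliubov map and using the invariance of the i.i.d.\ environment, and then observe that the residual unitary on $\bm B$ leaves every von Neumann entropy entering the functional unchanged. Your additional bookkeeping (that each subsystem in $I_J$ contains either all of $\bm B$ or none of it, and that real orthogonality handles the contravariant $\hat a^\dagger$ terms identically) is a welcome elaboration of steps the paper leaves implicit, but it is the same argument.
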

Thus we have Eq.~\eqref{eq:Gaussopt} for functional $I_U$ with respect to $\otimes_{k=1}^s \hat\phi_{A_k}$, and it leads to the Gaussian maximality of $I_U$
\begin{lemma} 
For any state $\hat{\phi}$, the Gaussian state $\hat{\phi}^G$ with the same mean and covariance matrix of $\hat{\phi}$ provides a larger value,
\be 
I_U(\calN, \{\hat\phi_{A_k}\}_{k=1}^s)\le I_U(\calN, \{\hat\phi^G_{A_k}\}_{k=1}^s),
\ee 
while satisfying the energy constraints.
\label{lemma:Gaussopt}
\end{lemma}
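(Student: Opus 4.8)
The plan is to recognize that, by Eq.~\eqref{eq:I_U}, the quantity $I_U(\calN,\{\hat\phi_{A_k}\}_{k=1}^s)=I(A';B)_{\hat\rho}$ depends on the input only through the reduced signal state $\hat\phi_A=\otimes_{k=1}^s\hat\phi_{A_k}$ — any two purifications of $\hat\phi_A$ are related by an isometry on $A'$ under which $I(A';B)$ is invariant — and then to apply the Gaussian extremality theorem of Ref.~\cite{wolf2006} quoted in Eq.~\eqref{eq:Gaussopt} to the resulting functional $f(\hat\phi_A):=I_U(\calN,\{\hat\phi_{A_k}\})$. First I would record that it therefore suffices to prove $f(\hat\phi_A)\le f(\hat\phi^G_A)$, where $\hat\phi^G_A$ is the Gaussian state with the mean vector and covariance matrix of $\hat\phi_A$; since $\hat\phi_A$ is a product over senders, its covariance matrix is block diagonal, so $\hat\phi^G_A=\otimes_{k=1}^s\hat\phi^G_{A_k}$ with each $\hat\phi^G_{A_k}$ the Gaussian carrying the moments of $\hat\phi_{A_k}$ — precisely the input on the right-hand side of the lemma, and a legitimate (sender-wise product) MAC input.

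Next, the three hypotheses of Eq.~\eqref{eq:Gaussopt} must be checked for $f$, understood in the multi-copy sense. Subadditivity, $f(\hat\phi_{A^{(1)}A^{(2)}})\le f(\hat\phi_{A^{(1)}})+f(\hat\phi_{A^{(2)}})$ for the marginals of any joint state on two copies of $A$, is Lemma~\ref{lemma:subadditivity}; iterating it over $2^m$ copies, together with the elementary additivity $f(\hat\phi_A^{\otimes 2^m})=2^m f(\hat\phi_A)$ (von Neumann entropy is additive on the product output state), supplies the inequality chain in the central-limit argument of Ref.~\cite{wolf2006}. Invariance of $f$ under the channel-wise Hadamard transform $H^{\otimes m}$ of Eq.~\eqref{eq:Hadamard_def}, applied to the $n=2^m$ copies of each mode of $A$, is Lemma~\ref{lemma:invariance_Hadamard} — intuitively because, for a BGMAC, this passive map commutes with $\calN^{\otimes 2^m}$ up to a passive map on $B^{\otimes 2^m}$ and an identical passive map on the vacuum environment, which leaves that environment invariant, so every von Neumann entropy entering $I_U$ is unchanged. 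The one hypothesis not already isolated above is continuity: $f$ is a fixed linear combination of von Neumann entropies of states obtained from $\hat\phi_A$ by the (trace-norm-continuous) operations of canonical purification and of $\calN\otimes\calI$, and since a BGMAC sends energy-bounded states to energy-bounded states while von Neumann entropy is continuous on energy-bounded sets~\cite{holevo2013classical}, $f$ is continuous on the energy-constrained domain — which is exactly where the quantum central limit theorem places the single-copy marginals of $H^{\otimes m}(\hat\phi_A^{\otimes 2^m})$, all of which converge to $\hat\phi^G_A$ while carrying the fixed energy of $\hat\phi_A$.

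With subadditivity, Hadamard invariance, and continuity all in place, Eq.~\eqref{eq:Gaussopt} (Lemma~1 of Ref.~\cite{wolf2006}) gives $f(\hat\phi_A)\le f(\hat\phi^G_A)$, which is the stated inequality; it remains only to observe that the energy constraints Eq.~\eqref{eq:energy_constraint} — or Eq.~\eqref{eq:energy_constraint_multi_mode} in the multi-mode case — bound the mean photon numbers $\expval{\hat a^\dagger\hat a}$, which are functions of the mean vector and covariance matrix alone, so $\hat\phi^G$, sharing these with $\hat\phi$, satisfies the same constraints. The step that demands genuine care — more a bookkeeping hazard than a deep obstacle — is the infinite-dimensional continuity input: one must confirm that the canonical-purification map is trace-norm continuous and that the energy-constrained entropy-continuity bounds cover every reduced state appearing in $I_U$, so that central-limit convergence of the states promotes to convergence of $f$. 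The substantive ingredients, subadditivity and Hadamard invariance, are already available as Lemmas~\ref{lemma:subadditivity} and~\ref{lemma:invariance_Hadamard}, so the remaining work is essentially assembly.
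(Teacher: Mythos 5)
Your proposal is correct and follows essentially the same route as the paper: reduce $I_U$ to a functional of the product signal state $\hat\phi_A$, then invoke the Gaussian extremality lemma of Ref.~\cite{wolf2006} with subadditivity supplied by Lemma~\ref{lemma:subadditivity} and Hadamard invariance by Lemma~\ref{lemma:invariance_Hadamard}. Your additional attention to continuity on energy-bounded sets and to the block-diagonal (hence sender-wise product) structure of $\hat\phi^G_A$ only makes explicit details the paper leaves implicit.
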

Therefore, for any state $\hat{\phi}$,
\be 
\tilde C_{{\rm E},U}(\calN,\hat{\phi} )\le \tilde C_{{\rm E},U}(\calN,\hat{\phi}^G ),
\ee 
we can always restrict 
\be 
C_{{\rm E},U}(\calN)=C_{{\rm E},U}^{(1)}(\calN)=\max_{\hat{\phi}^G} \tilde C_{{\rm E},U}(\calN,\hat{\phi}^G ),
\ee 
with some Gaussian state $\hat{\phi}^G$ satisfying the same energy constraints $\sum_{n_k}\expval{\hat a_{A_k, n_k}^\dagger \hat a_{A_k, n_k}}\le N_{{\rm S},k}$. 

Now we consider the mean of the Gaussian state. Because $I_U$ can be written as a linear combination of entropy, which is only a function of the covariance matrix, the mean does not change $\tilde C_{{\rm E},U}(\calN,\hat{\phi}^G )$. Since a non-zero mean always consumes energy, the optimum is achieved by zero-mean Gaussian states.

\section{Proofs of Lemmas for Theorem~\ref{theorem: EA_MAC_main_additivity} and Theorem~\ref{theorem: EA_MAC_main_Gaussian}}

\label{app:LemmasForTheorem1&2}

\subsection{Proof of Lemma~\ref{lemma:subadditivity}}
\begin{figure}[tbp]
    \centering
    \includegraphics[width=0.45\textwidth]{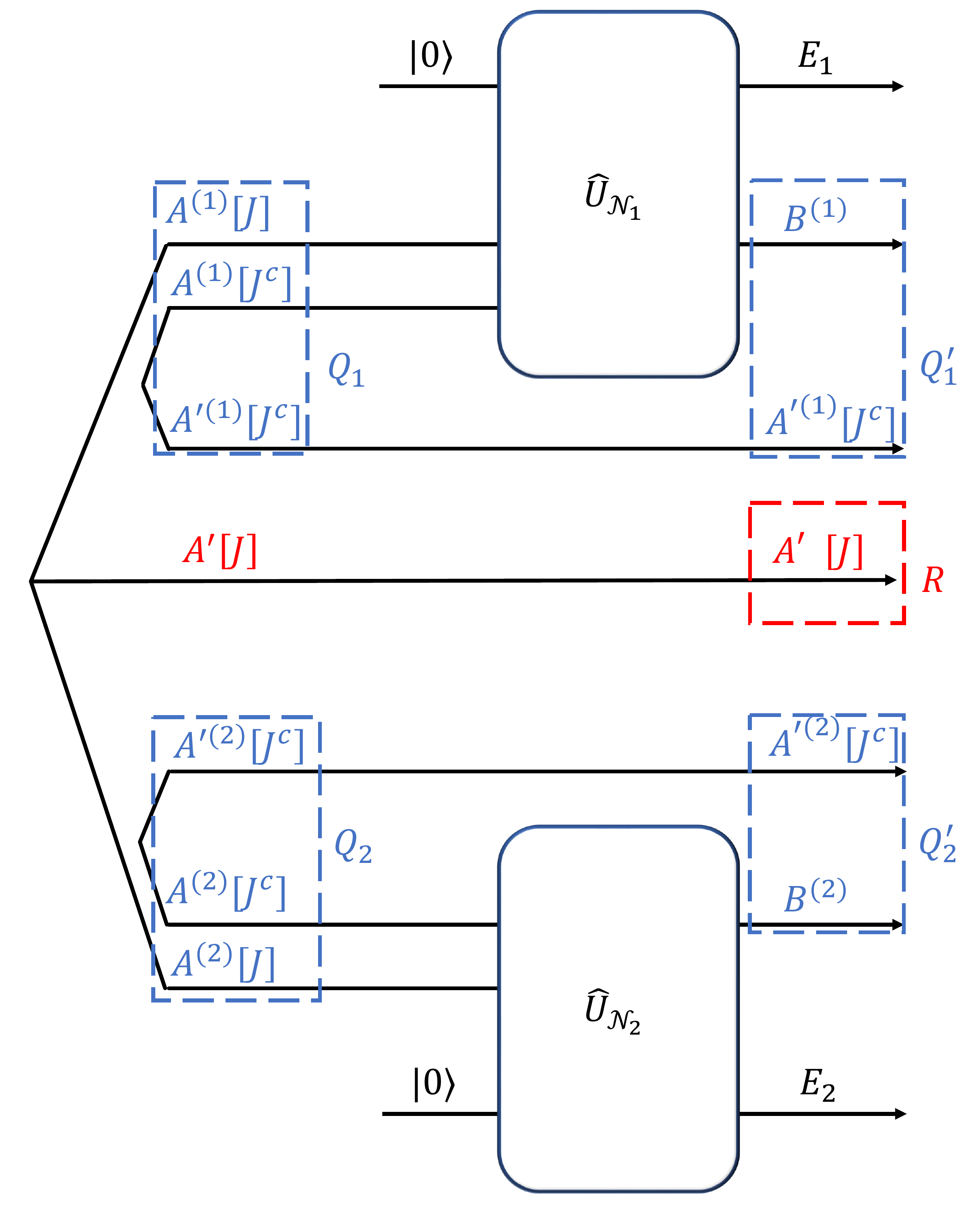}
    \caption{The input-output relation of the parallel uses of two $s$-sender MACs $\calN_1,\calN_2$ assuming $A^\prime[J^c]$ separable, for the derivation of the subadditivity within $A[J]$ as $I(R;Q^\prime_1Q^\prime_2)\le I(RQ_2;Q^\prime_1)+I(RQ_1;Q^\prime_2)_{}$. Here the quantum system of interest $Q'_1Q'_2$ is $B^{(1)}A^{\prime(1)}[J^c]B^{(2)}A^{\prime(2)}[J^c]$. $Q_1,Q_2$ are the input systems for the two uses of channel $\calN\otimes\calI_{A[J^c]}$ and the output systems $Q'_1,Q'_2$. The channels $\calN_1,\calN_2$ are extended to unitaries $\hat\calU_{\calN_1},\hat\calU_{\calN_2}$ including environment modes $E_1,E_2$. The reference system $R$ purifies $Q_1'E_1Q_2'E_2$.}
    \label{fig:subadditivity}
\end{figure}

\label{app:subadditivity}
We first prove the subadditivity of quantum information Eq.~\eqref{eq:subadditivity}
\be 
I(R;Q^\prime_1Q^\prime_2)\le I(RQ_2;Q^\prime_1)+I(RQ_1;Q^\prime_2)
\ee 
for the output state in Fig.~\ref{fig:subadditivity} with $J=\{1,2,\ldots,s\}, J^c=\varnothing$. We include the environment systems $E_1$, $E_2$ to extend the channel to a joint unitary. Note that $Q^\prime_1Q^\prime_2RE_1E_2$ is pure, 
\bal
I(R;Q^\prime_1Q^\prime_2)&=S(Q^\prime_1Q^\prime_2)+S(Q^\prime_1Q^\prime_2|E_1E_2)\,,\\
I(RQ_2;Q^\prime_1)&= S(Q^\prime_1)+S(Q^\prime_1|E_1)\,,\\
I(RQ_1;Q^\prime_2)&= S(Q^\prime_2)+S(Q^\prime_2|E_2)\,.
\eal
Now we prove the subadditivity of conditional entropy. 
\bal 
&S(Q^\prime_1Q^\prime_2|E_1E_2)\\
&=S(Q^\prime_1|E_1E_2)+S(Q^\prime_2|E_1E_2)-I(Q^\prime_1;Q^\prime_2|E_1E_2)\\
&\le S(Q^\prime_1|E_1E_2)+S(Q^\prime_2|E_1E_2)\\
&= S(Q^\prime_1|E_1)- S(Q^\prime_1;E_2|E_1)+S(Q^\prime_2|E_2)- S(Q^\prime_2;E_1|E_2)\\
&\le  S(Q^\prime_1|E_1)+S(Q^\prime_2|E_2)
\eal
Combining together with the subadditivity of Von Neumann entropy, Eq.\eqref{eq:subadditivity} is proven.

For the total rate $J=U\equiv \{1,2,\ldots,s\}$. Plugging in $R= A^\prime, Q_1= B_1, Q_2= B_2$, we have
\bal
I_U&(\hat{\phi}_{A_1 A_2 })\\
&=I(A_1'A_2';B_1B_2)\\
&\le I(A_1'A_2';B_1)+I(A_1'A_2';B_2)\\
&=I_U(\hat{\phi}_{A_1 })+I_U(\hat{\phi}_{A_2 })
\label{eq:I_U_additivity}
\eal
Now we have the sub-additivity of functional $I_U(\hat\phi_{A })$.


We note that this proof does not specify the dimensions of the systems $\{A^{(1)}_k\}_{k=1}^s$, $\{A^{(2)}_k\}_{k=1}^s$, thus it extends to the case where each $A^{(1)}_k$, $A^{(2)}_k$ contains multiple modes.

\subsection{Proof of Lemma~\ref{lemma:subadditivity_region}}
\label{proof_lemma3}
We adopt the same setup in the proof of Lemma~\ref{lemma:subadditivity}. Consider the overall output state.

For the partial rates $J\neq U$, we define $A'$ that purifies the input $A^{(1)}A^{(2)}$. Let $R=A', Q_1'=B^{(1)}, Q_2'=B^{(2)}$.
By this definition the overall systems $Q_1Q'_2E_2R$, $Q'_1E_1Q_2R$ and $Q'_1Q'_2R$ are pure. In general, $R$ is likely inseparable, then the procedure proving the subadditivity of total rate above gives no conclusion. Nevertheless, the partial subadditivity holds that the correlation between $A^{(1)}[J],A^{(2)}[J]$ does not improve $I_J$ when $A^{(1)}[J^c],A^{(2)}[J^c]$ are independent. This is because $A'[J^c]$ is separable in this case as $A'[J^c]=A^{\prime(1)}[J^c]A^{\prime(2)}[J^c]$, then we can let $Q_1=A^{(1)}A^{\prime(1)}[J^c]$, $Q_2=A^{(2)}A^{\prime(2)}[J^c]$, $R=A^{\prime}[J]$ as shown in Fig.~\ref{fig:subadditivity}. Explicitly, conditioned on that $A[J^c]$'s are independent over different channel uses, we obtain the subadditivity in $A[J]$
\bal
&I_J(\calN_1\otimes\calN_2,\{\hat{\phi}_{A_i^{(1)} A_i^{(2)} }\}_{i=1}^s)\\
&=I(A'[J];B^{(1)}B^{(2)}A^{\prime(1)}[J^c]A^{\prime(2)[J^c]})_{}\\
&=I(R;Q^\prime_1Q^\prime_2)\\
&\le I(RQ_2;Q^\prime_1)+I(RQ_1;Q^\prime_2)_{}\\
&= I(RB^{(2)}A^{\prime(2)}[J^c];B^{(1)}A^{\prime(1)}[J^c])_{}\\
&\quad+I(RB^{(1)}A^{\prime(1)}[J^c];B^{(2)}A^{\prime(2)}[J^c])\\
&=I_{J}(\calN_1, \{\hat{\phi}_{A_i^{(1)}}\}_{i=1}^s)+I_{J}(\calN_2, \{\hat{\phi}_{A_i^{(2)}}\}_{i=1}^s)\,.
\label{eq:I_J_additivity}
\eal

We note that this proof does not specify the dimensions of the systems $\{A^{(1)}_k\}_{k=1}^s$, $\{A^{(2)}_k\}_{k=1}^s$, thus it extends to the case where each $A^{(1)}_k$, $A^{(2)}_k$ contains multiple modes.

\subsection{Proof of Lemma~\ref{lemma:invariance_Hadamard}}

First we examine the single-mode BGMAC. The $n$-copy BGMAC $\calN^{\otimes n}$ forms an identity transform over the inputs of the $n$ parallel channel uses per sender $j$, which commutes with $\calU^{\otimes s}$ up to a contraction $\calU^{\otimes s}\to \calU$. Explicitly, given the linear form Eq.\eqref{eq:BGMAC_phaseinsensitive} of $\calN$, we have
\bal
&\calN^{\otimes n}\circ \calU^{\otimes s}:\\
&\hat{a}_{B}^{(k)}\!\!=\!\left[\sum_{j=1}^s w_j \sum_{l=1}^n \frac{H^{\otimes m}_{kl}}{\sqrt n}\left(\left(1-\delta_j\right)\hat{a}_{A_j}^{(l)}+\delta_j\hat{a}_{A_j}^{(l)\dagger} \right)\right]\!+\!\hat\xi^{(k)}\,,\\
&\calU\circ \calN^{\otimes n}:\\
&\hat{a}_{B}^{(k)}=\sum_{l=1}^n \frac{H^{\otimes m}_{kl}}{\sqrt n}\left[\sum_{j=1}^s w_j \left(\left(1-\delta_j\right)\hat{a}_{A_j}^{(l)}+\delta_j\hat{a}_{A_j}^{(l)\dagger} \right)+\hat\xi^{(l)}\right]\,.
\eal
Thus $\calN^{\otimes n}\circ \calU^{\otimes s}= \calU\circ \calN^{\otimes n}$ since the noises $\{\hat\xi^{(\ell)}\}_{\ell=1}^n$ are independent and identically distributed. Therefore, any entropic functional is symmetric under $\calU^{\otimes s}$ since unitary transform on the output state does not change the Von Neumann entropies. \\

For an $N$-fold memory BGMAC, the structure within $\calN$ does not overturn the commutation relation
\begin{widetext}
 \bal
&\calN^{\otimes n}\circ (\otimes_{k=1}^s\calU^{\otimes n_k}):&& \hat{a}_{B,d}^{(k)}\!=\!\left[\sum_{j=1}^s \sum_{n_j=1}^{N_j}  w_{d,n_j} \!\sum_{l=1}^n \frac{H^{\otimes m}_{kl}}{\sqrt n}\!\!\left(\left(1-\delta_{d,n_j}\right)\hat{a}_{A_j,n_j}^{(l)}\!+\!\delta_{d,n_j}\hat{a}_{A_j,n_j}^{(l)\dagger} \right)\!\right]+\hat\xi_d^{(k)}\,,\\
&\calU^{\otimes N}\circ \calN^{\otimes n}: && \hat{a}_{B,d}^{(k)}\!=\!\sum_{l=1}^n\! \frac{H^{\otimes m}_{kl}}{\sqrt n}\left[\sum_{j=1}^s \sum_{n_j=1}^{N_j} \!\! w_{d,n_j}\! \left(\left(1-\delta_{d,n_j}\right)\hat{a}_{A_j,n_j}^{(l)}+\delta_{d,n_j}\hat{a}_{A_j,n_j}^{(l)\dagger} \right)+\hat\xi_d^{(l)}\right]\,,
\eal
\end{widetext}
for $1\le d\le N$. Note that given $d$ the noise terms $\hat \xi_d^{(l)}$ are independent and identically distributed over channel uses $1\le l\le n$, we have $\calN^{\otimes n}\circ (\otimes_{k=1}^s\calU^{\otimes n_k})= \calU^{\otimes N}\circ \calN^{\otimes n}$. 

Combining with Lemma \ref{lemma:subadditivity}, the proof above for the single-mode BGMAC straightforwardly extends to the multi-mode memory BGMAC.

\section{Numerical method of evaluating the union region}
\label{app:evalmethod}
Here we introduce our method to evaluate the union of Gaussian-state rate region of two-sender EA-BGMACs.

To illustrate the union, we optimize the achievable rate tuples along $n\gg 1$ rays starting from $(R_1,R_2)=(0,0)$ by fixing $R_2/R_1=c\in\{c_1, c_2, \ldots c_n\}$. Concretely, for each $c$ we numerically maximize the norm $|\bm R|=\sqrt{1+c^2}R_1$ of rate tuples satisfying Eq.\eqref{eq:CeMAC_qinfo} over Gaussian states $\hat\phi(\bm r,\bm\theta)$. When $n\to \infty$, the convex hull of the optimal rate tuples and $(0,0)$ exactly generates the capacity region; when $n$ is small, the resolution of the convex hull falls insufficient at the corners of the capacity region. In Fig.~\ref{fig:unionregion} we choose $c$ heterogeneously such that the polar angle of data points $\varphi=\arctan \left({\frac{R_2/C_2^{\rm coh}}{R_1/C_1^{\rm coh}}}\right)$ is uniformly distributed in $[0,\pi/2)$, with $n=20$. 

In pursuit of the optimal input state, we evaluate the set of functions 
\be 
F_J(\bm r, \bm \theta)=I(A^\prime[J];B|A^\prime[J^c])_{\hat{\phi}\left(\bm r, \bm \theta\right)}, \forall J,
\label{eq:F_J}
\ee 
and solve their maxima over the parameters $\bm r$ and $\bm \theta$. Here the input $\hat{\phi}\left(\bm r, \bm \theta\right)$ is given by Eq.~\eqref{Gaussian_input}.
Note that the ranges of $\bm r$ are finite. From Eq.~\eqref{eq:Nsprime}, the condition for a bona fide state with $N_{S,k}^\prime\ge 0$ is 
\be 
-r_k^\star\le r_k\le r_k^\star\,,1\le k\le s\,,
\ee
where 
\bal 
r_k^\star=\frac{1}{2}\log\left(1 + 2 N_{S,k} + 2 \sqrt{N_{S,k} (1+ N_{S,k})}\right)\,.
\eal 

Although it seems straightforward that each user consumes all available energy $N_{{\rm S},k}$ to optimize the capacity of total rate, we provide a proof for it as below. Define the total-rate capacity 
\be 
\tilde I_U(\bm N_{\rm S})=\max_{\hat\phi\in \calH_{\bm N_{\rm S}}} I_U(\calN, \hat\phi)
\ee
where $\bm N_{\rm S}=[N_{\rm S,1},\ldots, N_{\rm S,s}]^T$, $\calH_{\bm N_{\rm S}}$ consists of all possible states with energy constraint $\expval{{\hat a}_{A_k}^\dagger a_{A_k}}\le N_{{\rm S},k}$.
Indeed, the monotonicity of $\tilde I_U(\bm N_{\rm S})$ with respect to the consumed energies $\{N_{{\rm S},k}\}_{k=1}^s$ can be obtained from the concavity of $\tilde I_U$, as shown in the proposition below.
\begin{proposition}
The EA capacity of total rate of BGMAC monotonically increases with the energy consumption of each user.
\label{proposition:monotonicity_Ns}
\end{proposition}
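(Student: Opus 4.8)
The plan is to obtain the monotonicity of $\tilde I_U$ from two properties: that $\tilde I_U$ is concave on the positive orthant $[0,\infty)^s$, and that it is non‑negative there. Non‑negativity is immediate, since for every $\bm N_{\rm S}\ge 0$ the set $\calH_{\bm N_{\rm S}}$ contains the vacuum (so the maximum is over a nonempty set) and $I_U(\calN,\hat\phi)=I(A';B)_{\hat\rho}\ge 0$ for every admissible input. All the content is therefore in the concavity, and once concavity is in hand the monotonicity is elementary.

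For concavity I would use the additivity of the total rate (Theorem~\ref{theorem: EA_MAC_main_additivity}) together with a time‑sharing construction. Fix $\bm N^{(0)},\bm N^{(1)}\ge 0$ and a rational $\lambda=p/q\in(0,1)$, and set $\bar{\bm N}=\lambda\bm N^{(0)}+(1-\lambda)\bm N^{(1)}$. Consider $q$ parallel uses of $\calN$ under the average per‑sender energy budget $\bar{\bm N}$, and let every sender feed its $\bm N^{(0)}$‑optimal input on $p$ of the uses and its $\bm N^{(1)}$‑optimal input on the remaining $q-p$. This is a legitimate product input for $\calN^{\otimes q}$: the schedule of which block is which is fixed and public, so no cooperation between senders is required, and the constraint is met with equality. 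Since $I_U$ of a product input across a product channel is the sum of the per‑copy values, this input attains total value $\lambda q\,\tilde I_U(\bm N^{(0)})+(1-\lambda)q\,\tilde I_U(\bm N^{(1)})$. By the additivity of the total rate, the optimal total value of $\calN^{\otimes q}$ under this budget equals $q$ times the single‑use optimum $\tilde I_U(\bar{\bm N})$; dividing by $q$ yields $\tilde I_U(\bar{\bm N})\ge\lambda\,\tilde I_U(\bm N^{(0)})+(1-\lambda)\,\tilde I_U(\bm N^{(1)})$ for rational $\lambda$, and hence for all $\lambda\in[0,1]$ (a locally bounded rationally‑concave function on a convex set is concave). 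This is the concavity of $\tilde I_U$.

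It then remains to close the argument. Fix all coordinates of $\bm N_{\rm S}$ except the $k$th and put $h(x)=\tilde I_U(\dots,x,\dots)$ for $x\in[0,\infty)$; $h$ is concave (a restriction of a concave function) and $h\ge 0$. If $h$ were not non‑decreasing there would be $0\le a<b$ with $s_0:=[h(b)-h(a)]/(b-a)<0$, and then concavity forces $[h(c)-h(b)]/(c-b)\le s_0$ for every $c>b$, so $h(c)\le h(b)+s_0(c-b)\to-\infty$, contradicting $h\ge 0$. Hence $\tilde I_U$ is non‑decreasing in every user's energy, which is the proposition; in particular no encoding using strictly less than the full budget can beat the full‑budget optimum, so—consistently with Theorem~\ref{theorem: EA_MAC_main_TMSV}—the optimum is attained with each sender spending all of its energy $N_{{\rm S},k}$ (one may also see this directly by displacing the zero‑mean optimizer of Theorem~\ref{theorem: EA_MAC_main_TMSV} on each mode $A_k$, a local unitary that leaves $I_U$ invariant while raising $\langle\hat a_{A_k}^\dagger\hat a_{A_k}\rangle$ to $N_{{\rm S},k}$).

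The main obstacle is the concavity step, and specifically the fact that it cannot be obtained inside a single channel use: the naive move of mixing the two optimal inputs with weights $\lambda,1-\lambda$ via a shared classical flag would require the $s$ senders to share the flag value, which breaks the product structure $\hat\phi_{AA'}=\otimes_k\hat\phi_{A_kA_k'}$ that a MAC input must have. This is precisely why one must route the argument through the additivity of the total rate: over many channel uses the ``flag'' degrades into a deterministic, publicly agreed schedule rather than shared randomness, and additivity guarantees the block strategy is lossless. A secondary, purely technical point—attainment of the energy‑constrained maximum and its continuous dependence on $\bm N_{\rm S}$—is standard for phase‑insensitive bosonic Gaussian channels, since the optimizers are the explicit Gaussian states of Theorem~\ref{theorem: EA_MAC_main_TMSV}, and I would simply cite it (or work with $\epsilon$‑optimizers and let $\epsilon\to0$).
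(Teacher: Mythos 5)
Your proof is correct in outline and reaches the right conclusion, but it establishes the crucial concavity step by a genuinely different route than the paper. The paper works entirely within a single channel use: it invokes the concavity of the auxiliary functional $f_U$ in each sender's reduced input state (Lemma~\ref{lemma:fJconc}, proved via a flag qubit and the data-processing/concavity-of-entropy argument), restricts to energy perturbations along a \emph{single} coordinate $N_{{\rm S},k}$ so that the average of the two optimal inputs remains a product state across senders, and then concludes monotonicity from concavity together with $\lim_{|\bm N_{\rm S}|\to\infty}\tilde I_U=\infty$. You instead obtain concavity by time-sharing over $q$ parallel uses and appealing to the additivity of the total rate (Theorem~\ref{theorem: EA_MAC_main_additivity}), and you close with non-negativity rather than divergence at infinity. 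Your closing step is arguably cleaner (boundedness below is immediate, whereas the divergence the paper uses is asserted without proof), and your time-sharing yields joint concavity on $[0,\infty)^s$ rather than only coordinate-wise concavity. The one caveat you should be aware of: your argument leans on the \emph{energy-constrained} form of the additivity theorem, i.e.\ that the $q$-use optimum under an average per-use budget $\bar{\bm N}$ equals $q\,\tilde I_U(\bar{\bm N})$. The subadditivity bound only gives $\sum_{\ell}\tilde I_U(\bm N^{(\ell)})$ over the possibly non-uniform per-use energies of the reduced states, and collapsing that to $q\,\tilde I_U(\bar{\bm N})$ is itself a concavity statement; so unless you take Theorem~\ref{theorem: EA_MAC_main_additivity} (with the energy constraint) strictly as a black box, there is a latent circularity. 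The paper's single-use averaging avoids this entirely, which is presumably why it proves concavity at the level of the state functional rather than the capacity. Your parenthetical about displacing the zero-mean optimizer to exhaust the energy budget is fine, since displacements commute with the Gaussian channel up to a local unitary on $B$ and leave $I(A';B)$ invariant.
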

\begin{proof}
First, we prove the concavity of $\tilde I_U$ with respect to the energy consumption of each user., i.e., for any $\bm N_{\rm S}^2-\bm N_{\rm S}^1=[0,\ldots,\delta N_{S,k},\ldots,0]^T$, $1\le k\le s$,
\be 
\tilde I_U(\bm N_{\rm S}^1)+\tilde I_U(\bm N_{\rm S}^2)\le 2I_U[(\bm N_{\rm S}^1+\bm N_{\rm S}^2)/2]\,.
\ee
Define an alternative form of $I_J$ that allows entanglement among senders within $A[J]$ or $A[J^c]$
\bal
f_J(\calN,  {\hat\phi}_{A[J]},{\hat\phi}_{A[J^c]})
&\equiv I(A^\prime[J];BA^\prime[J^c])_{\hat{\rho}_f}
\,,
\label{eq:fJ_define}
\eal
where $\hat{\rho}_f=\calN_{A\to B}\otimes\calI_{A^\prime}(\Phi_{A\to AA^\prime}({\hat\phi}_{A[J]}\otimes{\hat\phi}_{A[J^c]}))$. From Lemma \ref{lemma:fJconc}, we have the concavity of $f_U$ with respect to each $\hat\phi_{A_k}$, $1\le k\le s$. For any $\bm N_{\rm S}^2-\bm N_{\rm S}^1=[0,\ldots,\delta N_{S,k},\ldots,0]^T$, $1\le k\le s$, consider the average state $\hat{\bar \phi}_A=(\hat\phi_{A}^{1}+\hat\phi_{A}^{2})/2$ over the optimum states under different energy constraints
\bal 
\hat\phi_{A}^{1}&=\text{argmax}_{\hat\phi_A\in \calH_{\bm N_{\rm S}^1}} I_U(\calN, \hat\phi_A)\,,\\
\hat\phi_{A}^{2}&=\text{argmax}_{\hat\phi_A\in \calH_{\bm N_{\rm S}^2}} I_U(\calN, \hat\phi_A)\,,\\
\eal
the concavity gives
\bal 
&I_U(\calN,\hat\phi_{A}^{1})+I_U(\calN,\hat\phi_{A}^{2})\\
&=f_U(\calN,\hat\phi_{A}^{1})+f_U(\calN,\hat\phi_{A}^{2})\\
&\le 2f_U(\calN,\hat{\bar \phi}_A)
\eal 
The average state $\hat{\bar \phi}_A$ is under the energy constraint $\overline {\bm N}_{\rm S}=(\bm N_{\rm S}^1+\bm N_{\rm S}^2)/2$. 
Note that ${\hat\phi}_{A}^{1}$, ${\hat\phi}_{A}^{2}$ are $s$-partite product states and they only differ in one subsystem $A_k$, therefore the average state $\hat{\bar \phi}_A$ is in an $s$-partite product state. 
Then $f_U(\calN,\hat{\bar \phi}_A)$ reduces to $ I_U(\calN,\hat{\bar \phi}_A)$ and therefore we arrive the concavity with respect to the energy
\bal 
&I_U(\calN,\bm N_{\rm S}^1)+I_U(\calN,\bm N_{\rm S}^2)\\
&=I_U(\calN,{\hat\phi}_{A}^{1})+I_U(\calN,{\hat\phi}_{A}^{2})\\
&\le 2I_U(\calN,\hat{\bar \phi}_A)\\
&\le 2I_U[(\calN,(\bm N_{\rm S}^1+\bm N_{\rm S}^2)/2]
\eal 

The concavity indicates that $\tilde I_U$ has at most one local maximum along each $N_{{\rm S},k}, 1\le k\le s$. Combined with the fact that $\lim_{|\bm N_{\rm S}|\to \infty}\tilde I_U(\bm N_{\rm S})=\infty$, it requires the maxima to locate at the unphysical infinity $|\bm N_{\rm S}|\to \infty$ thus we obtain the proposition.
\end{proof}

\section{Proof of Proposition~\ref{proposition:gaugeinv_local} } 
\label{proof:prop_gaugeinv_local}
 
In this section, we will again utilize the function $f_J$ defined in Eq.~\eqref{eq:fJ_define}. This is not a physical information rate for MAC, but it reduces to $I_J$ when ${\hat\phi}_{A[J]}\otimes{\hat\phi}_{A[J^c]}$ is in a product state among the $s$ senders.

Similar to Lemma~\ref{lemma:subadditivity_region}, $f_J$ is subadditive under the condition that $A[J^c]$'s are independent among the channel uses.
\begin{lemma}
For any two parallel channel uses acting on $A_1$, $A_2$, the conditional additivity holds for $f_J$
\bal 
&f_J(\calN, {\hat\phi}_{A^{(1)}[J]A^{(2)}[J]}, {\hat\phi}_{A^{(1)}[J^c]A^{(2)}[J^c]})\\
&\quad\le f_J(\calN, {\hat\phi}_{A^{(1)}[J]}, {\hat\phi}_{A^{(1)}[J^c]})+f_J(\calN, {\hat\phi}_{A^{(2)}[J]}, {\hat\phi}_{A^{(2)}[J^c]})\,,
\eal
under the constraint that the inputs of senders in $J^c$ for $n$ channels $A^{(1)}[J^c],A^{(2)}[J^c],\ldots,A^{(n)}[J^c]$ are mutually independent.
\label{lemma:newsubadditivity}
\end{lemma}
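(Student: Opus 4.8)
The plan is to re-run the proof of Lemma~\ref{lemma:subadditivity_region} with the functional $f_J$ of Eq.~\eqref{eq:fJ_define} in place of $I_J$. The point to stress is that $f_J$ was introduced precisely so that correlations among the senders inside $A[J]$ (or inside $A[J^c]$) are allowed, and that the diagram chase behind Lemma~\ref{lemma:subadditivity_region} never used the product structure among senders within $J$: it only used the mutual independence of $A^{(1)}[J^c]$ and $A^{(2)}[J^c]$ across the two channel uses. Hence almost nothing has to change.

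Concretely, I would dilate each channel to a unitary $\hat\calU_{\calN_\ell}$ with environment $E_\ell$, exactly as in Fig.~\ref{fig:subadditivity}. By the hypothesis that $A^{(1)}[J^c]$ and $A^{(2)}[J^c]$ are independent, the purification of the $J^c$-inputs splits into two self-contained pure blocks $A^{\prime(1)}[J^c]$ and $A^{\prime(2)}[J^c]$ (the first purifying $A^{(1)}[J^c]$, the second $A^{(2)}[J^c]$), while the possibly-entangled $J$-inputs $A^{(1)}[J]A^{(2)}[J]$ are purified by a single reference $A^{\prime}[J]$. I then set $R=A^{\prime}[J]$, $Q_\ell=A^{(\ell)}A^{\prime(\ell)}[J^c]$, and $Q_\ell^{\prime}=B^{(\ell)}A^{\prime(\ell)}[J^c]$. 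Because the two $J^c$-reference blocks are pure on their own, $R$ is a bona fide purification of $Q_1Q_2$, so the configuration of Fig.~\ref{fig:subadditivity} applies and
\begin{align}
&f_J(\calN_1\otimes\calN_2,\hat\phi_{A^{(1)}[J]A^{(2)}[J]},\hat\phi_{A^{(1)}[J^c]A^{(2)}[J^c]})\nonumber\\
&\quad=I(A^{\prime}[J];B^{(1)}B^{(2)}A^{\prime(1)}[J^c]A^{\prime(2)}[J^c])_{\hat{\rho}_f}=I(R;Q_1^{\prime}Q_2^{\prime})_{\hat{\rho}_f}.
\end{align}

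Next I would invoke the subadditivity of quantum mutual information, Eq.~\eqref{eq:subadditivity} (derived in Sec.~\ref{app:subadditivity} for systems of arbitrary dimension and with no constraint on the correlations of $R$), to get $I(R;Q_1^{\prime}Q_2^{\prime})\le I(RQ_2;Q_1^{\prime})+I(RQ_1;Q_2^{\prime})$. It then remains to identify $I(RQ_2;Q_1^{\prime})$ with the single-use functional $f_J(\calN_1,\hat\phi_{A^{(1)}[J]},\hat\phi_{A^{(1)}[J^c]})$, and symmetrically for $I(RQ_1;Q_2^{\prime})$. For the first: under the unitary dilation, $RQ_2$ carries the same entropies as $A^{\prime}[J]B^{(2)}A^{\prime(2)}[J^c]E_2$, which is the complement of $B^{(1)}A^{\prime(1)}[J^c]E_1$ in the overall pure state, so exactly as in the proof of Lemma~\ref{lemma:subadditivity} one obtains $I(RQ_2;Q_1^{\prime})=S(Q_1^{\prime})_{\hat{\rho}_f}+S(Q_1^{\prime}|E_1)_{\hat{\rho}_f}$; and the same two-term expression equals $I(A^{\prime(1)}[J];B^{(1)}A^{\prime(1)}[J^c])=f_J(\calN_1,\hat\phi_{A^{(1)}[J]},\hat\phi_{A^{(1)}[J^c]})$ for a single use of $\calN_1$, because $A^{(1)}[J]$ and $A^{(1)}[J^c]$ belong to different senders, so the marginal of the two-use input on $A^{(1)}$ is exactly $\hat\phi_{A^{(1)}[J]}\otimes\hat\phi_{A^{(1)}[J^c]}$ and the reduced state on $B^{(1)}A^{\prime(1)}[J^c]E_1$ depends only on that marginal. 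Summing the two identifications closes the inequality.

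The step I expect to be the genuine obstacle is this last identification together with the check that $R=A^{\prime}[J]$ is a legitimate reference: both hinge on the independence of $A^{(1)}[J^c]$ and $A^{(2)}[J^c]$ — without it, $A^{\prime}[J^c]$ cannot be split into per-channel-use blocks, $R$ fails to purify $Q_1Q_2$, and the right-hand side of Eq.~\eqref{eq:subadditivity} no longer reduces to the single-use $f_J$'s (this is the same obstruction already noted after Eq.~\eqref{eq:subadditivity} in the main text). Once one is careful that $\hat\calU_{\calN_1}$ acts only on $A^{(1)}$ and the vacuum, so that the $E_1$-correlations with $B^{(1)}A^{\prime(1)}[J^c]$ are identical in the one-use and two-use pictures, the rest is the same entropy bookkeeping as in Sec.~\ref{app:subadditivity} and Sec.~\ref{proof_lemma3}, and, as there, it does not use the dimensions of $\{A^{(1)}_k\}$, $\{A^{(2)}_k\}$, so the multi-mode case follows verbatim.
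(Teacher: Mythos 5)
Your proposal is correct and follows essentially the same route as the paper: the paper's proof of Lemma~\ref{lemma:newsubadditivity} simply reuses the identification $R=A'[J]$, $Q_\ell=A^{(\ell)}A^{\prime(\ell)}[J^c]$, $Q'_\ell=B^{(\ell)}A^{\prime(\ell)}[J^c]$ from Lemma~\ref{lemma:subadditivity_region} together with the subadditivity of quantum mutual information from Appendix~\ref{app:subadditivity}, exactly as you do. Your extra care in checking that $I(RQ_2;Q'_1)$ reduces to the single-use $f_J$ via the marginal on $B^{(1)}A^{\prime(1)}[J^c]E_1$ is a correct elaboration of a step the paper leaves implicit.
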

\begin{proof}
The same technique in Eq.~\eqref{eq:I_J_additivity} can be utilized to give
\bal
f_J&(\calN, {\hat\phi}_{A^{(1)}[J] A^{(2)}[J]}, {\hat\phi}_{A^{(1)}[J^c] A^{(2)}[J^c]})\\
&=I(A^{\prime}[J];B^{(1)}B^{(2)}A^{\prime(1)}[J^c]A^{\prime(2)}[J^c])_{\rho_f}\\
&\le I(A^{\prime}[J]B^{(2)}A^{\prime(2)}[J^c];B^{(1)}A^{\prime(1)}[J^c])_{\rho_f}\\
&\quad+I(A^{\prime}[J]B^{(1)}A^{\prime(1)}[J^c];B^{(2)}A^{\prime(2)}[J^c])_{\rho_f}\\
&=f_J(\calN, {\hat\phi}_{A^{(1)}[J] }, {\hat\phi}_{A^{(1)}[J^c] })+f_J(\calN, {\hat\phi}_{ A^{(2)}[J]}, {\hat\phi}_{ A^{(2)}[J^c]})
\eal
It leads to the subadditivity of functional $f_J$, conditioned on that $A[J^c]$'s are independent over different channel uses. 
\end{proof}

From the subadditivity in Lemma~\ref{lemma:newsubadditivity}, we obtain the following.
\begin{lemma} 
Given the energy constraint $\expval{\hat a_{A_k}^\dagger \hat a_{A_k}}\le N_{{\rm S},k}$ and a fixed Gaussian state in system $A[J^c]$, the optimum of $f_J$ is achieved by a Gaussian state in system $A[J]$
\be 
f_J(\calN, {\hat\phi}_{A[J] }, {\hat\phi}_{A[J^c] }))\le f_J(\calN,{\hat\phi}^G_{A[J] }, {\hat\phi}^G_{A[J^c] }))\,.
\ee 
\label{lemma:newGaussopt}
\end{lemma}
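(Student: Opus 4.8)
The plan is to run the Gaussian–extremality argument of Ref.~\cite{wolf2006} (its Lemma~1, the quantum central limit theorem) on the map $\hat\phi_{A[J]}\mapsto f_J(\calN,\hat\phi_{A[J]},\hat\phi^G_{A[J^c]})$ with the Gaussian state $\hat\phi^G_{A[J^c]}$ held fixed, exactly as in the proof of Lemma~\ref{lemma:Gaussopt} but now in the presence of the extra conditioning on $A[J^c]$. Recall that Lemma~1 of Ref.~\cite{wolf2006} shows that a functional that is (i) continuous on the energy‑constrained set, (ii) subadditive, and (iii) invariant under the channel‑wise Hadamard transform of Eq.~\eqref{eq:Hadamard_def} is maximized, among states with a prescribed covariance matrix, by the Gaussian state with that covariance matrix; since $\hat\phi^G_{A[J]}$ has the same first and second moments as $\hat\phi_{A[J]}$ it obeys the same energy constraints, so once (i)--(iii) are established for our functional the lemma follows. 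Property (ii) is precisely Lemma~\ref{lemma:newsubadditivity}: because $\hat\phi^G_{A[J^c]}$ appears identically in each tensor factor, the senders in $J^c$ are automatically mutually independent across channel uses, so the conditional subadditivity of $f_J$ applies and iterates to $n$ copies. Property (i) is routine, $f_J$ being a fixed linear combination of von Neumann entropies of marginals of the (bounded‑energy) output pure state.

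The content is property (iii). I would use the all‑sender Hadamard $\calU^{\otimes s}$ of Lemma~\ref{lemma:invariance_Hadamard}, whose restrictions to the $J$ and $J^c$ signal modes I write $\calU_{A[J]},\calU_{A[J^c]}$, and which satisfies the operator identity $\calN^{\otimes n}\circ\calU^{\otimes s}=\calU_B\circ\calN^{\otimes n}$ (with $\calU_B$ a Hadamard unitary on the $n$ output copies) established in the proof of that lemma. Apply $\calU^{\otimes s}$ to the purified input $\hat\phi_{A[J]A'[J]}^{\otimes n}\otimes(\hat\phi^G_{A[J^c]A'[J^c]})^{\otimes n}$: it Hadamard‑mixes the $A[J]$ copies and the $A[J^c]$ copies while leaving the reference systems untouched. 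The key point is that a product of identical Gaussian states has block‑diagonal covariance matrix, which is a fixed point of Hadamard mixing, so $\calU_{A[J^c]}$ leaves the reduced state on $A[J^c]^{\otimes n}$ equal to $(\hat\phi^G_{A[J^c]})^{\otimes n}$; hence $\calU^{\otimes s}$ applied to the purified input is again a valid purification of the pair $\bigl(\calU_{A[J]}(\hat\phi_{A[J]}^{\otimes n}),(\hat\phi^G_{A[J^c]})^{\otimes n}\bigr)$. Feeding this purification through $\calN^{\otimes n}$, using the commutation identity and the invariance of $I(A'[J]^{\otimes n};B^{\otimes n}A'[J^c]^{\otimes n})$ under the output unitary $\calU_B$, and additivity of the mutual information on the product output, one gets $f_J\bigl(\calN^{\otimes n},\calU_{A[J]}(\hat\phi_{A[J]}^{\otimes n}),(\hat\phi^G_{A[J^c]})^{\otimes n}\bigr)=n\,f_J(\calN,\hat\phi_{A[J]},\hat\phi^G_{A[J^c]})$, which is (iii). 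Combining (iii) with (ii),
\begin{align}
n\, f_J(\calN,\hat\phi_{A[J]},\hat\phi^G_{A[J^c]})
&= f_J\!\left(\calN^{\otimes n},\,\calU_{A[J]}(\hat\phi_{A[J]}^{\otimes n}),\,(\hat\phi^G_{A[J^c]})^{\otimes n}\right) \nonumber \\
&\le \sum_{\ell=1}^{n} f_J\!\left(\calN,\,\bigl[\calU_{A[J]}(\hat\phi_{A[J]}^{\otimes n})\bigr]_\ell,\,\hat\phi^G_{A[J^c]}\right), \nonumber
\end{align}
and by the quantum central limit theorem the single‑copy marginals $\bigl[\calU_{A[J]}(\hat\phi_{A[J]}^{\otimes n})\bigr]_\ell$ converge to the Gaussian state $\hat\phi^G_{A[J]}$ as $n\to\infty$; dividing by $n$ and invoking continuity~(i) gives $f_J(\calN,\hat\phi_{A[J]},\hat\phi^G_{A[J^c]})\le f_J(\calN,\hat\phi^G_{A[J]},\hat\phi^G_{A[J^c]})$, which is the lemma.

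The main obstacle is step (iii), and it is exactly why the lemma is stated with $\hat\phi^G_{A[J^c]}$ Gaussian: if the fixed state on $A[J^c]$ were arbitrary, Hadamard‑mixing its copies would perturb its marginals toward a Gaussian, so $\calU^{\otimes s}$ applied to the purified input would no longer be a purification of $\bigl(\calU_{A[J]}(\hat\phi_{A[J]}^{\otimes n}),\hat\phi^G_{A[J^c]}{}^{\otimes n}\bigr)$, and the invariance of $f_J$ would fail; restricting the Hadamard to the $A[J]$ modes alone does not commute with $\calN^{\otimes n}$, so there is no way around this. A secondary, purely technical point is the usual central‑limit bookkeeping needed to pass the $n\to\infty$ limit through $f_J$ uniformly over the energy‑constrained set, which is handled exactly as in Ref.~\cite{wolf2006}.
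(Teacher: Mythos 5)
Your proposal is correct and follows essentially the same route as the paper: fix the Gaussian state on $A[J^c]$, verify subadditivity via Lemma~\ref{lemma:newsubadditivity} (valid because the identical product state on $A[J^c]$ guarantees the required independence across channel uses), and establish Hadamard invariance by observing that the product of identical Gaussian states on $A[J^c]$ is a fixed point of the Hadamard mixing (the paper's computation $\bm H V_{A_j}^{(\ell)}\bm H^T=2V_{A_j}^{(\ell)}$, hence $V'=\oplus_\ell V_{A_j}^{(\ell)}$), so that Wolf et al.'s extremality lemma applies. Your remark on why Gaussianity of the fixed $A[J^c]$ state is essential matches the paper's reasoning as well.
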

\begin{proof}
The proof is similar to Lemma \ref{lemma:Gaussopt}. Here we prove the Gaussian optimality for functional $f({\hat\phi}_{A[J]})=f_J(\calN,{\hat\phi}_{A[J]},{\hat\phi}^0)$ given a fixed Gaussian state ${\hat\phi}^0$ in $A[J^c]$. For $n$ parallel channel uses, we need to prove that $f$ is invariant under the Hadamard transform $H^{\otimes m}$ defined in Eq.~\eqref{eq:Hadamard_def} with $m=\log_2 n$, acting on the two input systems $A^{(1)}[J], A^{(2)}[J]$. Similarly, we have
\bal
&\calN^{\otimes n}\circ \calU^{\otimes s}:\\
&\hat{a}_{B}^{(k)}\!\!=\!\!\sum_{j\in J} w_j \!\!\left(\!\!\left(1-\delta_j\right)\!\sum_{l=1}^n \frac{H^{\otimes m}_{kl}}{\sqrt n}\hat{a}_{A_j}^{(l)}\!\!+\!\delta_j\!\sum_{l=1}^n \frac{H^{\otimes m}_{kl}}{\sqrt n}\hat{a}_{A_j}^{(l)\dagger} \!\!\right)+\\
&\qquad\!\!\sum_{j\in J^c} w_j \left(\left(1-\delta_j\right) \hat{a}_{A_j}^{(k)}+\!\delta_j \hat{a}_{A_j}^{(k)\dagger} \right)+\ldots\!\,,\\
&\calU\circ \calN^{\otimes n}:\\
&\hat{a}_{B}^{(k)}=\sum_{l=1}^n \frac{H^{\otimes m}_{kl}}{\sqrt n}\sum_{j\in J} w_j \left(\left(1-\delta_j\right)\hat{a}_{A_j}^{(l)}+\delta_j\hat{a}_{A_j}^{(l)\dagger} \right)\\
&\qquad+\sum_{l=1}^n \frac{H^{\otimes m}_{kl}}{\sqrt n}\sum_{j\in J^c} w_j \left(\left(1-\delta_j\right)\hat{a}_{A_j}^{(l)}+\delta_j\hat{a}_{A_j}^{(l)\dagger} \right) +\cdots\,.
\eal

Now consider $j\in J^c$. Note that $\{\hat a_{A_j}^{(\ell)}\}_{\ell=1}^n$ are identical Gaussian states ${\hat\phi}^0$. Define $4\times 4$ Hadamard transform $\bm H=H\oplus H=H\otimes I$ acting on the covariance matrix. Assuming the covariance matrix of the $j$th sender is $\oplus_{\ell=1}^nV_{A_j}^{(\ell)}$, we have the covariance matrix of $\hat a^{\prime(k)}_{A_j}=\sum_{l=1}^n \frac{H^{\otimes m}_{kl}}{\sqrt n}\hat{a}_{A_j}^{(l)}$:
\be 
V'=\frac{\bm H^{\otimes m/2}}{\sqrt n}\left(\oplus_{\ell=1}^nV_{A_j}^{(\ell)}\right)\left(\frac{\bm H^{\otimes m/2}}{\sqrt n}\right)^T=\oplus_{\ell=1}^n V_{A_j}^{(\ell)}
\ee
since
\be 
\bm HV_{A_j^{\ell}}\bm H^T=2V_{A_j}^{(\ell)}\,.
\ee
Meanwhile, the displacement does not affect the Von Neumann entropy of a Gaussian state. Thus $\sum_{l=1}^n \frac{H^{\otimes m}_{kl}}{\sqrt n}\hat{a}_{A_j}^{(l)}=\hat a_{A_j}^{(k)}$. Hence we have the commutation
\be 
\calN^{\otimes n}\circ \calU^{\otimes s}=\calN^{\otimes n}\circ \calU^{\otimes s}\,.
\ee
\end{proof}

Explicitly, the energy constraint can be written with respect to the CM as
\be
\Tr V_{A_k}=\mbox{Constant}\,, 1\le k\le s\,.
\ee 

We prove the concavity of $f_J$ as
\begin{lemma}
$f_J(\calN, {\hat\phi}_{A[J]}, {\hat\phi}_{A[J^c]})$ is concave individually in $A[J]$, $A[J^c]$, namely
\bal
f_J(\calN, {\hat\phi}_{A[J]}^1, {\hat\phi}_{A[J^c]})+&f_J(\calN, {\hat\phi}_{A[J]}^2, {\hat\phi}_{A[J^c]})\\
&\quad \quad \le 2f_J(\calN, \overline{\hat\phi}_{A[J]}, {\hat\phi}_{A[J^c]})\,,\\
f_J(\calN, {\hat\phi}_{A[J]}, {\hat\phi}_{A[J^c]}^1)+&f_J(\calN, {\hat\phi}_{A[J]}, {\hat\phi}_{A[J^c]}^2)\\
&\quad \quad \le 2f_J(\calN, {\hat\phi}_{A[J]}, \overline{\hat\phi}_{A[J^c]})\,.
\label{eq:fconcavity}
\eal
\label{lemma:fJconc}
\end{lemma}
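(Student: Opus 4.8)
The plan is to prove both halves of the lemma by the standard flagged‑purification argument, using only the chain rule and non‑negativity of quantum mutual information together with the elementary fact that $f_J$ depends on its arguments only through the reduced states $\hat\phi_{A[J]}$ and $\hat\phi_{A[J^c]}$: since $I(A'[J];BA'[J^c])$ is invariant under any unitary acting on $A'[J]$ and under any unitary acting on $A'[J^c]$, the choice of purification in Eq.~\eqref{eq:fJ_define} is immaterial. Neither the subadditivity lemmas nor the Hadamard symmetry enters here.

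For concavity in $A[J]$, I would write $\hat\phi_{A[J]}=\sum_x p_x\hat\phi^x_{A[J]}$, purify each component as $\ket{\phi^x}_{A[J]\tilde A'}$, and adopt the flagged purification $\ket{\Phi}_{A[J]\tilde A'X}=\sum_x\sqrt{p_x}\,\ket{\phi^x}_{A[J]\tilde A'}\ket{x}_X$, whose marginal on $A[J]$ is $\hat\phi_{A[J]}$ and whose purifying system is $A'[J]=\tilde A'X$. Fixing any purification $\ket{\phi}_{A[J^c]A'[J^c]}$ of $\hat\phi_{A[J^c]}$ and sending $A$ through $\calN$, the chain rule gives
\[
f_J(\calN,\hat\phi_{A[J]},\hat\phi_{A[J^c]})=I(\tilde A'X;BA'[J^c])=I(X;BA'[J^c])+I(\tilde A';BA'[J^c]|X).
\]
Because $X$ is classical the conditional term equals $\sum_x p_x\,I(\tilde A';BA'[J^c])_x$, and conditioning on $X=x$ leaves exactly the channel output produced by the input $\hat\phi^x_{A[J]}\otimes\hat\phi_{A[J^c]}$ with $\tilde A'$ and $A'[J^c]$ as references, so this term equals $\sum_x p_x\,f_J(\calN,\hat\phi^x_{A[J]},\hat\phi_{A[J^c]})$. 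Discarding the non‑negative term $I(X;BA'[J^c])$ yields concavity in $A[J]$, which applied to $\overline{\hat\phi}_{A[J]}=(\hat\phi^1_{A[J]}+\hat\phi^2_{A[J]})/2$ is the first inequality of the lemma.

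For concavity in $A[J^c]$ I would run the mirror construction, flagging the purification of $\hat\phi_{A[J^c]}=\sum_y q_y\hat\phi^y_{A[J^c]}$ so that $A'[J^c]=\tilde B'Y$ with a classical flag $Y$, while keeping a fixed purification of $\hat\phi_{A[J]}$. The chain rule now reads $f_J=I(A'[J];B\tilde B'Y)=I(A'[J];Y)+I(A'[J];B\tilde B'|Y)$, and the conditional term again evaluates to $\sum_y q_y\,f_J(\calN,\hat\phi_{A[J]},\hat\phi^y_{A[J^c]})$. The step to check is that the leading term vanishes, $I(A'[J];Y)=0$: the references $A'[J]$ and $Y$ are acted on only by identities throughout, and before the channel the marginal on $A'[J]Y$ is already the product $\hat\phi_{A'[J]}\otimes\rho_Y$ because the input factorizes across the $A[J]A'[J]$ versus $A[J^c]\tilde B'Y$ cut; hence it remains a product afterwards. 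Thus $f_J$ is in fact affine --- hence concave --- in $\hat\phi_{A[J^c]}$, giving the second inequality.

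This is essentially the textbook argument that the entanglement‑assisted mutual information is concave in the channel input, specialised to the functional $f_J$. Accordingly I expect no genuine obstacle; the only points needing care are the bookkeeping that conditioning on the flag reproduces exactly the component functionals $f_J(\calN,\cdot,\cdot)$, and the observation that the cross term for the $A[J^c]$ direction vanishes precisely because $A'[J]$ and the flag $Y$ never interact. For the applications only binary mixtures are needed, and any linear constraint on the covariance matrices (such as the energy constraint invoked later) is preserved under convex combination, so no continuity or constraint issues arise.
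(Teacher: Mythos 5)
Your overall strategy---flag the mixed argument in the purification, apply the chain rule, and drop a non\nobreakdash-negative mutual information---is the same as the paper's, which purifies the average state by adjoining a flag register $R$ to $A'$. However, there is a concrete gap at the step ``because $X$ is classical the conditional term equals $\sum_x p_x\, I(\tilde A';BA'[J^c])_x$.'' In the coherent flagged purification $\sum_x\sqrt{p_x}\,\ket{\phi^x}_{A[J]\tilde A'}\ket{x}_X$ the register $X$ is \emph{not} classical: the joint output state retains coherences between different $x$ sectors (its marginal on $X$ has off-diagonal entries $\sqrt{p_xp_{x'}}\braket{\phi^{x'}|\phi^x}$), so it is not a quantum--classical state and $I(\tilde A';BA'[J^c]|X)$ is not the average of the component functionals. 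A single-qubit identity channel with one pure and one maximally mixed component already gives two different numbers for these quantities. Nor can you salvage the step by claiming an inequality in the needed direction: conditional mutual information is not monotone under channels applied to the conditioning system, so there is no general relation between the coherent conditional term and the dephased (averaged) one.

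The missing ingredient is precisely the measurement step in the paper's proof: first dephase the flag. Dephasing $X$ is a channel acting only on the reference side $\tilde A'X$, so by the data-processing inequality $I(\tilde A'X;BA'[J^c])$ can only decrease; the resulting state is quantum--classical in $X$ as in Eq.~\eqref{QCstate}, and only then does the chain rule yield $I(X;BA'[J^c])+\sum_x p_x\, I(\tilde A';BA'[J^c])_{\hat\rho^x}$, after which the non-negative first term is dropped. With that one line inserted your argument for both halves goes through and coincides with the paper's. Two smaller remarks: because the dephasing step is a one-way inequality, your claim that $f_J$ is \emph{affine} in $\hat\phi_{A[J^c]}$ is not established---only concavity is, which is all the lemma asserts; and your observation that $I(A'[J];Y)=0$ by the product structure across the $A[J]A'[J]$ versus $A[J^c]\tilde B'Y$ cut is correct and mirrors the paper's Eq.~\eqref{step:equal_indep}.
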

This also implies that $f_J(\calN,{\hat\phi}_{A[J]}, {\hat\phi}_{A[J^c]})$ is concave individually in each input $A_k$. 

\begin{proof}

To begin with, we prove the concavity in $A[J]$. Consider two inputs 
\be 
\hat{\phi}^\ell_{A}=\hat{\phi}^\ell_{A[J]}\otimes\hat{\phi}_{A[J^c]}, \ell=1,2\,,
\ee  
which are purified with the assistance of $A^\prime$. We hope to show that the average state $\overline {\hat\phi}_A=(\hat{\phi}^1_{A[J]}+\hat{\phi}^2_{A[J]})/2\otimes \hat{\phi}_{A[J^c]} $ gives a better rate. 
To purify the average state, we add an extra qubit $R$ to $A^\prime$. 
The overall input state
\be 
\ket{\Phi_{A\to AA^\prime R}(\overline{\hat\phi}_A)}=\frac{1}{\sqrt{2}}\sum_{\ell=1}^2 \ket{{\hat\phi}^\ell_{}}\ket{\ell}_R
\ee 
Denote the corresponding output as 
\be 
\hat{\bar{\rho}}_{B A^{\prime }R}=\calN_{A\to B}\otimes \calI_{A^\prime R} (\hat{\Phi}_{A\to AA^\prime R}(\overline{\hat\phi}_A)).
\ee 
We can consider the data processing of measuring $R$, 
leading to the quantum-classical state
\ba
\hat{\rho}_{B A^{\prime}R}
&=&\frac{1}{2}\left[\sum_{\ell=1}^2 \state{\ell}_R\otimes\hat{\rho}^\ell_{BA^\prime} \right],
\label{QCstate}
\ea 
where the output states for each input
\be 
\hat{\rho}^\ell_{BA^\prime}=\calN_{A\to B}\otimes \calI_{A^\prime} (\hat{\phi}^\ell_{}).
\ee 
This measurement will not increase the mutual information from data processing inequality. Since $RA'[J]$ purifies $A[J]$, we have
\begin{align}  
f_J(\calN, \overline{\hat\phi}_{A[J]}, \hat\phi_{A[J^c]})
&=I(A^\prime[J]R;BA^\prime[J^c])_{\hat{\bar{\rho}}}\\
&\ge  I(A^\prime[J]R;BA^\prime[J^c])_{\hat{\rho}}
\label{Ifinalstep1}
\\
&\ge I(A^\prime[J];BA^\prime[J^c]|R)_{\hat{\rho}}
\label{Ifinalstep2}
\\ 
&=\frac{1}{2}\sum_{\ell=1}^2I(A^\prime[J];BA^\prime[J^c])_{\hat{\rho}^\ell},
\label{Ifinalstep3}
\end{align}  
where \eqref{Ifinalstep1} follows from data processing inequality;
Eq.~\eqref{Ifinalstep2} follows from 
\begin{align}
&I(A^\prime[J]R;BA^\prime[J^c])_{\hat{\rho}}
\nonumber
\\
&=S(A^\prime[J]R)_{\hat{\rho}}+S(BA^\prime[J^c])_{\hat{\rho}}-S(BA^\prime R)_{\hat{\rho}}
\\
&=S(A^\prime[J]|R)_{\hat{\rho}}+S(BA^\prime[J^c])_{\hat{\rho}}-S(BA^\prime |R)_{\hat{\rho}}
\\
&\ge S(A^\prime[J]|R)_{\hat{\rho}}+S(BA^\prime[J^c]|R)_{\hat{\rho}}-S(BA^\prime |R)_{\hat{\rho}}
\label{step:concave}
\\
&=I(A^\prime[J];BA^\prime[J^c]|R)_{\hat{\rho}},
\end{align} 
where in the \eqref{step:concave} we utilized concavity of entropy; \eqref{Ifinalstep3} follows from the quantum classical state Eq.~\eqref{QCstate}. The equality of \eqref{step:concave} holds if and only if
\be 
\hat{\rho}^1_{BA^\prime[J^c]}=\hat{\rho}^2_{BA^\prime[J^c]},
\ee 
which requires the EA $BA^\prime[J^c]$ to be in the same state.
Note that
\bal  
&[f_J(\calN,  {\hat\phi}_{A[J]}^1,{\hat\phi}_{A[J^c]})+f_J(\calN,  {\hat\phi}_{A[J]}^2,{\hat\phi}_{A[J^c]})]/2\\
&=\frac{1}{2}\sum_{\ell=1}^2I(A^\prime[J];BA^\prime[J^c])_{\hat{\rho}^\ell}\,,
\eal
we have proven the concavity
\bal 
&f_J(\calN, \overline{\hat\phi}_{A[J]}, \hat\phi_{A[J^c]})\\
&\ge[f_J(\calN,  \phi_{A[J]}^1,\phi_{A[J^c]})+f_J(\calN,  \phi_{A[J]}^2,\phi_{A[J^c]})]/2\,.
\eal
The equality holds requires a necessary condition that the $BA^\prime[J^c]$ is in the same state for the two inputs.

The same proof works for the concavity in $A[J^c]$. Now $RA'[J^c]$ purifies $A[J^c]$, 
\begin{align}  
I(A^\prime[J];BA^\prime[J^c]R)_{\hat{\bar{\rho}}}
&\ge  I(A^\prime[J];BA^\prime[J^c]R)_{\hat{\rho}}
\label{step:ineq:case2}
\\
&= I(A^\prime[J];BA^\prime[J^c]|R)_{\hat{\rho}}
\label{step:equal_indep}
\\ 
&=\frac{1}{2}\sum_{\ell=1}^2I(A^\prime[J];BA^\prime[J^c])_{\hat{\rho}^\ell}. 
\end{align}  
\eqref{step:equal_indep} holds because $R$ is independent with $A'[J]$.

Therefore we have proven the concavity in $A[J]$ and $A[J^c]$ individually.
\end{proof}

Recall the phase-insensitive condition in Eq.~\eqref{eq:phase_invariance} of the main paper. We denote the channel corresponding to the phase rotation as $\calR_{ \theta}^s\left(\hat{\rho}_{A}\right)=\hat{R}\left(\bm \theta\right) \hat{\rho}_{A} \hat{R}^\dagger\left(\bm \theta\right)$, with $\theta_k=(-1)^{\delta_k} \theta$. Similarly, on the output side we denote the channel as $\calR_\theta$. The symmetry of the channel in Eq.~\eqref{eq:phase_invariance} can be equivalently written as
\be 
\calR_{-\theta}\circ \calN \circ \calR_{\theta}^s=\calN.
\label{eq:phase_invariance_supp}
\ee

First we prove for the squeezing on $A[J]$.
Consider a bunch of single-mode squeezing operations $\otimes_{k\in J}S_{A_k}$, with squeezing parameter $r_k$ at the phase angle $\theta_k$, on modes in system $A[J]$ of the product thermal state ${\hat\phi}_{A}$, which produces 
\be 
{\hat\phi}^\prime_{A}=[\otimes_{k\in J}S_{A_k}(r_k,\theta_k)]\otimes \calI_{A[J^c]}({\hat\phi}_{}).
\ee  
For the consistency we define $r_k=0$ for $k\in J^c$. The mean photon number of the $k$th mode of the thermal state is constrainded by $\expval{\hat a^\dagger \hat a}=\cosh(2r_k) N_{S,k}+\sinh^2(r_k)$.
The functional of interest $I_J(\calN,{\hat\phi}_A)$ only depends on the reduced state ${\hat\phi}^\prime_{A}$, which is a product of $|J|$ single-mode squeezed thermal states and $|J^c|$ single-mode thermal states.

We generate the average state using $\calR_{\theta}^s$ with $\theta=0,\pi/2$, the phase rotation does not change $A[J^c]$
\bal
\overline{{\hat\phi}}^\prime_A&=\frac{1}{2}\left[{{\hat\phi}^\prime}_A+ \calR_{\pi/2}^s({{\hat\phi}^\prime}_A)\right]\\
&=\frac{1}{2}\left[{{\hat\phi}^\prime}_{A[J]}+ \calR_{\pi/2}^s({{\hat\phi}^\prime}_{A[J]})\right]\otimes {\hat\phi}^\prime_{A[J^c]}\,.
\label{eq:averagestate_app}
\eal
Here we have utilized the fact that ${\hat\phi}^\prime_{A[J^c]}$ is in a thermal state, invariant under any phase rotations.
For the two states with $\theta=0,\pi/2$, the covariance matrix is
\be 
V_{A}^\prime(\theta_k,\theta)=\oplus_{k=1}^s V_{A_k}^{\prime r_k}(\theta_k,\theta)
\ee
with
\begin{widetext}
\be
V_{A_k}^{\prime r_k}(\theta_k,\theta)=
\left(
\begin{array}{cc}
 \frac{1}{2} e^{-2 r_k} \left(2 e^{4 r_k} \cos ^2\left(\theta_k\right)+\cos \left(2 \left(\theta-\theta_k\right)\right)+1\right) \left(2 N_{S,k}+1\right) & \sin \left(2 \left(\theta-\theta_k\right)\right) \sinh \left(2 r_k\right) \left(2 N_{S,k}+1\right) \\
 \sin \left(2 \left(\theta-\theta_k\right)\right) \sinh \left(2 r_k\right) \left(2 N_{S,k}+1\right) & e^{-2 r_k} \left(\cos ^2\left(\theta_k\right)+e^{4 r_k} \sin ^2\left(\theta_k\right)\right) \left(2 N_{S,k}+1\right) \\
\end{array}
\right)
\ee
\end{widetext}
Indeed, the covariance matrix $\overline V_A$ of $\overline{\hat\phi}^\prime_{A}$ is block-diagonal, as an average of two block-diagonal CMs $V_A^{0}$, $V_A^{\pi/2}$. Explicitly, the CM of $\overline{\hat\phi}^\prime_{A[J]}$ is 
\bal
\overline V_{A}&=\oplus_{k=1}^s\overline V_{A_k}^{r_k}\,,\\
\overline V_{A_k}^{r_k}&=
\left(
\begin{array}{cc}
 \cosh \left(2 r_k\right) \left(2 N_{S,k}+1\right) & 0 \\
 0 & \cosh \left(2 r_k\right) \left(2 N_{S,k}+1\right) \\
\end{array}
\right)
\,.\\
\eal
We see that the squeezing phases $\theta_k$'s are irrelevant.
The phase rotation $\calR^s_{\pi/2}$ does not change energy, thus the average state still suffices the energy constraint
\be 
\Tr (\hat a^\dagger \hat a\, {\hat{\bar \phi}}^\prime_{A_k})=\cosh(2r_k) N_{S,k}+\sinh^2(r_k)\,.
\ee

By individual concavity in Lemma~\ref{lemma:fJconc}, 
\begin{align}
f_J(\calN,\overline{\hat\phi}^\prime_{A[J]},{\hat\phi}^\prime_{A[J^c]})&\ge f_J(\calN,{\hat\phi}^\prime_{A[J]},{\hat\phi}^\prime_{A[J^c]})
\nonumber
\\
&=I_J(\calN,\{{\hat\phi}^\prime_{A_i}\}_{i=1}^s)\,.
\label{fJ_step1}
\end{align}  
Note that the CM of $\overline {\hat\phi}_{A}^\prime$ is block-diagonal among the $s$ senders. Thus the Gaussian state associated with the same CM is an $s$-partite product state, which turns out to be $\hat\phi_A^{\rm TMSV}$, the reduced state of product TMSV $\otimes_{k=1}^s\ket{\zeta(\cosh (2r_k) N_{S,k})}_{A_kA^\prime_k}$. By Gaussian extremality in Lemma~\ref{lemma:newGaussopt}, we have
\bal 
f_J(\calN,\overline{\hat\phi}^\prime_{A[J]},{\hat\phi}^\prime_{A[J^c]})&\le f_J(\calN,\hat{\phi}_{A[J]}^{\rm TMSV},{\hat\phi}^\prime_{A[J^c]})\\
&=f_J(\calN,\hat{\phi}_{A[J]}^{\rm TMSV},\hat{\phi}_{A[J^c]}^{\rm TMSV})\,.
\label{fJ_step2}
\eal
Note that $\hat\phi_A^{\rm TMSV}$ is an $s$-partite product state
\be 
I_J(\calN,\{\hat{\phi}_{A_i}^{\rm TMSV}\}_{i=1}^s)=f_J(\calN,\hat{\phi}_{A[J]}^{\rm TMSV},\hat{\phi}_{A[J^c]})\,.
\label{fJ_step3}
\ee 
Combining Ineqs.~\eqref{fJ_step1}, \eqref{fJ_step2}, and~\eqref{fJ_step3}, we arrive at
\be 
I_J(\calN,\{\hat{\phi}_{A_i}^{\rm TMSV}\}_{i=1}^s) \ge I_J(\calN,\{{\hat\phi}^\prime_{A_i}\}_{i=1}^s)\,,
\ee 
which proves the result.\\

The same proof works for the squeezing on $A[J^c]$
\be 
{\hat\phi}^\prime_{AA^\prime}=[\otimes_{k\in J^c}S_{A_k}(r_k,\theta_k)]\otimes \calI_{A[J]}({\hat\phi}_{AA^\prime})\,.
\ee 
By similar procedure we have
\bal 
I_J(\calN,\{\hat{\phi}_{A_i}^{\rm TMSV}\}_{i=1}^s)&=f_J(\calN,\hat{\phi}_{A[J]}^{\rm TMSV},\hat{\phi}_{A[J^c]}^{\rm TMSV})\\
&\ge f_J(\calN,{\hat\phi}^\prime_{A[J]},\overline{\hat\phi}^\prime_{A[J^c]})\\
&\ge f_J(\calN,{\hat\phi}^\prime_{A[J]},{\hat\phi}^\prime_{A[J^c]})\\
&=I_J(\calN,\{{\hat\phi}^\prime_{A_i}\}_{i=1}^s)\,.
\label{eq:prop_gaugeinv_Jc_app}
\eal

\section{Unravelling the memory in interference MAC$\rm s$}
\label{sec:Methods_unravel}

\begin{figure}[t]
    \centering
    \includegraphics[width=0.5\textwidth]{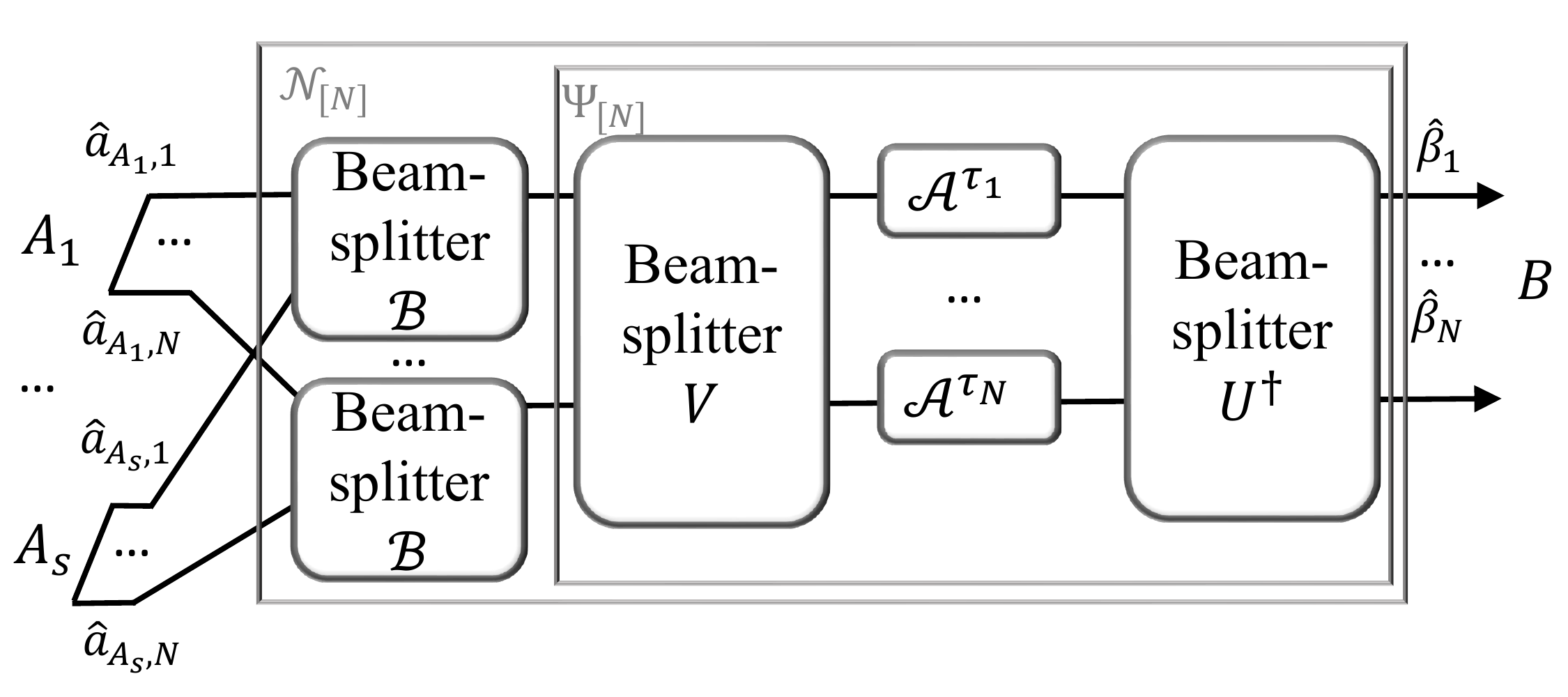}
    \caption{Memory unravelling of an $N$-fold memory interference MAC $\bm\calN_{[N]}=\bm\Psi_{[N]}\circ\calB^{\otimes N}$, which consists of $N$ uses of $s$-input-one-output beamsplitters $\calB$ concatenated with an $N$-fold memory BGC $\bm\Psi_{[N]}$. $\bm\Psi_{[N]}$ can be interpreted by a product of $N$ single-mode amplifiers $\otimes_{d=1}^N\calA^{\tau_d}$ (or lossy channel $\calL^{\tau_d}$ for $\tau_d\le 1$), sandwiched by two beamsplitter arrays fulfilling Bogoliubov transform $V$ and $U^\dagger$.}
    \label{fig:unravel_memory}
\end{figure}
We investigate the memory interference channel, which consists of $N$ parallel uses of $s$-input-one-output beamsplitter $\calB$ and the memory is invoked by an $N$-input-$N$-output memory BGC $\bm\Psi_{[N]}$, as shown in Fig.~\ref{fig:unravel_memory}. The general input-output relation of Eq.~\eqref{eq:BGMAC_phaseinsensitive_multimode}, when applied to the memory interference channel, can be denoted in a matrix form,
\be 
\bm\calN_{[N]}: \bm{\hat a}_B= W\bm{\hat a}_A+\bm{\hat \xi}\,,
\ee
where $\bm{\hat a}_A=\oplus_{k=1}^s[\hat a_{A_k,1},\ldots, \hat a_{A_k,N}]^T$ consists of $sN$ input modes, $\bm{\hat a}_B=[\hat a_B^1,\ldots, \hat a_B^N]^T$ consists of $N$ output modes and similarly $\bm{\hat \xi}$ denotes all the noise modes; the entries of the $N\times N$ transition matrix $W$ are defined by $W_{dh}=w_{d,h}$. The overall Bogoliubov transform of $\bm\calN_{[N]}$ consists of an $N\times sN$ 
beamsplitter matrix $F$ accounting for $\calB^{\otimes N}$ and an $N\times N$ transition matrix $W'$ accounting for $\bm\Psi_{[N]}$ as
$ 
W=W'F\,,
$
where
\be 
F=
\bp 
\sqrt{\eta_{1}} & 0 & 0 & \ldots &\sqrt{\eta_{s}} & 0 & 0\\
0 & \ddots & 0    & \ldots & 0 & \ddots & 0\\
0 & 0 & \sqrt{\eta_{1}} & \ldots & 0 & 0 & \sqrt{\eta_{s}}
\ep\,,
\ee
with the weights $\eta_k$ normalized as $\sum_{k=1}^s\eta_k=1$. 

It is known that an $N$-fold memory BGC $\bm\Psi_{[N]}$ can be unravelled into $N$ single-mode BGCs by the singular value decomposition (SVD) \cite{de2014classical}
\be 
UW'V^\dagger=D\,,
\ee 
where $U$, $V$ are $N\times N$ unitary matrices, $D={\rm diag}(\sqrt{\tau_1},\ldots, \sqrt{\tau_N})$. Concretely, the SVD unravels $\bm\Psi_{[N]}$ into a concatenation of an $N$-port beamsplitter fulfilling the Bogoliubov transform $V$, a combination of $N$ different single-mode BGCs fulfilling the diagonal matrix $D$, and finally an $N$-port beamsplitter fulfilling $U^\dagger$, as shown in Fig.~\ref{fig:unravel_memory}. By absorbing $V$, $U$ into the input and output modes, one can reduce $\bm\Psi_{[N]}$ to $N$ individual single-mode BGCs $\{\Psi_d\}_{d=1}^N$ that implements the diagonal matrix $D$. To absorb $V$ into the initial input modes $\bm{\hat a_A}$, note that $F$ commutes with $V$ up to an extension of dimension: 
\bal 
VF=FV_{\rm ext}\,, \mbox{ with }
V_{\rm ext}\equiv\oplus_{k=1}^s V=
\bp 
V & 0 & 0\\
0 & \ddots & 0 \\
0 & 0 &V
\ep\,.
\eal
Thus the overall Bogoliubov transform $W$ turns to $U^\dagger D F V_{\rm ext}$.
Formally, defined over the new modes $\bm{\hat\alpha}=V_{\rm ext}\bm{\hat a}_A$, $\bm{\hat\beta}=U\bm{\hat a}_B$, the overall channel $\bm\calN_{[N]}$ reduces to $N$ single-mode interference MACs $\{\calN_d\}_{d=1}^N$ mapping $\bm{\hat\alpha}$ to $\bm{\hat\beta}$ individually: 
\be 
\calN_d:\hat\beta_d=\sum_{d'=1}^{N}\sum_{h=1}^{sN} D_{dd'} F_{d'h}\hat\alpha_h+\sum_{d'=1}^N U_{dd'} \hat \xi_{d'}\,,1\le d\le N\,.
\label{eq:decomposedN_full}
\ee
It is easy to check that the modes $\bm{\hat\alpha}$ and $\bm{\hat\beta}$ satisfy the canonical commutation relation, and that the energy constraint
is automatically inherited to the new modes,
$
\sum_{n_k=1}^N\expval{\hat \alpha_{h(k,n_k)}^\dagger \hat\alpha_{h(k,n_k)}}=\sum_{n_k=1}^N\expval{\hat a_{A_k,n_k}^\dagger \hat a_{A_k,n_k}}\,.
$

Finally, we note that $\{\tau_d\}_{d=1}^N$ can be alternatively obtained from the $N\times N$ commutation matrix 
\be 
\Omega_{dd'}\equiv[\sum_{h=1}^{sN} F_{dh}\hat\alpha_h,\sum_{h=1}^{sN} F_{d'h}\hat\alpha_h^\dagger]=\sum_j W'_{dj}W_{d'j}^{\prime *}\,,
\ee
of which the eigenvalues are $\tau_1,\tau_2,\ldots,\tau_N$. For the causal memory thermal-loss channel Eq.~\eqref{eq:causalmemory_loss} parameterized by $\epsilon$ and $\gamma$, when the memory mode is inaccessible for either the sender or the receiver, there is a neat formula for the commutation matrix \cite{lupo2010capacities}
\be 
\Omega_{kk'}=\delta_{kk'}-(1-\gamma_{\min\{k,k'\}})\sqrt{\epsilon\gamma}^{|k-k'|}\,,
\label{eq:Omega_causalmemory}
\ee
where $\gamma_k\equiv \gamma+[1-(\epsilon\gamma)^{k-1}]{\epsilon(1-\gamma)^2}/{(1-\epsilon\gamma)}$.

\section{Proof of Proposition \ref{proposition:multi-mode}}
\label{proof_multi_mode}

Any multi-mode zero-mean Gaussian state can be completely characterized by its covariance matrix. We prove the optimality of correlated thermal state by proving that the rates are never improved by phase-sensitive correlation or single-mode squeezing.

We are interested in phase-insensitive memory BGMACs, whose information rate $I_J$ is preserved by the gauge transform $\calR_\theta^s$ in Eq.~\eqref{eq:phase_invariance_supp}. As it can be cancelled by applying $\calR_{-\theta}$ on each of the $N$ output modes in an $N$-fold BGMAC
\be
\calR_{-\theta}^{\otimes N}\circ \calN\circ \calR_\theta^s=\calN
\,.
\label{eq:symmetry_memory}
\ee

Phase-sensitivity can be characterized by the elements in the covariance matrix. Without loss of generality, we consider the $N_k$-mode covariance matrix for the $k$th sender, with respect to the annihilation operators,
\be 
V=
\bp 
\bm D_{1} &\ldots & \bm C_{1,N_k}\\
\ldots& \ldots& \ldots\\
\bm C_{1,N_k}^\dagger&\ldots & \bm D_{N_k}\\
\ep 
\ee
where the block matrices $\bm D_{i}, \bm C_{j,j'}$ are the $2\times 2$ covariance matrices of the $i$ mode, and the cross correlation matrix between the $j$th and the $j'$th mode respectively. With respect to the gauge transform $\calR_\theta^s$, each $\bm C_{j,j'}$ can be divided into the phase-sensitive correlation $c_s$ and the phase-insensitive correlation $c_i$
\be 
\bm C_{j,j'}=
\bp 
 c_{\rm i} &  c_{\rm s}\\
 c_{\rm s}* & c_{\rm i}^*
\ep\,.
\ee

Now consider an $N_k$-mode Gaussian state ${\hat\phi}^\prime_{A_k}$ for the $k$th sender, generated from a TMSV by two types of unitaries: (1) $N_k$ single-mode squeezers $S_{A_k}(\bm r_k,\bm \theta_k)=\otimes_{n_k=1}^{N_k}S_{A_k}^{n_k}(r_k^{n_k},\theta_k^{n_k})$ if $k\in J$; (2) phase-insensitive multi-mode Gaussian unitaries. The overall state is ${\hat\phi}^\prime_A=\otimes_{k=1}^s {\hat\phi}^\prime_{A_k}$. Denote the state satisfying the proposition as $\hat\phi_A^{\rm th}$, which is in a correlated thermal state per sender. Below we prove that the rate of $\hat\phi_A^{\rm th}$ outperforms any state ${\hat\phi}^\prime_A$.
Denote the average state as $\hat{\bar \phi}^\prime_A=\int d\theta/2\pi  \calR_{\theta}^s({\hat\phi}^\prime_A)$. 
The phase transformed state $R_\theta^s({\hat\phi}^\prime_A)$ has the covariance matrix $V_{A}^{\prime\bm r}(\bm \theta_1,\ldots,\bm \theta_s,\theta)=\oplus_{k=1}^s V_{A_k}^{\prime \bm r_k}(\bm \theta_k,\theta)$, with $r_k=0, \theta_k=0$ for $k\in J^c$, each block defined by
\be
V_{A_k}^{\prime \bm r_k}(\bm \theta_k,\theta)=
\bp 
\bm D_{1}^{r_k^{1}}(\theta_k^{1},\theta) &\ldots & \bm C_{1,N_k}\\
\ldots& \ldots& \ldots\\
\bm C_{1,N_k}^\dagger&\ldots & \bm D_{N_k}^{r_k^{N_k}}(\theta_k^{N_k},\theta)\\
\ep 
\,,
\ee
where
\begin{widetext}
\bal
&\bm D_{n_k}(\theta_k^{n_k},\theta)\\
&=\left(
\begin{array}{cc}
 \cosh \left(2 r_k^{n_k}\right) N_{S,k}^{n_k}+\cosh ^2\left(r_k^{n_k}\right)  & -e^{-2 i \theta } \cosh \left(r_k^{n_k}\right) \sinh \left(r_k^{n_k}\right) \left(2 N_{S,k}^{n_k}+1\right) \\
 -e^{2 i \theta } \cosh \left(r_k^{n_k}\right) \sinh \left(r_k^{n_k}\right) \left(2 N_{S,k}^{n_k}+1\right) & \cosh \left(2 r_k^{n_k}\right) N_{S,k}^{n_k}+\sinh ^2\left(r_k^{n_k}\right) \\
\end{array}
\right)\,,
\eal
under the energy constraint $\sum_{n_k} \cosh \left(2 r_k^{n_k}\right) N_{S,k}^{n_k}+\sinh ^2\left(r_k^{n_k}\right) =N_{S,k}$,
and 
\be 
\bm C_{n_k,n_k'}=
\left(
\begin{array}{cc}
 c_i &c_s e^{-2 i \theta } \\
 c_s^* e^{2 i \theta } &c_i^*\\
\end{array}
\right)\,,
\ee
with
\bal 
c_i&=e^{-2 i \theta_k^{n_k}} \sinh \left(r_k^{n_k'}\right) \left(\sinh \left(r_k^{n_k}\right) c_i-\cosh \left(r_k^{n_k}\right) c_s\right)+\cosh \left(r_k^{n_k'}\right) \left(\cosh \left(r_k^{n_k}\right) c_i-\sinh \left(r_k^{n_k}\right) c_s\right)\,,\\
c_s&= \cosh \left(r_k^{n_k'}\right) \left(\cosh \left(r_k^{n_k}\right) c_s-\sinh \left(r_k^{n_k}\right) c_i\right)-e^{2 i \theta _0} \sinh \left(r_k^{n_k'}\right) \left(\cosh \left(r_k^{n_k}\right) c_i-\sinh \left(r_k^{n_k}\right) c_s\right)\,.
\eal
Thus the average state has the covariance matrix with
\be 
\bar{ \bm D}_{n_k}(\theta_k^{n_k},\theta)=
\left(
\begin{array}{cc}
  \cosh \left(2 r_k^{n_k}\right) N_{S,k}^{n_k}+\cosh ^2\left(r_k^{n_k}\right)  & 0 \\
0 & \cosh \left(2 r_k^{n_k}\right) N_{S,k}^{n_k}+\sinh ^2\left(r_k^{n_k}\right) \\
\end{array}
\right)
\ee
\end{widetext}
and
\be 
\bar{\bm C}_{n_k,n_k'}=
\left(
\begin{array}{cc}
c_i & 0\\
0 & c_i^* \\
\end{array}
\right),
\ee
which is equal to the covariance matrix of an $n_k$-partite correlated thermal state $\hat\phi_{A_k}^{\rm th}$ under the same energy constraint. 

Note that $A[J^c]$ does not change under the gauge transform $R_\theta^s$. By a similar procedure as Ineqs.~\eqref{fJ_step1}\eqref{fJ_step2}\eqref{fJ_step3} in Appendix \ref{proof:prop_gaugeinv_local}, using the gauge symmetry of $f_J$ under $R_\theta^s$, the concavity of $f_J$ gives
\bal 
I_J(\calN,\{\hat{\phi}_{A_i}^{\rm th}\}_{i=1}^s)&=f_J(\calN,\hat{\phi}_{A[J]}^{\rm th},\hat{\phi}_{A[J^c]}^{\rm th})\\
&\ge f_J(\calN,\overline{\hat\phi}^\prime_{A[J]},{\hat\phi}^\prime_{A[J^c]})\\
&\ge f_J(\calN,{\hat\phi}^\prime_{A[J]},{\hat\phi}^\prime_{A[J^c]})\\
&=I_J(\calN,\{{\hat\phi}^\prime_{A_i}\}_{i=1}^s)\,.
\eal 

\section{Supplemental numerical evidences for the optimality of TMSV}
\label{app:supp_numerics}

In this section we provide more evidences to verify the optimality of TMSV for the rates $F_J$'s defined in Eq.~\eqref{eq:F_J}. We focus on the two-sender case, where we have the gauge symmetry on the global phase $\theta_1+\theta_2$. Thus only the relative phase $\theta=\theta_2-\theta_1$ matters. For the relative phase $\theta$, it is noteworthy that there is a degeneracy
\be 
F_J(r_1,r_2,\theta)=F_J(r_1,-r_2,\theta+\pi/2)\,,
\ee
since $S(r_2,\theta_2)=S(-r_2,\theta_2+\pi/2)$. Due to the degeneracy, it is sufficient to examine $\theta\in[0,\pi/2)$.

Fig.~\ref{fig:unionregion} shows the Gaussian-state capacity region of four channels of which $\Psi$'s cover all the four classes of the phase-insensitive BGC. We see that the EA Gaussian-state capacity region surpasses even the outer bound of the unassisted capacity region considerably. Fig.~\ref{fig:unionregion_Ns=0.001} shows more significant EA advantage under weaker illumination.

Fig.~\ref{fig:numeric_I_loss} shows the trend of the boundaries $F_J, J=\{1\},\{2\}$ when varying squeezing parameters $r_1,r_2$ for sender 1,2, and their relative phase $\theta$ for thermal-loss MAC. We also plotted the gradient vector $\bm\nabla_{\bm r} F_J(\bm r,\theta)$ in blue arrows. The gradient vanishes at $(0,0)$. We see that the optima coincide at the zero point, i.e. TMSV, in all the presented parameter settings. In addition, we have proven Theorem \ref{theorem: EA_MAC_main_TMSV} which indicates that TMSV is the optimum for the boundary $F_{\{1,2\}}$ of the total rate. Note that the capacity region is the convex hull of pentagons formed by the edges $R_1=F_{1}$, $R_2=F_{2}$, $R_1+R_2=F_{\{1,2\}}$ and $R_1=0$, $R_2=0$, and we have shown that all the pentagons are subsets of the TMSV region. Thus TMSV achieves the capacity region in this case.

Fig.~\ref{fig:numeric_I_amp},~\ref{fig:numeric_I_conj},~\ref{fig:numeric_I_AWGN} show the trends of the boundaries $F_{\{1\}},F_{\{2\}}$ for amplifier MAC, conjugate amplifier MAC and AWGN MAC. The layouts are similar to Fig.~\ref{fig:numeric_I_loss}. Again, the optima coincide at the zero point, where the gradient vanishes. Combining with Theorem \ref{theorem: EA_MAC_main_TMSV}, we obtain the optimality of TMSV in these cases.

It turns out that for each case we examined, the global optima of $F_J$ for all $J$ coincide at $\bm r=0$, i.e. the TMSV state. In these cases it concludes that the TMSV state achieves the capacity region.

\begin{figure*}[t]
    \centering
    \subfigure{
    \includegraphics[width=0.95\textwidth]{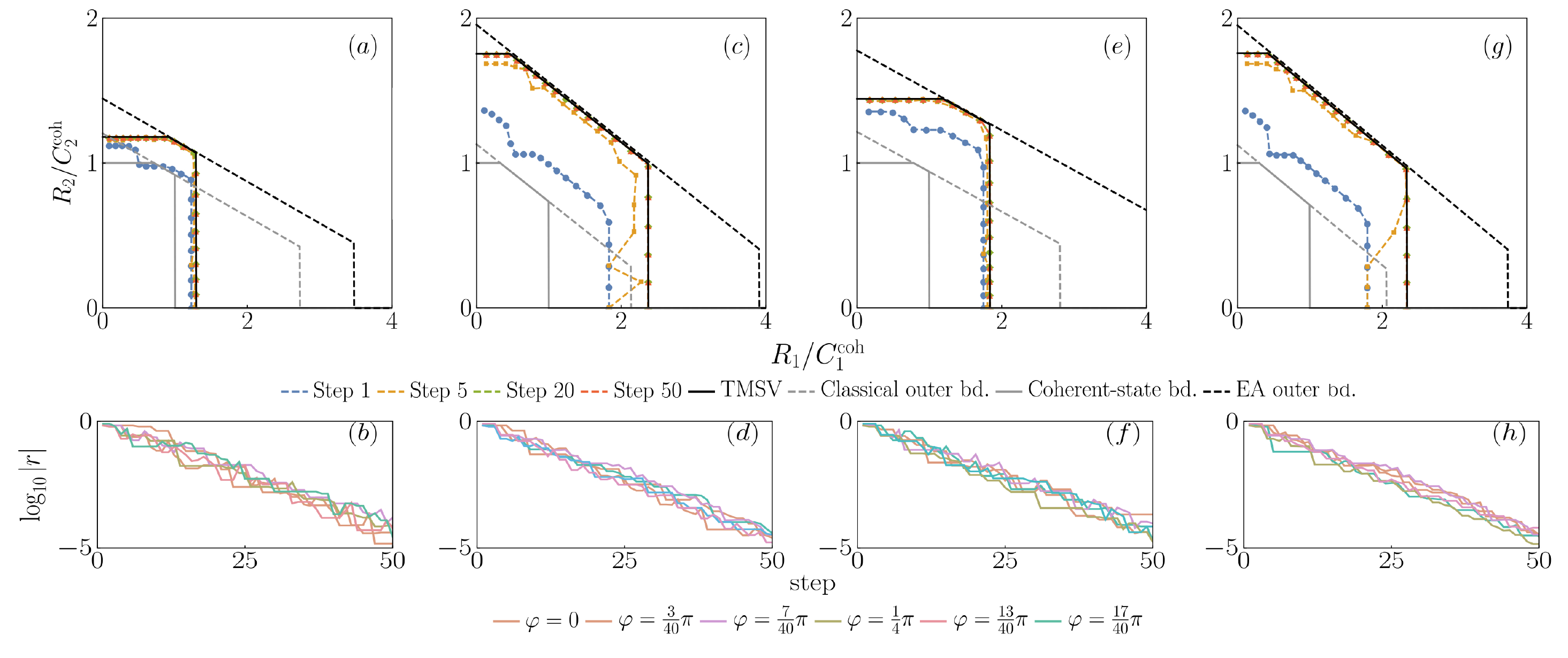}
    }
    \caption{Optimization of the Gaussian-state rate regions of two-sender interference BGMACs using bright illumination. Four classes of $\Psi$ are presented: (a)(b) Thermal-loss channel, $N_{{\rm S}, 1}=1,N_{{\rm S}, 2}=2,\eta_1=1/3,\eta_2=2/3, |w|^2=0.1, N_{\rm B}=0.1$; (c)(d) amplifier channel, $N_{{\rm S}, 1}=1,N_{{\rm S}, 2}=2,\eta_1=1/3,\eta_2=2/3, |\bm w|^2=1.1$,$N_{\rm B}=|\bm w|^2-1+0.1=0.2$; (e)(f) conjugate amplifier channel, $N_{{\rm S}, 1}=1,N_{{\rm S}, 2}=2,\eta_1=1/3,\eta_2=2/3, |\bm w|^2=0.1$,$N_{\rm B}=|\bm w|^2-1+0.1=0.2$; (g)(h) AWGN channel, $N_{{\rm S}, 1}=1,N_{{\rm S}, 2}=2,\eta_1=1/3,\eta_2=2/3, N_{\rm B}=0.1$. (a)(c)(e)(g) The evolution trend of the rate region. Colored according to the progress step of the numerical optimization. (b)(d)(f)(h) The evolution trend of $|r|=\sqrt{r_1^2+r_2^2}$ versus the progress step. Plotted with six typical $\varphi$'s, $\varphi=\tan^{-1} \left(\frac{R_2/C_2^{\rm coh}}{R_1/C_1^{\rm coh}}\right)$.  See Appendix~\ref{app:evalmethod} for the evaluation method.}
    \label{fig:unionregion}
\end{figure*}
\begin{figure*}[t]
    \centering
    \subfigure{
    \includegraphics[width=0.95\textwidth]{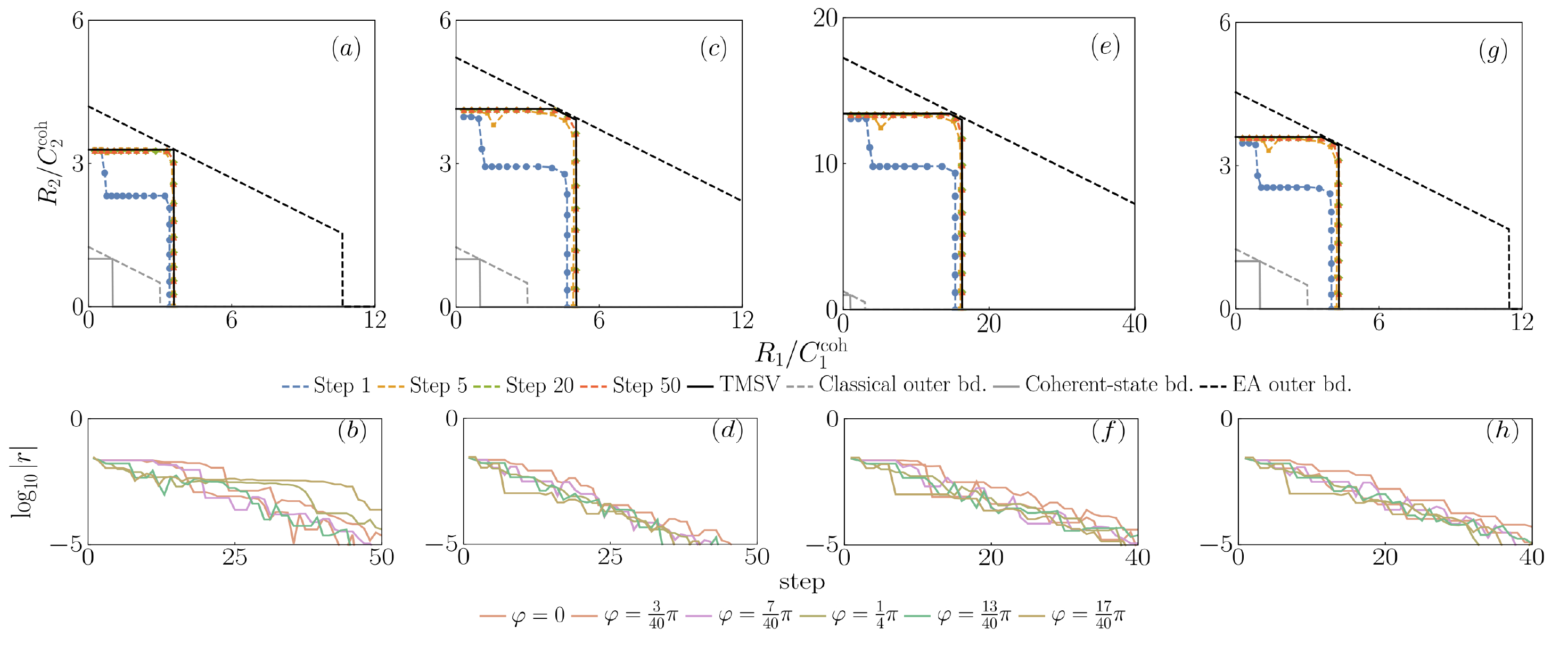}
    }
    \caption{Optimization of the Gaussian-state rate regions of two-sender interference BGMACs using weak illumination. Four classes of $\Psi$ are presented: (a)(b) Thermal-loss channel, $N_{{\rm S}, 1}=10^{-3},N_{{\rm S}, 2}=2\times 10^{-3},\eta_1=1/3,\eta_2=2/3, |w|^2=0.1, N_{\rm B}=0.1$; (c)(d) amplifier channel, $N_{{\rm S}, 1}=10^{-3},N_{{\rm S}, 2}=2\times 10^{-3},\eta_1=1/3,\eta_2=2/3, |\bm w|^2=1.1$,$N_{\rm B}=|\bm w|^2-1+0.1=0.2$; (e)(f) conjugate amplifier channel, $N_{{\rm S}, 1}=10^{-3},N_{{\rm S}, 2}=2\times 10^{-3},\eta_1=1/3,\eta_2=2/3, |\bm w|^2=0.1$, $N_{\rm B}=|\bm w|^2-1+0.1=0.2$; (g)(h) AWGN channel, $N_{{\rm S}, 1}=10^{-3},N_{{\rm S}, 2}=2\times 10^{-3},\eta_1=1/3,\eta_2=2/3, N_{\rm B}=0.1$. (a)(c)(e)(g) The evolution trend of the rate region. Colored according to the progress step of the numerical optimization. (b)(d)(f)(h) The evolution trend of $|r|=\sqrt{r_1^2+r_2^2}$ versus the progress step. Plotted with six typical $\varphi$'s, $\varphi=\tan^{-1} \left(\frac{R_2/C_2^{\rm coh}}{R_1/C_1^{\rm coh}}\right)$. See Appendix~\ref{app:evalmethod} for the evaluation method.  }
    \label{fig:unionregion_Ns=0.001}
\end{figure*}

\begin{figure*}[tbp]
    \centering
    \includegraphics[width=0.95\textwidth]{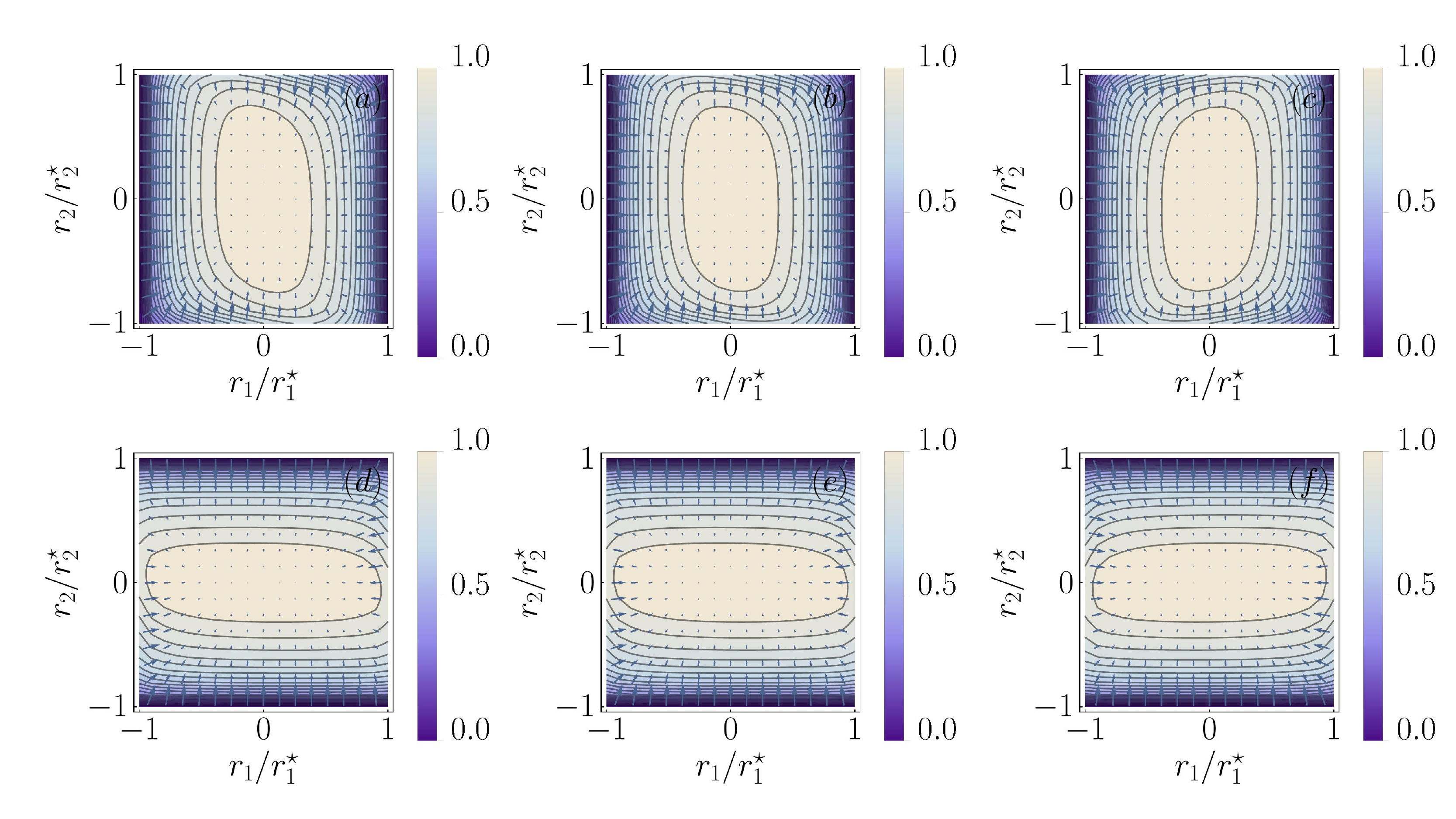}
    \caption{Two-sender EA capacities of thermal-loss BGMAC. Top: $J=\{1\}$, (a-c) $\theta=0, \pi/8$, and $3\pi/8$ respectively; bottom: $J=\{2\}$, (d-f) $\theta=0, \pi/8$, and $3\pi/8$ respectively. Normalized $I_J$ under different relative squeezer phase $\theta$ with respect to squeeze parameters $r_1$, $r_2$. $N_{{\rm S}, 1}=1,N_{{\rm S}, 2}=2,\eta_1=1/3,\eta_2=2/3, |w|^2=0.1, N_{\rm B}=0.1$. See Appendix \ref{app:evalmethod} for the definition of $r_1^\star, r_2^\star$.   \label{fig:numeric_I_loss}}

    \centering
     \includegraphics[width=0.95\textwidth]{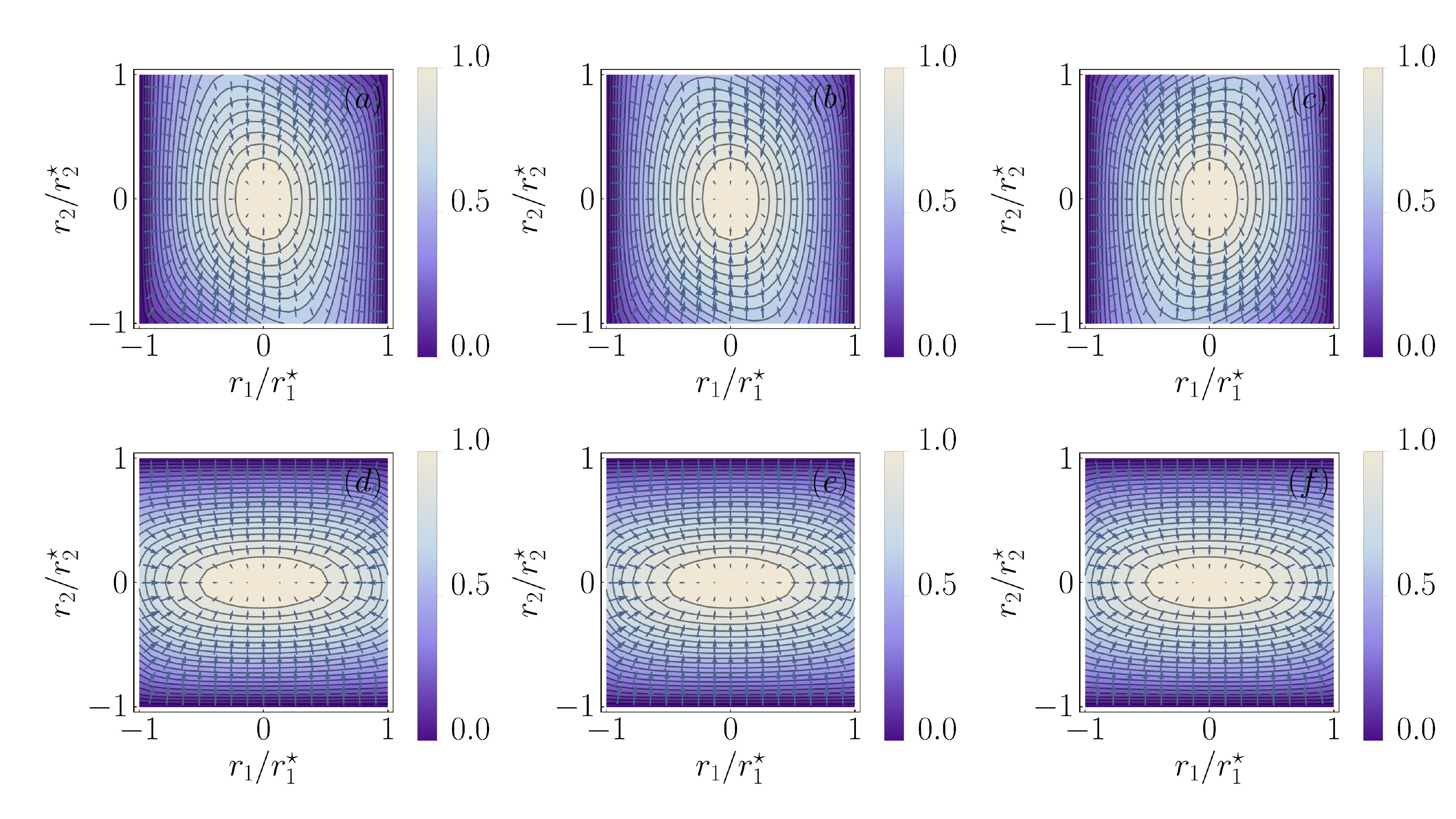}
    \caption{Two-sender EA capacities of amplifier BGMAC. Top: $J=\{1\}$, (a-c) $\theta=0, \pi/8$, and $3\pi/8$ respectively; bottom: $J=\{2\}$, (d-f) $\theta=0, \pi/8$, and $3\pi/8$ respectively. Normalized $I_J$ under different relative squeezer phase $\theta$ with respect to squeeze parameters $r_1$, $r_2$. $N_{{\rm S}, 1}=1,N_{{\rm S}, 2}=2,\eta_1=1/3,\eta_2=2/3, |\bm w|^2=1.1, N_{\rm B}=|\bm w|^2-1+0.1=0.2$. See Appendix \ref{app:evalmethod} for the definition of $r_1^\star, r_2^\star$.    \label{fig:numeric_I_amp}}
\end{figure*}
\begin{figure*}
    \centering
    \includegraphics[width=0.95\textwidth]{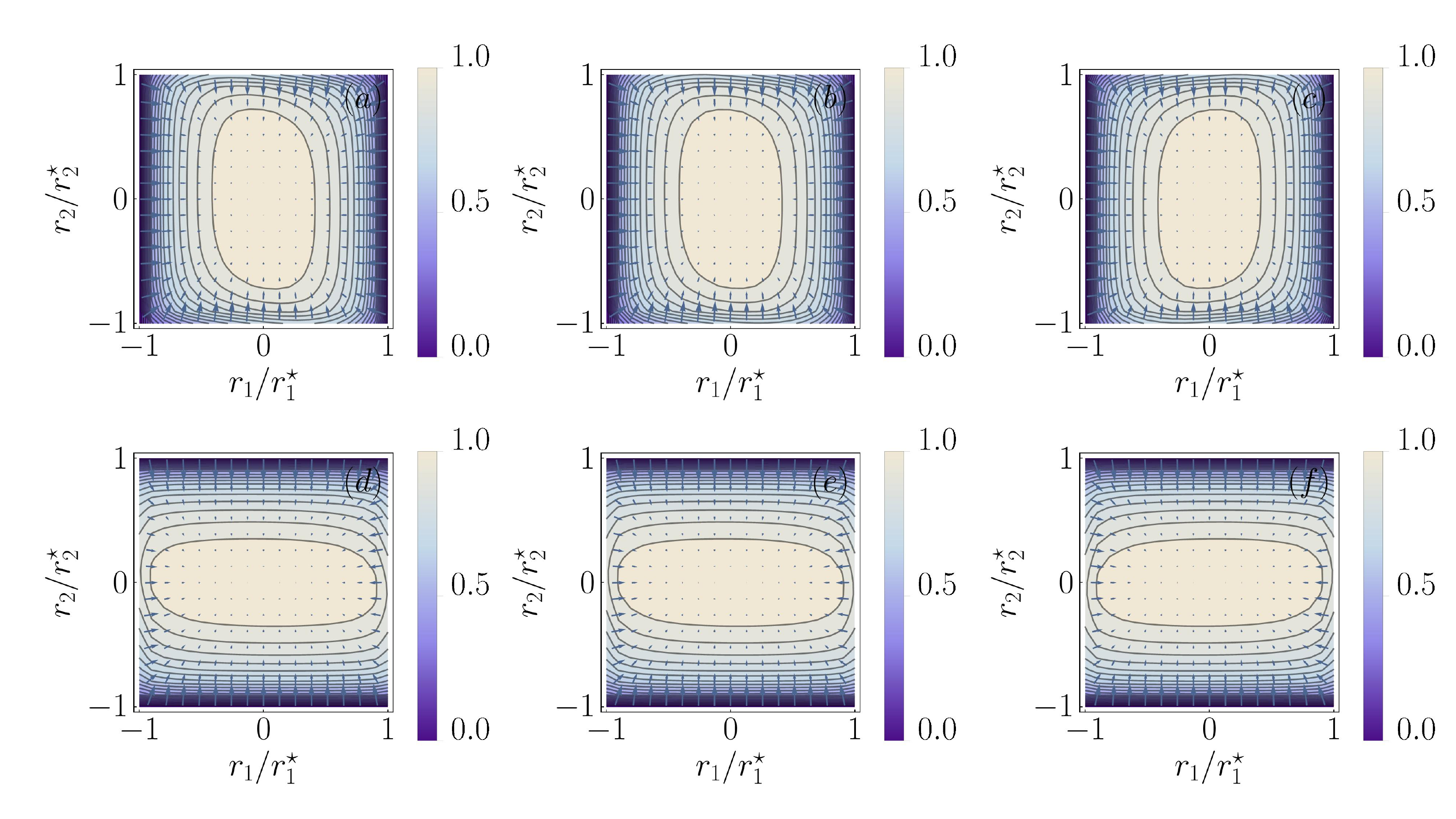}
    \caption{Two-sender EA capacities of contravariant amplifier BGMAC. Top: $J=\{1\}$, (a-c) $\theta=0, \pi/8$, and $3\pi/8$ respectively; bottom: $J=\{2\}$, (d-f) $\theta=0, \pi/8$, and $3\pi/8$ respectively. Normalized $I_J$ under different relative squeezer phase $\theta$ with respect to squeeze parameters $r_1$, $r_2$. $N_{{\rm S}, 1}=1,N_{{\rm S}, 2}=2,\eta_1=1/3,\eta_2=2/3, |\bm w|^2=0.1, N_{\rm B}=|\bm w|^2-1+0.1=0.2$. See Appendix \ref{app:evalmethod} for the definition of $r_1^\star, r_2^\star$.     \label{fig:numeric_I_conj}}
     \centering
     \includegraphics[width=0.95\textwidth]{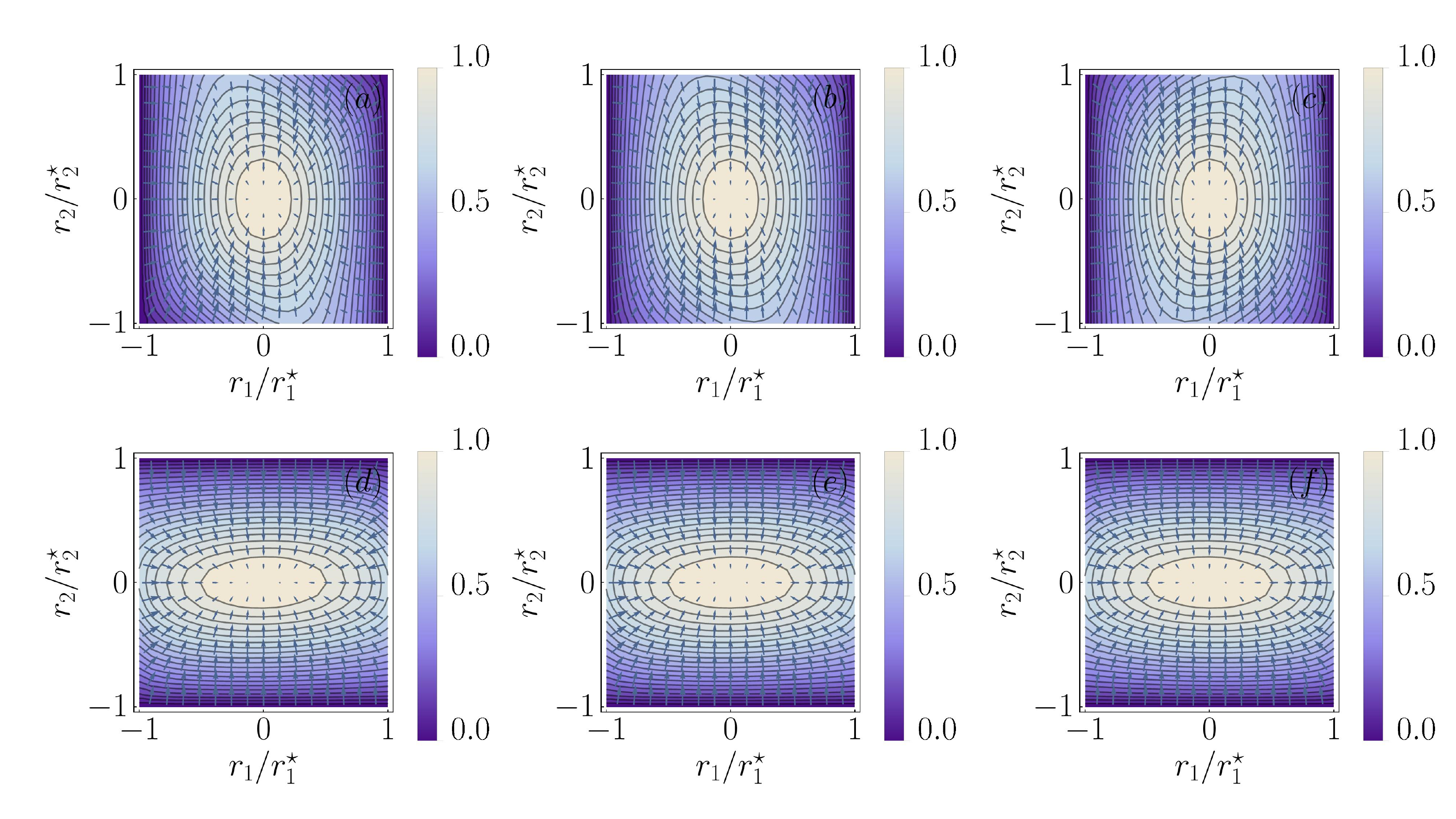}
    \caption{Two-sender EA capacities of AWGN BGMAC. Top: $J=\{1\}$, (a-c) $\theta=0, \pi/8$, and $3\pi/8$ respectively; bottom: $J=\{2\}$, (d-f) $\theta=0, \pi/8$, and $3\pi/8$ respectively. Normalized $I_J$ under different relative squeezer phase $\theta$ with respect to squeeze parameters $r_1$, $r_2$. $N_{{\rm S}, 1}=1,N_{{\rm S}, 2}=2,\eta_1=1/3,\eta_2=2/3, N_{\rm B}=0.1$. See Appendix \ref{app:evalmethod} for the definition of $r_1^\star, r_2^\star$.     \label{fig:numeric_I_AWGN}}
\end{figure*}

\end{document}